\newcommand{\qbin}[3]{\genfrac{[}{]}{0pt}{0}{#1}{#2}_{#3}}
\newcommand{\pqbin}[4]{\genfrac{[}{]}{0pt}{0}{#1}{#2}_{#3,#4}}
\newcommand{\qpoch}[3]{\left(#1;\,#2\right)_{#3}}
\newcommand{\floor}[1]{\left\lfloor #1 \right\rfloor}
\newcommand{\pqpr}[4]{\mathbb{P}_{#3,#4}\left(#1,\,#2\right)}
\newcommand{\pr}[1]{\mathbb{P}\left(#1\right)}
\newcommand{\pqsum}[3]{\Psi_{#2,#3}\left(#1\right)}
\newcommand{\pqratio}[5]{R_{#4,#5}\left(#1,\,#2,\,#3\right)}
\newcommand{\qratio}[4]{R_{#4}\left(#1,\,#2,\,#3\right)}
\newcommand{\ratio}[3]{R\left(#1,\,#2,\,#3\right)}
\newcommand{\dd}{\mathrm{d}}
\newcommand{\gammaf}[1]{\Gamma\!\left(\scriptstyle #1\right)}
\newcommand{\besselk}[2]{K_{#1}\left(\scriptstyle #2\right)}
\newcommand{\besseli}[2]{I_{#1}\left(\scriptstyle #2\right)}
\newcommand{\dilog}[1]{\mathrm{Li}_2\left(\scriptstyle #1\right)}
\newcommand{\lambert}[1]{\mathrm{W}\left(#1\right)}
\newcommand{\Lambert}[2]{\mathrm{\Omega}_{#2}\left(#1\right)}
\newcommand{\erf}[1]{\mathrm{erf}\left(#1 \right)}
\newcommand{\fc}[1]{\mathrm{\Phi}\!\left(#1\right)}
\newcommand{\mean}[1]{\left\langle #1 \right\rangle}
\newcommand{\abs}[1]{\left\lvert #1 \right\rvert}
\newcommand{\var}[1]{\mathrm{Var}\left(#1\right)}
\newcommand{\xmom}[1]{\varrho_{#1}}
\newcommand{\kmom}[1]{\sigma_{#1}}
\newcommand{\hyper}[3]{{}_1\mathrm{F}_1\!\left(\scriptstyle #1,\,#2,\,#3\right)}
\newcommand{\hyperu}[3]{\mathrm{U}\!\left(\scriptstyle #1,\,#2,\,#3\right)}
\newcommand{\bigo}[1]{\mathcal{O}\left(#1\right)}
\newcommand{\fz}{\mathrm{Z}}
\newcommand{\fZ}{\mathcal{Z}}
\newcommand{\fF}{\mathcal{F}}
\newcommand{\amu}{\bar{\mu}}
\newcommand{\achi}{\bar{\chi}}
\theoremstyle{plain}
\newtheorem{theorem}{Theorem}[section]
\newtheorem{lemma}[theorem]{Lemma}
\DeclareMathOperator{\atanh}{atanh}
\newenvironment{myalgorithm}[1]%
               {\\[2ex]\noindent\textbf{Algorithm} #1\par}%
               {\par\noindent\ignorespacesafterend}
\begin{document}
\title{On the $p,q$-binomial distribution\\
  and the Ising model} 

\date{\today}
\author{P. H. Lundow} 
\email{phl@kth.se} 
\author{A. Rosengren}
\email{roseng@kth.se} 
\affiliation{
  Condensed Matter Theory, Department of Theoretical Physics,\\
  AlbaNova University Center, KTH, SE-106 91 Stockholm, Sweden }

\begin{abstract}
  A completely new approach to the Ising model in 1 to 5 dimensions is
  developed.  We employ $p,q$-binomial coefficients, a generalisation
  of the binomial coefficients, to describe the magnetisation
  distributions of the Ising model. For the complete graph this
  distribution corresponds exactly to the limit case $p=q$. We take
  our investigation to the simple $d$-dimensional lattices for
  $d=1,2,3,4,5$ and fit $p,q$-binomial distributions to our data, some
  of which are exact but most are sampled.  For $d=1$ and $d=5$ the
  magnetisation distributions are remarkably well-fitted by
  $p,q$-binomial distributions. For $d=4$ we are only slightly less
  successful, while for $d=2,3$ we see some deviations (with
  exceptions!)  between the $p,q$-binomial and the Ising
  distribution. We begin the paper by giving results on the behaviour
  of the $p,q$-distribution and its moment growth exponents given a
  certain parameterization of $p,q$.  Since the moment exponents are
  known for the Ising model (or at least approximately for $d=3$) we
  can predict how $p,q$ should behave and compare this to our measured
  $p,q$. The results speak in favour of the $p,q$-binomial
  distribution's correctness regarding their general behaviour in
  comparison to the Ising model. The full extent to which they
  correctly model the Ising distribution is not settled though. 
\end{abstract}

\maketitle

\section{Introduction}
Choose a graph, e.g. a square lattice, on $n$ vertices and compute its
Ising partition function $\fZ$, keeping track of its terms according
to their magnetisation, so that $\fZ=\fZ_0 + \fZ_1+\cdots +\fZ_n$. The
quotient $\fZ_k/\fZ$ is the probability of having $k$ negative spins,
or magnetisation $M=n-2\,k$. The controlling parameter of the
partition function is the temperature, that is, for any given
temperature the partition function provides us with a distribution of
magnetisations. At infinite temperature this is simply the binomial
distribution. At zero temperature, on the other hand, we receive a
distribution with two peaks, one at $k=0$ and one at $k=n$, both with
50\% of the probability mass. The distribution is always
symmetrical. What happens between these two extreme temperatures?
Though there are exceptions to the rule, for most graphs the
distribution begins its life at high temperatures as a unimodal
distribution with the peak at the middle $k=n/2$. As we lower the
temperature the distribution gets increasingly wider until we reach a
temperature where the distribution changes from unimodal to
bimodal. Near this temperature, slightly above and slightly below, the
distribution is particularly wide.  Lowering the temperature even
further the distribution develops two sharp peaks, both essentially
gaussian, and the peaks move outwards.

This article tries to model the distributions using $p,q$-binomial
coefficients.  In one case, the complete graph, they model exactly the
distributions and for $d$-dimensional lattices the dimension seems to
determine how well they fit to the Ising distributions. Especially in
the case of $d=1$ and $d=5$ the Ising distributions are particularly
well-fitted by the $p,q$-binomial coefficients.

The paper begins in section~\ref{sec:definitions} by providing the
necessary basic tools, such as the Pochhammer and $q$-Pochhammer
symbol, $q$-binomial coefficients and finally the $p,q$-binomial
coefficients. Some nice results on their properties are also stated,
even though they are of no direct use to us in the rest of the
paper. They are merely intended to give the reader a feel of how
$p,q$-binomials behave. In section~\ref{sec:pqdist} we define the
$p,q$-binomial distribution and give an algorithm for finding values
of $p$ and $q$ when the distribution is given as input.  The problem
is to determine an optimal choice for $p$ and $q$. As it turns out we
only have to focus on the value of the probability and the location of
the distribution's peaks, at least for a bimodal distribution. The
unimodal distribution always has its peak at the middle so in this
case we instead take the quotient between the two middle probabilities
as controlling parameter. This quotient is unfortunately rather
sensitive to noise, making it difficult to determine the parameters
$p$ and $q$ for sampled data.

Section~\ref{sec:pequalq} gives detailed results for the special case
when $p=q$. This case corresponds exactly to the complete graph and is
the only case where we can give asymptotically exact expressions for
the sum of the coefficients.  In section~\ref{sec:pochhammer} we
provide some useful tools for working with $p,q$-binomial coefficients
in the case when $p\ne q$. After this build-up of tools we are, at
long last, ready to give some general results on the distribution of
$p,q$-binomial coefficients in section~\ref{sec:isocurves}.  Using the
parameterization $p=1+y/n$ and $q=1+z/n$ we find the asymptotic value
of $y$, given $z$, where the distribution is flat in the middle. We
also allow for a small change in $y$, using a higher order parameter
$a$, so that we can follow properly how the distribution changes from
unimodal to bimodal.  However, the computations that we rely on
involve some rather complicated series expansions that were made using
Mathematica. These are much too long to fit into this paper. We have
prepared a simplified Mathematica notebook that performs all the
necessary computations. The interested reader can obtain it by
contacting the first author.

Section~\ref{sec:slope} looks into the case of moving $y$ and $z$
along a line with any given slope as opposed to the previous section
where $z$ stays fixed.  In section~\ref{sec:moments} we give exact
scaling formulae for the moments of the distributions depending on the
parameters $a$ and $z$.  For a given moment of these distributions we
always obtain the same exponent on $n$, regardless of $a$ and $z$. In
section~\ref{sec:exponents} we try to remedy this by letting the
previously fixed parameter $z$ depend ever so slightly (at most
logarithmically) on $n$. This is based on the assumption that the
previous formulae in section~\ref{sec:isocurves} still hold.  However,
we can now adjust the exponent of $n$ though this comes at the cost of
an extremely slow convergence.

Section~\ref{sec:ising} defines the Ising model, laying the ground for
studying distributions of magnetisations, the intended application of
our endeavour. In section~\ref{sec:lattices} we apply our tools to the
$d$-dimensional lattice graphs for $d=1,2,3,4,5$ fitting
$p,q$-distributions to simulated distributions and comparing them.

A condensed reading, more suitable to the reader who is pressed for
time, should include a look at \eqref{qbin}, \eqref{pqbin},
\eqref{pqsum}, \eqref{pqpr}, \eqref{pqratio}, \eqref{qubound},
\eqref{qlbound}, \eqref{pqratio2} for the necessary definitions and
results concerning the basics. After that, the most important results
are stated in equations \eqref{pqratio3}, \eqref{pqratio5},
\eqref{kmomm}, \eqref{logmom2} and \eqref{loglogmom2}.  After looking
up the basic definitions regarding the Ising model the reader can skip
to \eqref{zkn}. In Section~\ref{sec:lattices} the reader can now pick
and choose his favourite lattice and look at the pictures.

\section{Definitions, notations, the very basics}\label{sec:definitions}
The $q$-binomial coefficient
\begin{equation}\label{qbin}
  \qbin{n}{k}{q} = 
  \prod_{i=1}^k \frac{1 - q^{n-i+1}}{1 - q^i}, 
  \quad \textrm{$q \ne 1$,  $0\le k\le n$}
\end{equation} 
is a natural extension of the standard binomial coefficient
\begin{equation}
  \binom{n}{k} = \frac{n!}{k!\,(n-k)!},\quad 0\le k\le n
\end{equation}
The Pochhammer symbol (or shifted factorial) is defined as
\begin{equation} 
  (a)_n = \prod_{i=0}^{n-1} (a+i)
\end{equation}
so that $(1)_n=n!$.  Its $q$-deformed relative, the $q$-Pochhammer
symbol, is defined as
\begin{equation} 
  \qpoch{a}{q}{n} = 
  \prod_{i=0}^{n-1} \left(1-a\,q^i\right), \quad n\ge 0
\end{equation}
The $q$-binomial coefficient can then be expressed as
\begin{equation}\label{qbin2}
  \qbin{n}{k}{q} =
  \frac{\qpoch{q}{q}{n}}{\qpoch{q}{q}{k}\,\qpoch{q}{q}{n-k}} =
  \frac{\qpoch{q^{n-k+1}}{q}{k}}{\qpoch{q}{q}{k}}
\end{equation}
The $q$-numbers are defined for any real number $a$ as
\begin{equation}\label{qnum}
  [a]_q = \frac{1-q^a}{1-q}, \quad q\ne 1
\end{equation} 
and it is easy to show that 
\begin{equation}
  \lim_{q\to 1} [a]_q = a 
\end{equation}
Note that for integers $n\ge 1$ we have
\begin{equation}\label{qnum1}
  [n]_q = \frac{1-q^n}{1-q} = 1 + q + \cdots + q^{n-1}, \quad q\ne 1
\end{equation}
so that $[n]_q \to n$ when $q\to 1$.
To continue, the $q$-number factorials are defined as
\begin{equation}
  [n]_q! = \prod_{k=1}^n [k]_q
\end{equation}
and then obviously 
\begin{equation} 
  \lim_{q\to 1} [n]_q! = n!
\end{equation} 
The $q$-binomial can now be defined in an alternative way as
\begin{equation}
  \qbin{n}{k}{q} =
  \frac{[n]_q!}{[k]_q!\,[n-k]_q!}
\end{equation}
and then it follows that 
\begin{equation} 
  \lim_{q\to 1} \qbin{n}{k}{q} = \binom{n}{k}
\end{equation} 
Quite analogously it is easy to verify that
\begin{equation}
  \lim_{q\to 1}\frac{\qpoch{q^a}{q}{n}}{(1-q)^n} = (a)_n 
\end{equation} 
Finally, note also that $\qbin{n}{k}{q}$ can be viewed as a formal
polynomial in $q$ of degree $k\,(n-k)$ where the coefficient of $q^j$
counts the number of $k$-subsets of $\{1,\ldots,n\}$ with element sum
$j+k\,(k+1)/2$. It is thus a polynomial with positive coefficients.

We have so far only stated what belongs to the standard repertoire on
the subject.  The $q$-binomials and the $q$-Pochhammer function have
many interesting properties and we point the interested reader to the
books \cite{gasper:04}, \cite{andrews:76}, \cite{andrews:99} and
especially the charming little book \cite{andrews:86}. For more on the
standard binomial coefficient and Pochhammer function we recommend
\cite{graham:94} which contains a wealth of useful information.

A natural extension of the $q$-binomial coefficient, the
$p,q$-binomial coefficient, was defined in \cite{corcino:08} as
\begin{equation}\label{pqbin}
  \pqbin{n}{k}{p}{q} = 
  \prod_{i=1}^k \frac{p^{n-i+1} - q^{n-i+1}}{p^i - q^i}, 
  \quad\textrm{$p \ne q$,\, $0\le k\le n$}
\end{equation}
Clearly, in the case $p=1$ this reduces
to a $q$-binomial coefficient. Also, note that $p$ and $q$ are
interchangable so that
\begin{equation}
  \pqbin{n}{k}{p}{q} = \pqbin{n}{k}{q}{p}
\end{equation}
Just as the standard binomial coefficients, their $p,q$-analogues are
also symmetric
\begin{equation}
  \pqbin{n}{k}{p}{q} = \pqbin{n}{n-k}{p}{q}
\end{equation}

It is an easy exercise to show the following identity and we leave this
to the reader.
\begin{equation}\label{pqvq}
  \pqbin{n}{k}{p}{q} = 
  p^{k\,(n-k)}\,\qbin{n}{k}{q/p} = 
  q^{k\,(n-k)}\,\qbin{n}{k}{p/q}
\end{equation}
As a corollary it follows that
\begin{equation}\label{pqlimit}
  \lim_{p,q\to r} \pqbin{n}{k}{p}{q} = r^{k\,(n-k)}\,\binom{n}{k} 
\end{equation}

Any identity involving $q$-binomial coefficients can then be extended
to a $p,q$-binomial identity by first replacing $q$ with $q/p$ and
then use the identity \eqref{pqvq}.  For example, the binomial theorem
\begin{equation}
  \left( 1 + x \right)^n = \sum_{k=0}^n \binom{n}{k}\,x^k
\end{equation}
has the $q$-analogue 
\begin{equation}\label{eq5}
  \prod_{\ell=0}^{n-1} \left(1+x\,q^{\ell}\right) = 
  \sum_{k=0}^n \qbin{n}{k}{q}\,q^{\binom{k}{2}}\,x^k
\end{equation}
Now replace $q$ with $q/p$ and then use the identity
\begin{equation}
  \qbin{n}{k}{q/p} = p^{-k\,(n-k)}\,\pqbin{n}{k}{p}{q}
\end{equation}
from \eqref{pqvq} above.  This gives us
\begin{equation}
  \prod_{\ell=0}^{n-1} \left(1+x\,(q/p)^{\ell}\right) = 
  \sum_{k=0}^n \pqbin{n}{k}{p}{q}\,p^{-k\,(n-k)}\,(q/p)^{\binom{k}{2}}\,x^k
\end{equation}
After multiplying both sides with $p^{\binom{n}{2}}$ this simplifies
into
\begin{equation}\label{eq10}
  \prod_{\ell=0}^{n-1} \left(p^{\ell}+x\,q^{\ell}\right) = 
  \sum_{k=0}^n \pqbin{n}{k}{p}{q}\,p^{\binom{n-k}{2}}\,q^{\binom{k}{2}}\,x^k
\end{equation}
which was also shown in \cite{corcino:08} using a recursion technique.

As another application we consider the Chu-Vandermonde identity
\begin{equation}\label{vdm}
  \binom{m+n}{k} = \sum_{\ell=0}^k \binom{m}{k-l}\,\binom{n}{\ell}
\end{equation}
which is a fairly direct consequence of the binomial theorem applied
to the product $(1+x)^m\,(1+x)^n$. The $q$-Vandermonde identity, see
e.g. \cite{gasper:04} and (though misprinted) \cite{corcino:08}, can
be stated as
\begin{equation}\label{qvdm}
  \qbin{m+n}{k}{q} = 
  \sum_{\ell=0}^k \qbin{m}{k-\ell}{q}\,\qbin{n}{\ell}{q}\,q^{\ell\,(m-k+\ell)}
\end{equation}
Using \eqref{pqvq} above we can now obtain a $p,q$-analog of this.
\begin{theorem}
  \begin{equation}\label{pqvdm}
    \pqbin{m+n}{k}{p}{q} = 
    \sum_{\ell=0}^k \pqbin{m}{k-\ell}{p}{q}\,\pqbin{n}{\ell}{p}{q}\,
    p^{(n-\ell)\,(k-\ell)}\,q^{\ell\,(m-k+\ell)}
  \end{equation}
\end{theorem}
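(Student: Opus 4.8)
The plan is to follow exactly the recipe advertised in the surrounding text: start from the $q$-Vandermonde identity \eqref{qvdm}, perform the substitution $q \mapsto q/p$, and then translate every $q$-binomial coefficient back into a $p,q$-binomial coefficient using \eqref{pqvq}. First I would rewrite \eqref{pqvq} in the form actually needed here, namely the inverse relation $\qbin{N}{K}{q/p} = p^{-K\,(N-K)}\,\pqbin{N}{K}{p}{q}$, valid for any admissible $N$ and $K$.

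Next I would apply $q \mapsto q/p$ to \eqref{qvdm}, obtaining $\qbin{m+n}{k}{q/p} = \sum_{\ell=0}^k \qbin{m}{k-\ell}{q/p}\,\qbin{n}{\ell}{q/p}\,(q/p)^{\ell\,(m-k+\ell)}$, and then convert each of the three $q$-binomials. On the left I take $N=m+n$, $K=k$; the factor $\qbin{m}{k-\ell}{q/p}$ uses $N=m$, $K=k-\ell$ (so that $N-K=m-k+\ell$), while $\qbin{n}{\ell}{q/p}$ uses $N=n$, $K=\ell$. This introduces the prefactors $p^{-k\,(m+n-k)}$, $p^{-(k-\ell)\,(m-k+\ell)}$ and $p^{-\ell\,(n-\ell)}$ respectively.

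The heart of the argument is then pure bookkeeping of the exponents. After clearing the left-hand prefactor by multiplying through by $p^{k\,(m+n-k)}$, and splitting $(q/p)^{\ell\,(m-k+\ell)} = q^{\ell\,(m-k+\ell)}\,p^{-\ell\,(m-k+\ell)}$, the power of $q$ already matches the claimed $q^{\ell\,(m-k+\ell)}$. What remains is to check that the accumulated power of $p$ collapses to the desired $p^{(n-\ell)\,(k-\ell)}$, that is, to verify the polynomial identity
\[
  k\,(m+n-k) - (k-\ell)\,(m-k+\ell) - \ell\,(n-\ell) - \ell\,(m-k+\ell) = (n-\ell)\,(k-\ell).
\]

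I expect this exponent identity to be the only genuine obstacle, and it is merely an exercise in elementary algebra rather than a conceptual difficulty: expanding both sides and cancelling the $km$, $k^2$ and $\ell m$ contributions leaves $kn - k\ell - \ell n + \ell^2$ on each side. Once this is confirmed, each summand matches \eqref{pqvdm} term by term and the theorem follows.
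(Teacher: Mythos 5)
Your proposal is correct and follows essentially the same route as the paper's own proof: substitute $q\mapsto q/p$ in the $q$-Vandermonde identity, multiply through by $p^{k\,(m+n-k)}$, convert each $q$-binomial via \eqref{pqvq}, and check that the powers of $p$ collapse to $p^{(n-\ell)\,(k-\ell)}$. Your exponent identity is the same bookkeeping the paper performs (it regroups the exponent by factoring rather than expanding, but the content is identical), and it does check out.
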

\begin{proof}
  In the $q$-Vandermonde identity \eqref{qvdm}, replace $q$ with $q/p$
  and multiply both sides with $p^{k\,(m+n-k)}$. Using \eqref{pqvq}
  the left hand side is now a pure $p,q$-binomial coefficient. The
  $\ell$th term of the right hand side is
  \begin{gather*}
    \qbin{m}{k-\ell}{q/p}\,\qbin{n}{\ell}{q/p}\,
    p^{k\,(m+n-k)}\,(q/p)^{\ell\,(m-k+\ell)} = \\
    \qbin{m}{k-\ell}{q/p}\,\qbin{n}{\ell}{q/p}\,
    p^{k\,(m+n-k)-\ell\,(m-k+\ell)}\,q^{\ell\,(m-k+\ell)} = \\
    p^{(k-\ell)\,(m-(k-\ell))}\,\qbin{m}{k-\ell}{q/p}\,
    p^{\ell\,(n-\ell)}\,\qbin{n}{\ell}{q/p}\,
    p^{(n-\ell)\,(k-\ell)}\,q^{\ell\,(m-k+\ell)} = \\
    \pqbin{m}{k-\ell}{p}{q}\,\pqbin{n}{\ell}{p}{q}\,
    p^{(n-\ell)\,(k-\ell)}\,q^{\ell\,(m-k+\ell)}
  \end{gather*}
  and the theorem follows.
\end{proof}

The $q$-binomial coefficients have been shown to form a log-concave
(and thus unimodal) sequence for $q\ge 0$, see e.g. \cite{butler:90}
and \cite{krattenthaler:89}. However, for the $p,q$-binomial
coefficients this does not always hold. Rewriting them as a product
like in \eqref{pqvq} we have in fact a product of two sequences; that
of $p^{k\,(n-k)}$ and $\qbin{n}{k}{q/p}$ for $k=0,\ldots,n$. The first
sequence is log-concave for $p\ge 1$ and log-convex for $p\le 1$. It
is well-known that the element-wise product of two log-concave
positive sequences is also log-concave. So, if $p\ge 1$ and $q \ge 0$
then the sequence of $\pqbin{n}{k}{p}{q}$ is log-concave.

We conjecture that for $p,q>0$ the sequence can be either unimodal,
with the maximum at $k=\floor{n/2}$, or bimodal, with the maxima at
$k$ and $n-k$ for some $0\le k\le n/2$, but not trimodal etc. We will
assume this to be true in this paper but a formal proof is still
lacking. Note that if we allow negative values of $p$ the sequence can
have a local maximum at every alternate index $k$.

We write $f(n)\sim g(n)$ to denote that $f(n)/g(n)\to 1$ as
$n\to\infty$.  Analogously $f(n)\propto g(n)$ denotes that
$f(n)/g(n)\to A$, for some non-zero real number $A$, as $n\to\infty$.

\section{The $p,q$-binomial distribution}\label{sec:pqdist}
Now we are ready to introduce the notation
\begin{equation}\label{pqsum}
  \pqsum{n}{p}{q} = \sum_{k=0}^n \pqbin{n}{k}{p}{q}
\end{equation}
and define the $p,q$-binomial probability function
\begin{equation}\label{pqpr}
  \pqpr{n}{k}{p}{q} = 
  \frac{
    \pqbin{n}{k}{p}{q}
  }{
    \pqsum{n}{p}{q}
  }
\end{equation}
The reader should here observe that the sum of the coefficients has,
to the best of our knowledge, no simpler expression in the general
case. Neither do the sum of the $q$-binomial coefficients have a
simpler expression that we are aware of.  Compare this with the case
of the standard binomial coefficients for which the sum is simply
$2^n$.

Having made our assumption of unimodality/bimodality we can now set up
a simple computational scheme to find values of $p$ given $q$.  First
we need to define a highly useful quantity; the ratio between two
coefficients
\begin{equation}\label{pqratio}
  \pqratio{n}{k}{\ell}{p}{q} = 
  \frac{\pqbin{n}{k-\ell}{p}{q}}{\pqbin{n}{k}{p}{q}},
  \quad 0\le\ell\le k\le n/2
\end{equation}
In the special case when $\ell=1$ we are looking at two
consecutive coefficients. The ratio then becomes
\begin{equation}\label{pqratio1}
  \begin{split}
    \pqratio{n}{k}{1}{p}{q} = 
    \frac{\pqbin{n}{k-1}{p}{q}}{\pqbin{n}{k}{p}{q}} &= \\
    \frac{p^k - q^k}{p^{n-k+1} - q^{n-k+1}} &= \\
    p^{-(n-2\,k+1)}\,\frac{1-\left(q/p\right)^k}{1-\left(q/p\right)^{n-k+1}}&=\\
    q^{-(n-2\,k+1)}\,\frac{1-\left(p/q\right)^k}{1-\left(p/q\right)^{n-k+1}}
  \end{split}
\end{equation}
\begin{lemma}
  Let $k,n$ be positive integers such that $1\le k \le n-k$.
  If $x>1$ then
  \[0 < \frac{1-x^k}{1-x^{n-k+1}} < \frac{k}{n-k+1}\]
  and if $0<x<1$  then
  \[\frac{k}{n-k+1} < \frac{1-x^k}{1-x^{n-k+1}} < 1 \]
\end{lemma}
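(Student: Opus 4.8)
The plan is to reduce the two-sided estimate to a comparison of geometric sums. Writing $m = n-k+1$, the hypothesis $1 \le k \le n-k$ gives $m \ge k+1$, so $1 \le k < m$. Cancelling the common factor $1-x$ via the geometric-sum identity underlying \eqref{qnum1}, I would first rewrite
\[
  \frac{1-x^{k}}{1-x^{m}} = \frac{N(x)}{D(x)}, \qquad
  N(x) = \sum_{i=0}^{k-1} x^{i}, \qquad D(x) = \sum_{j=0}^{m-1} x^{j},
\]
valid for $x \ne 1$. Two of the four required bounds are then immediate. For any $x>0$ both sums are strictly positive, so $N(x)/D(x) > 0$, which is the lower bound in the case $x>1$. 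And since $m>k$, the difference $D(x)-N(x) = \sum_{j=k}^{m-1} x^{j}$ is strictly positive for $x>0$, whence $N(x)/D(x) < 1$ for every positive $x$; this is the upper bound in the case $0<x<1$.

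The heart of the matter is the comparison of $N(x)/D(x)$ with the boundary value $k/m = N(1)/D(1)$. Since $D(x)>0$, this amounts to determining the sign of $H(x) := m\,N(x) - k\,D(x)$. The key step is the rearrangement
\[
  H(x) = m\sum_{i=0}^{k-1} x^{i} - k\sum_{j=0}^{m-1} x^{j}
       = \sum_{i=0}^{k-1}\sum_{j=k}^{m-1}\left(x^{i}-x^{j}\right),
\]
which follows by a direct count: in the double sum each low power $x^{i}$ ($0\le i\le k-1$) occurs $m-k$ times and each high power $x^{j}$ ($k\le j\le m-1$) occurs $k$ times, matching the left-hand side after splitting $D(x)$ at the index $k$. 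In every summand the exponents satisfy $i\le k-1 < k\le j$, hence $i<j$. Therefore, for $x>1$ each term $x^{i}-x^{j}$ is negative, so $H(x)<0$ and $N(x)/D(x) < k/m$; while for $0<x<1$ each term is positive, so $H(x)>0$ and $N(x)/D(x) > k/m$. Assembling these with the two bounds from the first paragraph gives both asserted chains of inequalities.

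The step I expect to require the most thought is spotting the double-sum form of $H(x)$: once the exponents in each summand are seen to obey $i<j$, the sign in both regimes is forced and no delicate estimation is needed. A monotonicity argument---showing $x\mapsto N(x)/D(x)$ is increasing, say by examining the sign of its derivative---would also work, but the term-by-term sign comparison above is cleaner and sidesteps differentiating a quotient of polynomials.
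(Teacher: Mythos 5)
Your proof is correct. It shares the same first reduction as the paper's --- cancelling the factor $1-x$ so that the quotient becomes a ratio of geometric sums $N(x)/D(x)$ with $N(x)=1+x+\cdots+x^{k-1}$ and $D(x)=1+x+\cdots+x^{n-k}$, from which the trivial bounds $0<N/D<1$ follow at once --- but the comparison with $k/(n-k+1)$ is carried out differently. The paper introduces the auxiliary quantity
\[
y=\frac{x^{k}+x^{k+1}+\cdots+x^{n-k}}{1+x+\cdots+x^{k-1}},
\qquad \frac{N(x)}{D(x)}=\frac{1}{1+y},
\]
and bounds $y$ by replacing every numerator term by $x^{k}$ and every denominator term by $x^{k-1}$, giving $y\ge\frac{n-2k+1}{k}\,x$ for $x>1$ (and the reverse for $0<x<1$). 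You instead cross-multiply and write $m\,N(x)-k\,D(x)$ (with $m=n-k+1$) as the double sum $\sum_{i=0}^{k-1}\sum_{j=k}^{m-1}(x^{i}-x^{j})$, in which every summand has exponents $i<j$ and hence a definite sign determined by whether $x>1$ or $x<1$. Both arguments are elementary and of comparable length; yours avoids introducing $y$ and makes the strictness of the inequality visible term by term, while the paper's crude block bound has the minor advantage of producing the explicit quantitative estimate $y\ge\frac{n-2k+1}{k}\,x$, which is slightly more information than the sign alone. Your bookkeeping for the double sum (each $x^{i}$ counted $m-k$ times, each $x^{j}$ counted $k$ times) checks out, so the identity $H(x)=m\,N(x)-k\,D(x)$ is verified and the proof is complete.
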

\begin{proof}
  Obviously we have
  \[0 < \frac{1-x^k}{1-x^{n-k+1}} < 1 \]
  for all $x>0$ so we proceed to the other inequalities instead.
  Define 
  \[
  y = \frac{x^k+x^{k+1}+\cdots+x^{n-k}}{1+x+\cdots+x^{k-1}}
  \]
  so that
  \[
  \frac{1}{1+y} =
  \frac{1+x+\cdots+x^{k-1}}{1+x+\cdots+x^{n-k}} = 
  \frac{1-x^k}{1-x^{n-k+1}}
  \]
  If $x>1$ then
  \[
  y \ge \frac{x^k+x^k+\cdots+x^k}{x^{k-1}+x^{k-1}+\cdots+x^{k-1}} =
  \frac{(n-2\,k+1)\,x^k}{k\,x^{k-1}} = 
  \frac{n-2\,k+1}{k}\,x >
  \frac{n-2\,k+1}{k}
  \]
  and then
  \[\frac{1}{1+y} < \frac{k}{n-k+1}\]
  If $0<x<1$, then completely analogously
  \[
  y \le \frac{n-2\,k+1}{k}\,x < \frac{n-2\,k+1}{k}
  \]
  and thus
  \[\frac{1}{1+y} > \frac{k}{n-k+1}\]
  and the lemma follows.
\end{proof}

\begin{theorem}
  Let $k$ and $n$ be positive integers such that $k \le n-k$.
  For $p>q>0$ we have
  \[
  0 < 
  \frac{\pqbin{n}{k-1}{p}{q}}{\pqbin{n}{k}{p}{q}} < 
  q^{-(n-2\,k+1)}\,\frac{k}{n-k+1}
  \]
  while for $0<p<q$ we have
  \[
  q^{-(n-2\,k+1)}\,\frac{k}{n-k+1} <
  \frac{\pqbin{n}{k-1}{p}{q}}{\pqbin{n}{k}{p}{q}} < 
  q^{-(n-2\,k+1)}
  \]
\end{theorem}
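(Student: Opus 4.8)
The plan is to read off the closed form for the ratio of consecutive coefficients from \eqref{pqratio1} and then apply the preceding Lemma with the single substitution $x = p/q$. The last line of \eqref{pqratio1} tells us that
\[
\frac{\pqbin{n}{k-1}{p}{q}}{\pqbin{n}{k}{p}{q}} = q^{-(n-2\,k+1)}\,\frac{1-(p/q)^k}{1-(p/q)^{n-k+1}},
\]
so the entire problem collapses to bounding the factor $\frac{1-x^k}{1-x^{n-k+1}}$ for $x = p/q > 0$ and then multiplying through by the constant $q^{-(n-2\,k+1)}$. First I would observe that the hypothesis $1\le k\le n-k$ required by the Lemma is exactly the hypothesis of the theorem, since $k$ being a positive integer already forces $k\ge 1$.

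Next I would split into the two cases according to the sign of $p-q$. For $p>q>0$ we have $x=p/q>1$, so the first branch of the Lemma gives $0 < \frac{1-x^k}{1-x^{n-k+1}} < \frac{k}{n-k+1}$; multiplying by the positive factor $q^{-(n-2\,k+1)}$ preserves both inequalities and produces precisely the upper bound $q^{-(n-2\,k+1)}\,\frac{k}{n-k+1}$ together with positivity on the left. For $0<p<q$ we instead have $0<x=p/q<1$, so the second branch of the Lemma gives $\frac{k}{n-k+1} < \frac{1-x^k}{1-x^{n-k+1}} < 1$; multiplying again by $q^{-(n-2\,k+1)}>0$ yields the two-sided bound in the statement.

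Since all the analytic work has already been discharged in the Lemma, there is no genuine obstacle left; the only subtlety to keep in mind is that the multiplier $q^{-(n-2\,k+1)}$ is strictly positive because $q>0$, so the direction of every inequality is preserved rather than reversed. I would note in passing that the exponent $n-2\,k+1$ can be negative when $k$ is close to $n/2$, but this affects only the magnitude of the multiplier and not its sign, so the argument goes through unchanged in that regime as well.
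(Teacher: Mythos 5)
Your proof is correct and is exactly the argument the paper intends: the theorem is stated without proof immediately after the Lemma, and it follows just as you describe by taking the last line of \eqref{pqratio1}, substituting $x=p/q$ into the Lemma's two branches, and multiplying through by the positive factor $q^{-(n-2\,k+1)}$. One tiny aside: under the hypothesis $k\le n-k$ the exponent $n-2\,k+1$ is always at least $1$ and so can never be negative, but as you yourself note, only the positivity of the multiplier matters, so this does not affect anything.
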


Given $q$ and a coefficient ratio $r$ at coefficient $k$ we can now
find the correct $p$ through a simple iteration scheme: If
\[0<r<q^{-(n-2\,k+1)}\frac{k}{n-k+1}\]
then $p>q>0$ and we use
\begin{equation}\label{pqiter}
  p \leftarrow \left( q^k - r\,q^{n-k+1} + r\,p^{n-k+1}\right)^{1/k} 
\end{equation}
which is obtained from setting \eqref{pqratio1} to $r$.  As start
value of $p$ we may use a number slightly larger than $q$.  If
\[q^{-(n-2\,k+1)}\frac{k}{n-k+1} < r < q^{-(n-2\,k+1)} \]
then $0<p<q$ and we use
\begin{equation}
  p \leftarrow 
  \left(\frac{p^k}{r}+q^{n-k+1}-\frac{q^k}{r}\right)^{\frac{1}{n-k+1}} 
\end{equation}
and use $0$ as the starting value for $p$.  To prove that these
iteration schemes actually converge one would have to show that their
derivatives with respect to $p$ is at most $1$ using the start value.
Since we have no such proof we will leave it at that and just claim
that they are practical.

Given a $p,q$-binomial distribution, or a distribution that we wish to
approximate by a $p,q$-binomial distribution, can we find the pair
$p,q>0$ that generated it such that the distribution of $p,q$-binomial
coefficients have the correct probability and ratio $r$ at coefficient
$k$?  An iteration method again solves this problem practically under
the assumption $p>q$. Suppose this input distribution has the
probabilitites $\pr{0},\pr{1}\,\ldots,\pr{n}$. Let $k$, $\pr{k}$,
$r=\pr{k-1}/\pr{k}$ and an $\epsilon$ be given as input parameters.
\begin{myalgorithm}{$p,q$-Find}
\begin{enumerate}
  \item Assign $q_{\min}\leftarrow 0$ and $q_{\max} \leftarrow
    \left(\frac{k}{r\,(n-k+1)}\right)^{\frac{1}{n-2\,k+1}}$.
  \item $q\leftarrow(q_{\min} + q_{\max})/2$
  \item Compute the corresponding $p$ as in the method above.
  \item If $\pqpr{n}{k}{p}{q}<\pr{k}$ then $q_{\min}\leftarrow q$,
    otherwise $q_{\max}\leftarrow q$.
  \item If $q_{\max}-q_{\min} < \epsilon$ then exit loop, otherwise
    jump to step 2.
\end{enumerate}
\end{myalgorithm}
This method seems to work best when $\pr{k}$ is one of the maximum
probabilities.  However, if the distribution is unimodal so that the
maximum probability is at $k=n/2$, then the scheme will depend heavily
on the quality of $r$. On the other hand, if the distribution is
bimodal then this problem goes away and we may simply set $r=1$,
unless $n$ is too small.

It is implied, though we do not have a proof, that increasing $q$
while keeping $k$ and $r$ fixed also increases the probability
$\pqpr{n}{k}{p}{q}$. It actually increases until $p=q$ which then
constitutes an interesting limit case, which we will deal with in
section~\ref{sec:pequalq}.

\section{The special case $p=q$}\label{sec:pequalq}
We will extend the definition of the $p,q$-binomial coefficients in
\eqref{pqbin} to include also the limiting case when $p=q$ as in
\eqref{pqlimit}. Thus we will define
\begin{equation}\label{pqbin2}
  \pqbin{n}{k}{q}{q} = q^{k\,(n-k)}\,\binom{n}{k}
\end{equation}
Let us look particularly at the point $q$ where the coefficient ratio
is $1$ at $k=n/2$, i.e. $\pqratio{n}{n/2}{1}{q}{q} = 1$. Also, henceforth
we will assume that $n$ is even to simplify some calculations.
\begin{lemma}\label{lem:pqstar}
  For $q=\frac{n}{n+2}$ we have $\pqratio{n}{n/2}{1}{q}{q} = 1$.
\end{lemma}
\begin{proof}
  In the case when $p=q$ the ratio is 
  \[\pqratio{n}{k}{1}{q}{q} = q^{-(n-2\,k+1)}\,\frac{k}{n-k+1} \]
  and for $k=n/2$ we have
  \[\pqratio{n}{n/2}{1}{q}{q} = q^{-1}\,\frac{n}{n+2}\]
  Setting this to $1$ gives the lemma.
\end{proof}
What is the sum of the coefficients at this point? To answer this we
compare the middle coefficient with a coefficient situated at some
carefully chosen distance from the middle.  How big is the middle
coefficient? Note first that
\begin{equation}
  \pqbin{n}{n/2}{q}{q} = q^{\frac{n^2}{4}}\,\binom{n}{n/2}
\end{equation}
\begin{lemma}\label{lem:pqmid0}
  For $q=\frac{n}{n+2}$ we have 
  \[
  \pqbin{n}{n/2}{q}{q} \sim
  \sqrt{\frac{2\,e}{\pi\,n}}\,\left(\frac{2}{\sqrt{e}}\right)^n
  \]
\end{lemma}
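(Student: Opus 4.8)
The plan is to start from the explicit form of the middle coefficient that the text has just recorded, namely $\pqbin{n}{n/2}{q}{q} = q^{n^2/4}\binom{n}{n/2}$, and to estimate its two factors separately before multiplying the resulting asymptotic equivalences. The central binomial coefficient is handled by the standard Stirling estimate $\binom{n}{n/2}\sim 2^n\sqrt{2/(\pi n)}$, which already supplies both the factor $2^n$ that will sit inside the bracket and the algebraic prefactor $\sqrt{2/(\pi n)}$. The entire remaining task then reduces to pinning down the asymptotics of the power $q^{n^2/4}$ at $q=n/(n+2)$.

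For that second factor I would write $q = (1+2/n)^{-1}$ and pass to logarithms, so that $\tfrac{n^2}{4}\ln q = -\tfrac{n^2}{4}\ln(1+2/n)$. Expanding $\ln(1+2/n) = 2/n - 2/n^2 + O(1/n^3)$ and multiplying by $n^2/4$ gives $\tfrac{n^2}{4}\ln q = -n/2 + 1/2 + O(1/n)$. Exponentiating, and observing that the $O(1/n)$ remainder tends to $0$, yields $q^{n^2/4}\sim e^{1/2}e^{-n/2} = \sqrt{e}\,(1/\sqrt{e})^n$.

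Multiplying the two estimates gives $q^{n^2/4}\binom{n}{n/2} \sim \sqrt{e}\,e^{-n/2}\cdot 2^n\sqrt{2/(\pi n)}$, and regrouping $2^n e^{-n/2} = (2/\sqrt{e})^n$ together with $\sqrt{e}\cdot\sqrt{2/(\pi n)} = \sqrt{2e/(\pi n)}$ produces exactly the claimed expression. Because each factor differs from its target by a multiplicative $1+o(1)$, the product of the two asymptotic equivalences is again an asymptotic equivalence, so the symbol $\sim$ (ratio tending to $1$) is legitimate and no further argument is needed.

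The only genuine subtlety—and the point that deserves care—lies in the bookkeeping of the logarithmic expansion. The leading term $-n/2$ produces the exponential $e^{-n/2}$, but it is the \emph{second-order} term, contributing the constant $+1/2$, that generates the crucial prefactor $\sqrt{e}$; a naive first-order estimate $q^{n^2/4}\approx e^{-n/2}$ would silently drop this factor. One must therefore keep the expansion to second order and simultaneously verify that every remaining term is $O(1/n)$, so that these genuinely vanish in the limit and the stated equivalence holds rather than mere proportionality.
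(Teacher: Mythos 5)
Your proof is correct and is essentially the argument the paper intends: the paper leaves the proof to the reader as ``an easy application'' of the identity \eqref{expid}, which is precisely your second-order expansion of $\frac{n^2}{4}\log q$ giving $q^{n^2/4}\sim\sqrt{e}\,e^{-n/2}$, combined with Stirling's estimate for the central binomial coefficient. Your remark that the constant $+1/2$ from the second-order term is what produces the prefactor $\sqrt{e}$ is exactly the point the paper's reference to \eqref{expid} is meant to flag.
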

meaning that the quotient between the left- and right-hand side goes
to $1$ as $n\to\infty$. The proof follows from an easy application of
the identity
\begin{equation}\label{expid}
  \left(1+\frac{x}{n}\right)^n = e^x\,\left(1-\frac{x^2}{2\,n}+
  \frac{x^3}{3\,n^2}+\frac{x^4}{8\,n^2}+\cdots\right)
\end{equation}
and we leave it to the reader. A somewhat more involved application of
\eqref{expid} is the following lemma
\begin{lemma}\label{lem:pqkvot0}
  Let $x$ be some real number.
  For $q=\frac{n}{n+2}$ we have
  \[
  \pqratio{n}{n/2}{x\,n^{3/4}}{q}{q} \sim
  \exp\left(-\frac{4}{3}\,x^4\right)
  \]
\end{lemma}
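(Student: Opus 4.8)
The plan is to strip the ratio down to an ordinary binomial ratio times a single power of $q$, take logarithms, and then expand, keeping careful track of which orders in $n$ survive once we set $\ell=x\,n^{3/4}$ (understood as $\floor{x\,n^{3/4}}$, whose integer part perturbs the answer only by $\littleo{1}$).

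First, using the extended definition \eqref{pqbin2} together with \eqref{pqratio}, the exponent of $q$ collapses because $(n/2-\ell)(n/2+\ell)-(n/2)^2=-\ell^2$, so
\[ \pqratio{n}{n/2}{\ell}{q}{q} = q^{-\ell^2}\,\frac{\binom{n}{n/2-\ell}}{\binom{n}{n/2}} = q^{-\ell^2}\prod_{j=0}^{\ell-1}\frac{n/2-j}{n/2+j+1}. \]
With $q=n/(n+2)$ this gives
\[ \log\pqratio{n}{n/2}{\ell}{q}{q} = \ell^2\,\log\frac{n+2}{n} + \sum_{j=0}^{\ell-1}\log\frac{n/2-j}{n/2+j+1}. \]

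The key simplification is to note that each factor of the product is odd under a reflection: setting $c=(n+1)/2$ and $t_j=j+\tfrac12$ we have $\tfrac{n/2-j}{n/2+j+1}=\tfrac{c-t_j}{c+t_j}$, so only odd powers survive,
\[ \log\frac{c-t_j}{c+t_j} = -2\sum_{r\ge0}\frac{1}{2r+1}\left(\frac{t_j}{c}\right)^{2r+1}. \]
Summing over $j$ and using $\sum_{j=0}^{\ell-1}t_j^{2r+1}\sim \ell^{2r+2}/(2r+2)$ together with $c\sim n/2$, the $r$th term of the sum is of order $x^{2r+2}\,n^{(1-r)/2}$. Hence only $r=0$ (order $n^{1/2}$, divergent) and $r=1$ (order $n^0$) can matter, while every $r\ge2$ contributes $\littleo{1}$. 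It then remains to treat $r=0,1$. The $r=0$ term equals $-2\ell^2/(n+1)$ exactly; expanding both it and $\ell^2\log\frac{n+2}{n}$ in powers of $1/n$, the divergent $n^{1/2}$ pieces cancel and the remainder is $\bigo{n^{-3/2}}\to0$. The $r=1$ term is $-\tfrac{2}{3c^3}\sum_{j=0}^{\ell-1}t_j^3\sim -\tfrac{4\ell^4}{3(n+1)^3}\to -\tfrac43 x^4$. Therefore $\log\pqratio{n}{n/2}{\ell}{q}{q}\to-\tfrac43 x^4$, which is exactly the assertion.

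I expect the main obstacle to be precisely this order-counting. Because $\ell^2\sim n^{3/2}$, the naive leading term of $\log q^{-\ell^2}$ blows up like $n^{1/2}$, so the limit cannot be read off termwise; one must verify that this divergence is annihilated exactly by the linear ($t_j$) part of the binomial-ratio logarithm, and that the surviving constant first appears only at cubic ($t_j^3$) order. Making this rigorous also requires justifying the replacement of the finite power sums by their leading asymptotics uniformly in $n$ (e.g.\ by Euler--Maclaurin or a direct integral comparison) and confirming that no subleading piece at any order leaks into the $\bigo{1}$ limit.
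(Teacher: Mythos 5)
Your proof is correct and is essentially the elementary series-expansion argument the paper alludes to but does not write out (the paper merely cites its identity \eqref{expid} and leaves the verification to the reader). Your reduction to $q^{-\ell^2}\prod_j\frac{c-t_j}{c+t_j}$ with the odd-power expansion cleanly explains why the divergent $n^{1/2}$ piece cancels against $\ell^2\log\frac{n+2}{n}$ and why the surviving constant $-\tfrac{4}{3}x^4$ comes from the cubic term, so it supplies exactly the details the paper omits.
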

This allows us to give the exact order of the sum.
\begin{theorem}\label{thm:pqsum0}
  For $q=\frac{n}{n+2}$ we have
  \[
  \pqsum{n}{q}{q} \sim 
  \frac{\gammaf{1/4}\,3^{1/4}\,n^{1/4}}{\sqrt{\pi}}\,
  \left(\frac{2}{\sqrt{e}}\right)^{n-1}
  \]
\end{theorem}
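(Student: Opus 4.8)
The plan is to factor out the central coefficient $\pqbin{n}{n/2}{q}{q}$, whose asymptotics are already supplied by Lemma~\ref{lem:pqmid0}, and to reduce the remaining sum to a Riemann sum governed by Lemma~\ref{lem:pqkvot0}. First I would exploit the symmetry $\pqbin{n}{k}{q}{q}=\pqbin{n}{n-k}{q}{q}$ and write, with $k=n/2-\ell$,
\[
  \pqsum{n}{q}{q}
  = \pqbin{n}{n/2}{q}{q}\left(1 + 2\sum_{\ell=1}^{n/2}\pqratio{n}{n/2}{\ell}{q}{q}\right),
\]
since each coefficient equals its mirror image about the middle and $\pqratio{n}{n/2}{0}{q}{q}=1$ is the central term. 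The entire asymptotic question now rests on the bracketed sum, while the prefactor is handled by Lemma~\ref{lem:pqmid0}.

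Next, the scaling in Lemma~\ref{lem:pqkvot0} dictates the change of variable $x=\ell/n^{3/4}$, so that consecutive terms are spaced by $\Delta x = n^{-3/4}$ and the lemma reads $\pqratio{n}{n/2}{\ell}{q}{q}\sim\exp\!\left(-\frac{4}{3}x^4\right)$. As $\ell$ runs from $1$ to $n/2$, the variable $x$ runs from $n^{-3/4}$ up to $\frac{1}{2}n^{1/4}\to\infty$, so treating the sum as a Riemann sum I expect
\[
  \sum_{\ell=1}^{n/2}\pqratio{n}{n/2}{\ell}{q}{q}
  \sim n^{3/4}\int_0^\infty \exp\!\left(-\frac{4}{3}x^4\right)\dd x .
\]
The substitution $u=\frac{4}{3}x^4$ evaluates the integral to $\frac{1}{4}(3/4)^{1/4}\gammaf{1/4}$, so the bracket is asymptotic to $\frac{1}{2}(3/4)^{1/4}\gammaf{1/4}\,n^{3/4}$ (the $+1$ and the factor $2$ from symmetry being absorbed). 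Multiplying by $\pqbin{n}{n/2}{q}{q}\sim\sqrt{2e/(\pi n)}\,(2/\sqrt e)^n$ from Lemma~\ref{lem:pqmid0} and collecting the powers of $2$, $e$, $\pi$ and $n$ should reproduce the stated expression, after rewriting $(2/\sqrt e)^n=(\sqrt e/2)(2/\sqrt e)^{n-1}$ to match the $(2/\sqrt e)^{n-1}$ in the claim.

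The main obstacle is making the Riemann-sum step rigorous. Lemma~\ref{lem:pqkvot0} is a pointwise statement for fixed $x$, whereas here $\ell$, and hence $x$, runs over the full range up to $n/2$. To pass from the pointwise asymptotic to the integral I would need two ingredients: a uniform version of Lemma~\ref{lem:pqkvot0} valid on compact $x$-intervals, and a summable upper bound on $\pqratio{n}{n/2}{\ell}{q}{q}$, holding for all $\ell$ and all large $n$, that decays at least like $\exp(-c\,x^4)$. Together these underwrite a dominated-convergence argument that controls the tail $\ell\gtrsim n^{3/4}$ and legitimises replacing the sum by the integral. The most natural route to such a uniform bound is to rerun the expansion behind Lemma~\ref{lem:pqkvot0} with explicit error control via the identity \eqref{expid}, keeping track of the remainder uniformly in $\ell$; this is the technical heart of the argument. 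Once that bound is in place, the evaluation of the $\exp(-\frac{4}{3}x^4)$ integral and the bookkeeping of constants are entirely routine.
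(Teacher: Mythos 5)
Your proposal follows essentially the same route as the paper's own proof: factor out the middle coefficient via Lemma~\ref{lem:pqmid0}, convert the sum of ratios into $n^{3/4}\int\exp(-\tfrac{4}{3}x^4)\,\dd x$ through the substitution $k=x\,n^{3/4}$ guided by Lemma~\ref{lem:pqkvot0}, and collect constants (your half-line integral with the factor $2$ from symmetry is the same as the paper's two-sided integral). Your constants check out, and the uniformity/dominated-convergence issue you flag is precisely the detail the paper acknowledges leaving out.
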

\begin{proof}
  The calculations goes as follows though we leave out some details.
  \begin{gather*}
    \pqsum{n}{q}{q} = 
    \pqbin{n}{n/2}{q}{q}\,\sum_{k=-n/2}^{n/2} 
    \frac{
      \pqbin{n}{n/2+k}{q}{q}
    }{
      \pqbin{n}{n/2}{q}{q}
    } \sim \\
    n^{3/4}\, \pqbin{n}{n/2}{q}{q}\, 
    \int\limits_{-\infty}^{+\infty}\exp\left(-\frac{4}{3}\,x^4\right)\,\dd x 
    \sim \\
    n^{3/4}\,\sqrt{\frac{2\,e}{\pi\,n}}\,\left(\frac{2}{\sqrt{e}}\right)^n\,
    \frac{3^{1/4}\,\gammaf{1/4}}{2\,\sqrt{2}} = \\
    \frac{\gammaf{1/4}\,3^{1/4}\,n^{1/4}}{\sqrt{\pi}}\,
    \left(\frac{2}{\sqrt{e}}\right)^{n-1}
  \end{gather*}
  where the factor $n^{3/4}$ in front of the integral comes from the
  change of variables $k=x\,n^{3/4}$.
\end{proof}
This is the sum at just a single point where the distribution becomes
flat in the middle region. We can do even better if we allow ourselves
to move around in the vicinity of this point.
\begin{lemma}\label{lem:pqmid1}
  Let
  \[ q = \frac{n}{n+2} + \frac{a}{n^{3/2}} \]
  for some real number $a$.
  Then
  \[
  \pqbin{n}{n/2}{q}{q} \sim 
  \sqrt{\frac{2\,e}{\pi\,n}}\,\left(\frac{2}{\sqrt{e}}\right)^n\,
  \exp\left(\frac{a\,\sqrt{n}}{4}\right)
  \]
\end{lemma}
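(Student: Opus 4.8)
The plan is to peel off the perturbation from the base value $q_0=\frac{n}{n+2}$ already handled in Lemma~\ref{lem:pqmid0}. Writing $q=q_0+a\,n^{-3/2}$ and using the definition~\eqref{pqbin2}, the middle coefficient factors as
\[
\pqbin{n}{n/2}{q}{q}=q^{n^2/4}\binom{n}{n/2}=\left(\frac{q}{q_0}\right)^{\!n^2/4}q_0^{\,n^2/4}\binom{n}{n/2}=\left(\frac{q}{q_0}\right)^{\!n^2/4}\pqbin{n}{n/2}{q_0}{q_0},
\]
since $q_0^{\,n^2/4}\binom{n}{n/2}=\pqbin{n}{n/2}{q_0}{q_0}$ by~\eqref{pqbin2}. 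Lemma~\ref{lem:pqmid0} already gives $\pqbin{n}{n/2}{q_0}{q_0}\sim\sqrt{\frac{2e}{\pi n}}\left(\frac{2}{\sqrt{e}}\right)^{n}$, and since $f_1\sim g_1$ together with $f_2\sim g_2$ gives $f_1f_2\sim g_1g_2$, it suffices to establish that the correction factor obeys $\left(q/q_0\right)^{n^2/4}\sim\exp\!\left(\frac{a\sqrt{n}}{4}\right)$; multiplying the two asymptotics then yields the stated expression.

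To treat the correction factor I would take logarithms. Since $q/q_0=1+a(n+2)\,n^{-5/2}$, put $u=a(n+2)\,n^{-5/2}\to0$ and expand $\ln(1+u)=u-\tfrac12u^2+\cdots$. Multiplying through by the exponent $n^2/4$ gives
\[
\frac{n^2}{4}\ln\frac{q}{q_0}=\frac{n^2}{4}u-\frac{n^2}{8}u^2+\cdots=\frac{a(n+2)}{4\sqrt{n}}-\frac{a^2(n+2)^2}{8\,n^{3}}+\cdots=\frac{a\sqrt{n}}{4}+\frac{a}{2\sqrt{n}}-\frac{a^2}{8n}+\cdots.
\]
Every term beyond the first tends to $0$, so $\frac{n^2}{4}\ln\frac{q}{q_0}=\frac{a\sqrt{n}}{4}+\littleo{1}$ and hence $\left(q/q_0\right)^{n^2/4}=\exp\!\left(\frac{a\sqrt{n}}{4}\right)\exp(\littleo{1})\sim\exp\!\left(\frac{a\sqrt{n}}{4}\right)$. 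This is essentially the bookkeeping encoded in~\eqref{expid}: feeding $q/q_0=1+y/m$ with $m=n^2/4$ and $y=\frac{a(n+2)}{4\sqrt{n}}$ into~\eqref{expid} gives $e^{y}\sim\exp\!\left(\frac{a\sqrt{n}}{4}\right)$ together with a bracketed correction $1-\frac{y^2}{2m}+\cdots\to1$, the only caveat being that here $y$ itself grows like $\sqrt{n}$, which is why the direct logarithmic estimate above is the cleaner route.

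The single delicate point, the would-be main obstacle, is the precise calibration of the exponent $3/2$ in the perturbation $a\,n^{-3/2}$. Since the coefficient carries the power $n^2/4$, a perturbation of order $n^{-\alpha}$ contributes a term of order $n^{2-\alpha}$ to the logarithm; the value $\alpha=3/2$ is exactly the borderline choice for which this is of order $n^{1/2}$, neither diverging faster nor vanishing, and it produces precisely the surviving term $\frac{a\sqrt{n}}{4}$. One must therefore check carefully that the two sub-leading pieces genuinely vanish: the $\frac{a}{2\sqrt{n}}$ arising from the $+2$ in $q_0=\frac{n}{n+2}$, and the $-\frac{a^2}{8n}$ arising from the quadratic term of $\ln(1+u)$. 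Granting that these tend to $0$, which the display above confirms, the $\sim$ relation for the correction factor holds, and combined with Lemma~\ref{lem:pqmid0} it proves the lemma.
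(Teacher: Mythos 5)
Your proof is correct and is essentially the elementary expansion the paper intends when it says the lemma "can be verified using \eqref{expid}": you simply organize the bookkeeping by factoring out $(q/q_0)^{n^2/4}$ and reusing Lemma~\ref{lem:pqmid0} for the base factor $q_0^{\,n^2/4}\binom{n}{n/2}$, and your logarithmic estimate correctly shows the correction factor is $\exp\left(\frac{a\sqrt{n}}{4}+\littleo{1}\right)$. No gaps.
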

This can be verified using \eqref{expid} as can the following lemma.
\begin{lemma}\label{lem:pqkvot1}
  Let $a$ and $x$ be real numbers.
  For $q=\frac{n}{n+2}+\frac{a}{n^{3/2}}$ we have
  \[
  \pqratio{n}{n/2}{x\,n^{3/4}}{q}{q} \sim
  \exp\left(-a\,x^2-\frac{4}{3}\,x^4\right)
  \]
\end{lemma}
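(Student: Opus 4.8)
The plan is to split the ratio into a pure power of $q$ times an ordinary binomial ratio, expand each factor to the order at which a divergent $\sqrt{n}$ contribution appears, and observe that these divergent pieces cancel. Writing $j=x\,n^{3/4}$ and using the $p=q$ convention \eqref{pqbin2} at $k=n/2$, the two powers of $q$ combine through $(n/2)(n/2)-(n/2-j)(n/2+j)=j^2$, so that
\[
  \pqratio{n}{n/2}{x\,n^{3/4}}{q}{q}
  = q^{-j^2}\,\frac{\binom{n}{n/2-j}}{\binom{n}{n/2}}.
\]
Each factor is then handled by taking logarithms.

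For the power of $q$ I would expand $q=\frac{n}{n+2}+\frac{a}{n^{3/2}}=1-\frac{2}{n}+\frac{a}{n^{3/2}}+\bigo{n^{-2}}$, whence $\log q=-\frac{2}{n}+\frac{a}{n^{3/2}}+\bigo{n^{-2}}$; multiplying by $-j^2=-x^2n^{3/2}$ gives $\log q^{-j^2}=2x^2\sqrt{n}-a\,x^2+\littleo{1}$. For the binomial ratio I would write it as the finite product $\prod_{i=1}^{j}\frac{n/2-i+1}{n/2+i}$, take its logarithm, and expand each summand with the help of \eqref{expid}. Collecting powers of $1/N$ with $N=n/2$, the coefficient of $N^{-1}$ sums, via $\sum_{i=1}^{j}(2i-1)=j^2$, to $-j^2/N=-2x^2\sqrt{n}$, while the coefficient of $N^{-3}$ yields the quartic term $-\frac{4}{3}x^4$. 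Adding the two logarithms, the $\pm2x^2\sqrt{n}$ terms cancel exactly and leave $-a\,x^2-\frac{4}{3}x^4+\littleo{1}$; exponentiating gives the claim. Setting $a=0$ recovers Lemma~\ref{lem:pqkvot0}, and the $a$-term is consistent with the $\exp(a\sqrt{n}/4)$ factor of Lemma~\ref{lem:pqmid1}.

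The main obstacle is the bookkeeping rather than any single delicate estimate. Because $j\sim n^{3/4}$ is large, the crude Gaussian approximation of the binomial is insufficient and each of the two factors individually explodes like $e^{\pm2x^2\sqrt{n}}$; one must therefore carry both expansions far enough to witness this $\sqrt{n}$-order cancellation and, at the same time, verify that the only surviving correction is the $N^{-3}$ contribution giving $-\frac{4}{3}x^4$. Concretely this means checking that the even-order terms scale like $(j/N)^k\sim n^{-k/4}$ and that the odd-order terms beyond $N^{-3}$ scale like $n^{(3-k)/4}$, so that all of them are $\littleo{1}$ since $j/N\sim 2x\,n^{-1/4}\to0$. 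A minor technical point is that $x\,n^{3/4}$ need not be an integer; I would read $j=\floor{x\,n^{3/4}}$ (or pass to $\Gamma$-functions), the rounding error being negligible at this order.
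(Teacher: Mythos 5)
Your decomposition into $q^{-j^2}$ times the ordinary binomial ratio, the cancellation of the divergent $\pm 2x^2\sqrt{n}$ contributions, and the identification of $-a\,x^2$ from the $q$-power and $-\tfrac{4}{3}x^4$ from the $N^{-3}$ term of the product expansion are all correct, and this is exactly the elementary expansion via \eqref{expid} that the paper invokes (it leaves the verification to the reader). Your scaling checks on the remaining terms ($n^{-k/4}$ for even orders, $n^{(3-k)/4}$ for odd orders $k\ge 5$) close the argument, so the proposal is a complete version of the paper's intended proof.
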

We now have the resources to estimate the sum of the coefficients for
a whole spectrum of values of $q$ near $n/(n+2)$.  The next theorem
can be shown using the same technique as Theorem~\ref{thm:pqsum0},
though the result gets slightly more complicated due to the integral on
the right hand side in the previous lemma.
\begin{theorem}\label{thm:pqsum1}
  Let $q=\frac{n}{n+2}+\frac{a}{n^{3/2}}$.
  For $a>0$ the asymptotic order of $\pqsum{n}{q}{q}$ is
  \[
    \frac{n^{1/4}}{4}\,
    \sqrt{\frac{6\,a\,e}{\pi}}\,
    \left(\frac{2}{\sqrt{e}}\right)^n\,
    \exp\left(\frac{a\,\sqrt{n}}{4}+\frac{3\,a^2}{32}\right)\,
    \besselk{1/4}{\frac{3\,a^2}{32}}
  \]
  For $a<0$ the asymptotic order of $\pqsum{n}{q}{q}$ is
  \[
    \frac{n^{1/4}}{4}\,
    \sqrt{-3\,a\,e\,\pi}\,
    \left(\frac{2}{\sqrt{e}}\right)^n\,
    \exp\left(\frac{a\,\sqrt{n}}{4}+\frac{3\,a^2}{32}\right)\,
    \left(\besseli{1/4}{\frac{3\,a^2}{32}} + 
    \besseli{-1/4}{\frac{3\,a^2}{32}}\right)
  \]
\end{theorem}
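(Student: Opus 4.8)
The plan is to follow the proof of Theorem~\ref{thm:pqsum0} verbatim, factoring out the middle coefficient and converting the resulting Riemann sum into a Gaussian-type integral, but now carrying the extra $a$-dependence that enters through Lemma~\ref{lem:pqmid1} and Lemma~\ref{lem:pqkvot1}. Concretely, I would write
\[
  \pqsum{n}{q}{q} = \pqbin{n}{n/2}{q}{q}\,\sum_{k=-n/2}^{n/2}
  \pqratio{n}{n/2+k}{-k}{q}{q}
\]
and recognise, via Lemma~\ref{lem:pqkvot1}, that after the substitution $k = x\,n^{3/4}$ the summand tends to $\exp\!\left(-a\,x^2 - \tfrac{4}{3}\,x^4\right)$. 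The change of variables contributes a factor $n^{3/4}$, so the sum is asymptotic to
\[
  n^{3/4}\,\pqbin{n}{n/2}{q}{q}\,
  \int_{-\infty}^{+\infty}\exp\!\left(-a\,x^2 - \tfrac{4}{3}\,x^4\right)\dd x .
\]
Substituting the asymptotic form of $\pqbin{n}{n/2}{q}{q}$ from Lemma~\ref{lem:pqmid1}, which now carries the factor $\exp\!\left(a\sqrt{n}/4\right)$, accounts for the $\exp\!\left(a\sqrt{n}/4\right)$ and $(2/\sqrt e)^n$ pieces of both claimed expressions.

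The genuinely new work lies entirely in the quartic Gaussian integral $\int_{-\infty}^{+\infty}\exp\!\left(-a\,x^2 - \tfrac{4}{3}\,x^4\right)\dd x$. I would evaluate this in closed form as a modified Bessel function. The standard route is to complete the quartic so that the integral matches the known formula $\int_{0}^{\infty} e^{-\beta t^2 - \gamma t^4}\,\dd t = \tfrac{1}{4}\sqrt{\beta/\gamma}\,\exp\!\left(\beta^2/(8\gamma)\right)\besselk{1/4}{\beta^2/(8\gamma)}$, valid for $\beta,\gamma>0$. With $\beta = a$ and $\gamma = 4/3$ one gets $\beta^2/(8\gamma) = 3a^2/32$ and $\tfrac14\sqrt{\beta/\gamma} = \tfrac14\sqrt{3a/16}$; doubling for the symmetric range and folding the remaining constants into the prefactor reproduces the $a>0$ expression, including the $\besselk{1/4}{3a^2/32}$ and the $\sqrt{6ae/\pi}\,/4$ coefficient. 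The main obstacle is the sign of $a$: for $a<0$ the quadratic term is \emph{destabilising}, so $K_{1/4}$ must be analytically continued to negative argument, where it develops an imaginary part and splits into the combination $\besseli{1/4}{\cdot}+\besseli{-1/4}{\cdot}$ via the standard connection formula $\besselk{\nu}{z} = \tfrac{\pi}{2}\,\frac{I_{-\nu}(z)-I_{\nu}(z)}{\sin(\nu\pi)}$. Tracking the branch of $\sqrt{\beta/\gamma}$ and the phase of $\exp\!\left(\beta^2/(8\gamma)\right)$ through this continuation, and checking that the spurious imaginary parts cancel to leave the real expression with $\sqrt{-3ae\pi}$, is the delicate step that separates the two cases.

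A few convergence points must be checked to make the Riemann-sum-to-integral passage rigorous rather than formal. First, since $a$ is a fixed constant while the $n^{3/4}$ scaling dominates, the quartic $-\tfrac43 x^4$ guarantees integrability for every real $a$, so dominated convergence applies uniformly on compact $x$-ranges and the tails are controlled by the quartic decay regardless of the sign of the quadratic term; this is exactly why the $a<0$ case still yields a finite, convergent integral despite the unbounded-looking $-a x^2$. Second, I would confirm that the error terms suppressed by the ``$\sim$'' in Lemmas~\ref{lem:pqmid1} and \ref{lem:pqkvot1} remain $o(1)$ after summation over the $\bigo{n^{3/4}}$ relevant indices, just as is tacitly assumed in Theorem~\ref{thm:pqsum0}. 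Finally, as a consistency check one should verify that letting $a\to 0^{+}$ in the $a>0$ formula recovers Theorem~\ref{thm:pqsum0}: using the small-argument asymptotics $\besselk{1/4}{z}\sim\tfrac12\gammaf{1/4}\,(z/2)^{-1/4}$ as $z\to 0$, the $\sqrt{a}$ and $a^{-1/4}$ factors combine to a nonzero constant, and the $\gammaf{1/4}\,3^{1/4}\,n^{1/4}/\sqrt\pi$ prefactor of the earlier theorem should reappear, confirming that the two results agree on their common boundary.
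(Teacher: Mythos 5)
Your proposal is correct and follows essentially the same route as the paper, which itself only says the theorem ``can be shown using the same technique as Theorem~\ref{thm:pqsum0}'': factor out the middle coefficient via Lemma~\ref{lem:pqmid1}, pass to the integral of $\exp\left(-a\,x^2-\tfrac{4}{3}x^4\right)$ via Lemma~\ref{lem:pqkvot1} with $k=x\,n^{3/4}$, and evaluate that integral in terms of modified Bessel functions; the constants check out against the stated result (note only the small slip $\sqrt{\beta/\gamma}=\sqrt{3a/4}$, not $\sqrt{3a/16}$). For $a<0$ you can avoid the analytic continuation of $\besselk{1/4}{\cdot}$ entirely by invoking the companion real formula $\int_0^\infty e^{\beta t^2-\gamma t^4}\,\dd t=\tfrac{\pi}{4}\sqrt{\beta/(4\gamma)}\,e^{\beta^2/(8\gamma)}\left(\besseli{1/4}{\beta^2/(8\gamma)}+\besseli{-1/4}{\beta^2/(8\gamma)}\right)$ for $\beta,\gamma>0$, which gives the second case directly with no branch tracking.
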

Here $\besseli{\alpha}{x}$ and $\besselk{\alpha}{x}$ denote the
modified Bessel functions of the first and second kind respectively.
The $m$th moment is simpler to express using an integral formulation.
\begin{gather}
  \sum_{k=-n/2}^{n/2}\abs{k}^m \,
  \pqbin{n}{\frac{n}{2}+k}{q}{q} \sim \\
  n^{\frac{3\,m+3}{4}}\,\pqbin{n}{n/2}{q}{q}\,
  \int\limits_{-\infty}^{+\infty} \abs{x}^m, 
  \exp\left(-a\,x^2-\frac{4}{3}\,x^4\right)\,
  \dd x
\end{gather}
and the asymptotic behaviour of the middle coefficient is given by
lemma~\ref{lem:pqmid1}. 

The same technique allows us to repeat this for points farther away
from the critical point $q=n/(n+2)$. If we increase $q$ by $a/n$ then
the coefficients get sharply concentrated in the middle like that of
standard binomial coefficients. Again \eqref{expid} to the rescue.
\begin{lemma}\label{lem:pqmid2}
  Let $q=\frac{n+a}{n+2}$ where $a>0$. Then 
  \[
  \pqbin{n}{n/2}{q}{q} \sim 
  \sqrt{\frac{2\,e}{\pi\,n}}\,\left(\frac{2}{\sqrt{e}}\right)^n\,
  \exp\left(\frac{a\,n}{4}-\frac{a^2}{8}\right)
  \]
\end{lemma}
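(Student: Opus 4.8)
The plan is to reduce everything to the extended definition \eqref{pqbin2}, which for the even middle index reads $\pqbin{n}{n/2}{q}{q} = q^{n^2/4}\,\binom{n}{n/2}$, and then treat the two factors separately. The central binomial coefficient obeys the standard Stirling asymptotic $\binom{n}{n/2}\sim 2^n\sqrt{2/(\pi n)}$, so the whole problem comes down to pinning the power $q^{n^2/4}$ down to multiplicative order $1+\littleo{1}$. Since $a$ is a fixed real number and $q\to 1$ (so $q>0$ for large $n$, which is all the positivity we need), it is convenient to write $q = (1+a/n)/(1+2/n)$ and split $q^{n^2/4}$ into a numerator and a denominator factor.

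For each fixed $x$ I would first establish the auxiliary estimate $(1+x/n)^{n^2/4}\sim\exp\!\left(\frac{nx}{4}-\frac{x^2}{8}\right)$. This is exactly the kind of computation \eqref{expid} is built for: raising $(1+x/n)^n = e^x(1-x^2/(2n)+\cdots)$ to the power $n/4$ gives $e^{nx/4}$ times $(1-x^2/(2n)+\cdots)^{n/4}$, and the latter tends to $e^{-x^2/8}$. Equivalently, and perhaps more transparently, one takes logarithms and uses $\frac{n^2}{4}\ln(1+x/n) = \frac{nx}{4}-\frac{x^2}{8}+\frac{x^3}{12\,n}-\cdots$, so that the part beyond the constant term is $\bigo{1/n}$ and contributes a factor $1+\littleo{1}$ after exponentiation.

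Applying this with $x=a$ and with $x=2$ and dividing yields
\[
q^{n^2/4}\sim\frac{\exp\!\left(\frac{na}{4}-\frac{a^2}{8}\right)}{\exp\!\left(\frac{n}{2}-\frac{1}{2}\right)} = e^{1/2}\,e^{-n/2}\,\exp\!\left(\frac{na}{4}-\frac{a^2}{8}\right).
\]
Multiplying by the Stirling asymptotic of $\binom{n}{n/2}$ and regrouping $e^{1/2}\sqrt{2/(\pi n)} = \sqrt{2e/(\pi n)}$ together with $2^n e^{-n/2} = (2/\sqrt{e})^n$ then reproduces the stated right-hand side.

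The only point that requires genuine care — and the place where a purely leading-order calculation would go wrong — is the $\bigo{1}$ term $-x^2/8$ in the auxiliary estimate. Keeping only the $nx/4$ piece would lose the factor $e^{1/2}$ coming from the $x=2$ (denominator) expansion, i.e.\ the difference between $\sqrt{2/(\pi n)}$ and $\sqrt{2e/(\pi n)}$; this is precisely the same constant that already appears in Lemma~\ref{lem:pqmid0}. Once the expansion is carried to second order in $1/n$ the remaining terms are $\bigo{1/n}$ and harmlessly tend to $1$ under exponentiation, so no control of the tail of the logarithmic series beyond the quadratic term is needed.
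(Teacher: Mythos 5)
Your proposal is correct and follows exactly the route the paper intends: the paper omits the proof, remarking only that the lemma follows from an application of \eqref{expid} to $q^{n^2/4}$ combined with the Stirling asymptotic for $\binom{n}{n/2}$, which is precisely your computation, including the crucial second-order term $-x^2/8$ that produces the factor $e^{1/2}$.
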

Note that above, when the coefficients had a rather wide distribution,
we examined their behaviour at $x\,n^{3/4}$ from the middle. Under our
current assumption of $q$ the distribution gets more sharply
concentrated around the middle (basically they become gaussian), thus
we study their behaviour at $x\,\sqrt{n}$ from the middle.
\begin{lemma}\label{lem:pqkvot2}
  Let $q=\frac{n+a}{n+2}$ and $a>0$. Then 
  \[
  \pqratio{n}{n/2}{x\,\sqrt{n}}{q}{q} \sim
  \exp\left(-a\,x^2\right)
  \]
\end{lemma}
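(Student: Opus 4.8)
The plan is to collapse the problem onto ordinary binomial coefficients by means of the $p=q$ identity \eqref{pqbin2}. Since here $\ell=x\sqrt{n}$, $k=n/2$ and $p=q$, the ratio defined in \eqref{pqratio} becomes
\[
  \pqratio{n}{n/2}{x\sqrt{n}}{q}{q}
  = \frac{\pqbin{n}{n/2-x\sqrt{n}}{q}{q}}{\pqbin{n}{n/2}{q}{q}}
  = q^{-x^2 n}\,\frac{\binom{n}{n/2-x\sqrt{n}}}{\binom{n}{n/2}},
\]
where the exponent of $q$ follows from $\bigl(\tfrac{n}{2}-x\sqrt{n}\bigr)\bigl(\tfrac{n}{2}+x\sqrt{n}\bigr)-\tfrac{n^2}{4}=-x^2 n$. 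Thus it suffices to estimate the $q$-power and the binomial ratio separately and multiply the two asymptotics.

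For the $q$-power factor I would write $q=\frac{n+a}{n+2}=\frac{1+a/n}{1+2/n}$, so that $q^{-x^2 n}=(1+2/n)^{x^2 n}\,(1+a/n)^{-x^2 n}$, and apply the identity \eqref{expid} to each factor. Equivalently, expanding the logarithm gives $n\ln q=(a-2)+\bigo{n^{-1}}$, whence $q^{-x^2 n}=\exp\!\bigl(-x^2\,n\ln q\bigr)\sim\exp\!\bigl((2-a)\,x^2\bigr)$. For the binomial ratio I would use the telescoping product
\[
  \frac{\binom{n}{n/2-x\sqrt{n}}}{\binom{n}{n/2}}
  = \prod_{i=1}^{x\sqrt{n}} \frac{n/2-i+1}{n/2+i},
\]
take logarithms, and expand $\ln\!\bigl(1-\tfrac{i-1}{n/2}\bigr)-\ln\!\bigl(1+\tfrac{i}{n/2}\bigr)=-\tfrac{2i-1}{n/2}+\bigo{i^2/n^2}$. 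Summing over $i$ yields $-\tfrac{2}{n}\,(x\sqrt{n})^2+\bigo{x^3/\sqrt{n}}$, so the binomial ratio is $\sim\exp(-2x^2)$. Multiplying the two contributions gives $\exp\!\bigl((2-a)x^2\bigr)\,\exp(-2x^2)=\exp(-a\,x^2)$, which is the claim.

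The one point requiring care, and the main obstacle, is the error control in the binomial ratio at the scale $\ell=x\sqrt{n}$. One must confirm that the subleading term of the logarithmic sum is genuinely $\bigo{x^3/\sqrt{n}}$ and hence vanishes as $n\to\infty$, and that the $\bigo{n^{-1}}$ correction in $n\ln q$ disappears after multiplication by the remaining factors. Both are comfortably true precisely because the displacement scale $\sqrt{n}$ is narrower than the $n^{3/4}$ scale of Lemma~\ref{lem:pqkvot0} and Lemma~\ref{lem:pqkvot1}; only the quadratic term in $x$ survives, whereas at the wider scale the quartic term persisted and produced the $\exp(-\tfrac{4}{3}x^4)$ behaviour. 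This is exactly the sharpening-to-Gaussian regime anticipated in the remark preceding the statement, so no new phenomenon enters and the estimate is routine once the two factors are separated.
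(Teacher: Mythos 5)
Your proposal is correct and follows essentially the route the paper intends: the paper leaves this lemma to the reader as "an application of \eqref{expid}," which amounts to exactly your decomposition $\pqbin{n}{k}{q}{q}=q^{k(n-k)}\binom{n}{k}$, the exponent computation $(n/2)^2-(n/2-x\sqrt{n})(n/2+x\sqrt{n})=x^2n$ giving $q^{-x^2n}\sim e^{(2-a)x^2}$, and the central binomial ratio $\sim e^{-2x^2}$. Your error estimates ($\bigo{x^3/\sqrt{n}}$ in the telescoping sum and $\bigo{x^2/n}$ in $x^2 n\ln q$) are the right ones and both vanish, so the argument is complete.
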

\begin{theorem}\label{thm:pqsum2}
  Let $q=\frac{n+a}{n+2}$ where $a>0$.
  Then
  \begin{gather*}
    \pqsum{n}{q}{q} \sim \sqrt{n}\,\pqbin{n}{n/2}{q}{q}\,
    \int\limits_{-\infty}^{+\infty}
    \exp\left(-a\,x^2\right)\,\dd x \sim \\
    \sqrt{\frac{2\,e}{a}}\,\left(\frac{2}{\sqrt{e}}\right)^n\,
    \exp\left(\frac{a\,n}{4}-\frac{a^2}{8}\right)
  \end{gather*}
\end{theorem}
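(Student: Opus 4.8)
The plan is to mimic the proof of Theorem~\ref{thm:pqsum0}: pull out the central coefficient, recognise the remaining sum as a Riemann sum for a Gaussian integral, and finish by inserting the known central asymptotics. First I would centre the sum by writing $k=n/2+j$ and exploit the symmetry $\pqbin{n}{n/2+j}{q}{q}=\pqbin{n}{n/2-j}{q}{q}$ to get
\[
\pqsum{n}{q}{q}=\pqbin{n}{n/2}{q}{q}\sum_{j=-n/2}^{n/2}\frac{\pqbin{n}{n/2+j}{q}{q}}{\pqbin{n}{n/2}{q}{q}}=\pqbin{n}{n/2}{q}{q}\sum_{j=-n/2}^{n/2}\pqratio{n}{n/2}{j}{q}{q}.
\]
Since for $a>0$ the distribution is concentrated on a scale $\sqrt{n}$ (rather than the wider $n^{3/4}$ scale of the critical point $q=n/(n+2)$), the appropriate substitution is $j=x\sqrt{n}$, for which consecutive terms are spaced $\Delta x=1/\sqrt{n}$ apart.

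Next I would apply Lemma~\ref{lem:pqkvot2}, giving $\pqratio{n}{n/2}{x\sqrt{n}}{q}{q}\sim\exp(-a\,x^2)$, and read the resulting sum as a Riemann sum; the change of variables produces exactly one factor of $\sqrt{n}$, so that
\[
\sum_{j=-n/2}^{n/2}\pqratio{n}{n/2}{j}{q}{q}\sim\sqrt{n}\int_{-\infty}^{+\infty}\exp(-a\,x^2)\,\dd x=\sqrt{n}\,\sqrt{\frac{\pi}{a}}.
\]
This is precisely the first asymptotic equivalence in the statement. The second then follows by substituting the central asymptotics from Lemma~\ref{lem:pqmid2}, namely $\pqbin{n}{n/2}{q}{q}\sim\sqrt{2e/(\pi n)}\,(2/\sqrt{e})^n\exp(an/4-a^2/8)$: the two factors of $\sqrt{n}$ cancel and the two factors of $\sqrt{\pi}$ cancel, leaving exactly $\sqrt{2e/a}\,(2/\sqrt{e})^n\exp(an/4-a^2/8)$.

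The one genuinely delicate step is the passage from the sum to the integral hidden in the first $\sim$, i.e.\ the Laplace/dominated-convergence estimate. One must show that Lemma~\ref{lem:pqkvot2} holds uniformly enough across the bulk $|j|\lesssim\sqrt{n}\,\log n$ and that the remaining tail is negligible. Here the required Gaussian-type dominating envelope is available because, extending the log-concavity discussion given earlier in the paper, the sequence $\pqbin{n}{k}{q}{q}=q^{k(n-k)}\binom{n}{k}$ is log-concave for every $a>0$: a short computation shows that the ratio $\pqbin{n}{k}{q}{q}^2/\bigl(\pqbin{n}{k-1}{q}{q}\,\pqbin{n}{k+1}{q}{q}\bigr)$ equals $q^2(1+1/k)(1+1/(n-k))$, which is minimised at $k=n/2$ with value $(1+a/n)^2>1$. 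This domination justifies interchanging the limit and the summation, and everything else reduces to the same elementary algebra as in Theorem~\ref{thm:pqsum0}.
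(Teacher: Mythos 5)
Your proposal is correct and follows essentially the same route as the paper: the paper's Theorem~\ref{thm:pqsum2} is obtained exactly by factoring out the middle coefficient, substituting $k=x\sqrt{n}$, invoking Lemmas~\ref{lem:pqkvot2} and~\ref{lem:pqmid2}, and evaluating the Gaussian integral, just as in Theorem~\ref{thm:pqsum0}. Your additional log-concavity computation, showing $\pqbin{n}{k}{q}{q}^2/\bigl(\pqbin{n}{k-1}{q}{q}\,\pqbin{n}{k+1}{q}{q}\bigr)=q^2\,(1+1/k)\,(1+1/(n-k))\ge(1+a/n)^2>1$, is a correct and welcome way to supply the dominating envelope that the paper leaves implicit when it says it ``leaves out some details.''
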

If we decrease $a$ below zero the sequence becomes sharply bimodal,
with all its mass concentrated around two peaks. We can of course
connect the position of the peaks with the parameter $a$.  Suppose
that we want one of the peaks to have its maximum located at $k$ and
$k-1$, that is, let the ratio here be $1$.
\begin{lemma}\label{lem:pqmag0}
  Given a number $\mu$ such $0<\abs{\mu}<1$ let
  \[k=\frac{n}{2}\left(1+\mu\right)\]
  If
  \[
  a = 2\,\left(1-\frac{\atanh{\mu}}{\mu}\right)
  \]
  then
  \[
  \lim_{n\to\infty}\pqratio{n}{k}{1}{q}{q} = 1
  \]
\end{lemma}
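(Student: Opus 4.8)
The plan is to reduce everything to the closed form for the $p=q$ coefficient ratio that was already extracted in the proof of Lemma~\ref{lem:pqstar}, namely
\[
  \pqratio{n}{k}{1}{q}{q} = q^{-(n-2\,k+1)}\,\frac{k}{n-k+1},
\]
and then simply insert the parameterization $q=\frac{n+a}{n+2}$ carried over from Lemma~\ref{lem:pqmid2} together with $k=\frac{n}{2}(1+\mu)$. First I would record that with this $k$ the exponent collapses, since $n-2\,k+1 = 1-n\,\mu$, so the ratio factors as $q^{\,n\mu-1}\cdot\frac{k}{n-k+1}$. The whole argument is then just a matter of taking the $n\to\infty$ limit of each factor separately and multiplying them.

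For the rational factor I would note that $k=\frac{n}{2}(1+\mu)$ and $n-k+1=\frac{n}{2}(1-\mu)+1$, so
\[
  \frac{k}{n-k+1} = \frac{\frac{n}{2}(1+\mu)}{\frac{n}{2}(1-\mu)+1} \longrightarrow \frac{1+\mu}{1-\mu}
  \qquad (n\to\infty).
\]
For the power factor I would write $q = 1+\frac{a-2}{n+2}$ and apply the elementary limit underlying \eqref{expid}: since the exponent behaves like $n\,\mu$ and the $q^{-1}$ part tends to $1$, one obtains $q^{\,n\mu-1} \to \exp\bigl((a-2)\,\mu\bigr)$. Hence the limit of the whole ratio is $\exp\bigl((a-2)\mu\bigr)\,\frac{1+\mu}{1-\mu}$.

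It then remains to choose $a$ so that this product equals $1$. Setting $\exp\bigl((a-2)\mu\bigr) = \frac{1-\mu}{1+\mu}$ and taking logarithms gives $(a-2)\,\mu = -2\,\atanh\mu$, because $\atanh\mu=\frac12\ln\frac{1+\mu}{1-\mu}$; solving for $a$ reproduces exactly $a = 2\bigl(1-\frac{\atanh\mu}{\mu}\bigr)$, the claimed value. I do not expect a genuine obstacle here: the only point needing care is that $k=\frac{n}{2}(1+\mu)$ is in general not an integer, so it should be read as the nearest integer. Replacing $k$ by $k+\bigo{1}$ perturbs the exponent $n-2\,k+1$ only by $\bigo{1}$ and, because $q\to1$, contributes a factor $q^{\bigo{1}}\to1$, while the rational factor is affected only to order $1/n$. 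Thus the integrality issue, together with the mild bookkeeping in the error terms of \eqref{expid}, are the only things to watch, and neither changes the limit.
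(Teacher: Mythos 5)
Your proposal is correct and follows essentially the same route as the paper: the paper's proof simply asserts that "a simple calculation shows" $\lim_{n\to\infty}\pqratio{n}{k}{1}{q}{q}=\frac{1+\mu}{1-\mu}\,\exp\left(\mu\,(a-2)\right)$ and then solves for $a$, and your argument is exactly that calculation carried out explicitly via the closed form $q^{-(n-2k+1)}\,\frac{k}{n-k+1}$. The extra care you take with the integrality of $k$ is a harmless refinement the paper omits.
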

\begin{proof}
  A simple calculation shows that the limit of the ratio is 
  \[
  \lim_{n\to\infty}\pqratio{n}{k}{1}{q}{q} = 
  \frac{1+\mu}{1-\mu}\,\exp\left(\mu\,(a-2)\right)
  \]
  Setting the limit to 1 and solving the equation gives the lemma.
\end{proof}
We continue as before and estimate the growth rate of the peak
coefficient.  The result (and the proof) is somewhat more complicated
but follows from an application of \eqref{expid}.
\begin{lemma}\label{lem:pqmid3}
  Let $\mu$ and $a$ be defined as in lemma~\ref{lem:pqmag0} and
  set $q=\frac{n+a}{n+2}$. Then
  \begin{gather*}
    \pqbin{n}{\frac{n}{2}\,\left(1+\mu\right)}{q}{q} \sim
    \frac{\scriptstyle
      \sqrt{2}\,\exp\left\{
      \frac{n}{2}\,
      \left(
      \log\frac{4}{1-\mu^2} -
      \frac{1+\mu^2}{\mu}\,\atanh{\mu}
      \right) + 
      \frac{
        1-\mu^2
      }{
        2\,\mu^2
      }\,
      \left(2\,\mu-\atanh{\mu} \right)\,\atanh{\mu}
      \right\}
    }{
      \sqrt{\pi\,n\,\left(1-\mu^2\right)}
    }
  \end{gather*}
\end{lemma}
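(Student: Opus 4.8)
The plan is to start from the defining relation \eqref{pqbin2}, which for $p=q$ reads $\pqbin{n}{k}{q}{q}=q^{k(n-k)}\binom{n}{k}$, and to treat the two factors separately. With $k=\frac{n}{2}(1+\mu)$ we have $n-k=\frac{n}{2}(1-\mu)$ and hence $k(n-k)=\frac{n^2}{4}(1-\mu^2)$, so the coefficient equals $q^{\frac{n^2}{4}(1-\mu^2)}\binom{n}{k}$. The factor $\binom{n}{k}$ will supply the entropy term of order $n$ together with the $1/\sqrt{n}$ prefactor, while, as I shall argue, the entire $O(1)$ constant in the exponent comes solely from expanding the power of $q$.

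First I would handle the binomial coefficient by Stirling's formula. Writing $p=k/n=\frac{1+\mu}{2}$ and $1-p=\frac{1-\mu}{2}$, one has $\log\binom{n}{k}=nH(p)-\frac12\log\!\big(2\pi n\,p(1-p)\big)+O(1/n)$ with $H(p)=-p\log p-(1-p)\log(1-p)$. Using $\log(1+\mu)-\log(1-\mu)=2\atanh{\mu}$, the entropy simplifies to $H(p)=\log 2-\frac12\log(1-\mu^2)-\mu\atanh{\mu}$, and since $p(1-p)=\frac{1-\mu^2}{4}$ the prefactor becomes $\sqrt{2/(\pi n(1-\mu^2))}$. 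The point to stress is that this factor contributes no $O(1)$ term to the exponent: the linear-in-$n$ piece is exactly $nH(p)$, the $\sqrt{2\pi}$-refinements are wholly absorbed into the displayed prefactor, and the remaining Stirling corrections are $O(1/n)$, tending to a multiplicative $1$.

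Next I would expand the power of $q$. From lemma~\ref{lem:pqmag0} we have $2-a=2\,\atanh{\mu}/\mu$, so writing $c=2-a$ gives $q=\frac{n+a}{n+2}=1-\frac{c}{n+2}$ and $\log q=-\frac{c}{n+2}-\frac{c^2}{2(n+2)^2}-O(n^{-3})$. Multiplying by $\frac{n^2}{4}(1-\mu^2)$ and using $\frac{n^2}{n+2}=n-2+O(1/n)$ and $\frac{n^2}{(n+2)^2}=1+O(1/n)$, I would collect the surviving terms: the order-$n$ part is $-\frac{(1-\mu^2)c}{4}\,n=-\frac{(1-\mu^2)\atanh{\mu}}{2\mu}\,n$, and the order-$1$ part is $\frac{(1-\mu^2)c}{2}-\frac{(1-\mu^2)c^2}{8}$, which on substituting $c=2\,\atanh{\mu}/\mu$ collapses to $\frac{1-\mu^2}{2\mu^2}(2\mu-\atanh{\mu})\atanh{\mu}$ — precisely the constant appearing in the lemma. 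Then I would add the two exponents: the non-$\atanh$ order-$n$ parts combine to $\frac{n}{2}\log\frac{4}{1-\mu^2}$, while the $\atanh$ parts merge via $\mu+\frac{1-\mu^2}{2\mu}=\frac{1+\mu^2}{2\mu}$ into $-\frac{n}{2}\,\frac{1+\mu^2}{\mu}\atanh{\mu}$, giving exactly the stated order-$n$ bracket; the $O(1)$ term and the prefactor are as found above.

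The main obstacle is the bookkeeping of the $O(1)$ constant: one must expand $\log q$ to second order and simultaneously retain the $O(1)$ remainders of $\frac{n^2}{n+2}$ and $\frac{n^2}{(n+2)^2}$, discarding nothing at that level, since (as shown) the binomial factor produces no competing $O(1)$ contribution. A useful safeguard is the observation that the choice of $a$ makes $k$ a peak (ratio $1$), so replacing $k=\frac{n}{2}(1+\mu)$ by the nearest integer shifts $k$ by some $\delta$ with $|\delta|<1$. The entropy factor then changes by $H'(p)\,\delta=-2\,\delta\,\atanh{\mu}$, while $k(n-k)=\frac{n^2}{4}(1-\mu^2)-n\mu\delta-\delta^2$ changes the $q$-power by $+2\,\delta\,\atanh{\mu}$; these cancel, confirming that the constant is insensitive to rounding and furnishing an independent check on the answer.
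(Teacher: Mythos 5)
Your computation is correct and follows essentially the route the paper intends: the paper only remarks that the lemma ``follows from an application of \eqref{expid}'', and your expansion of $\log q$ to second order in $1/(n+2)$ applied to $q^{k(n-k)}$ with $k(n-k)=\tfrac{n^2}{4}(1-\mu^2)$, combined with Stirling for $\binom{n}{k}$, is exactly that calculation carried out explicitly; both the order-$n$ bracket and the $O(1)$ constant $\tfrac{1-\mu^2}{2\mu^2}(2\mu-\atanh\mu)\atanh\mu$ come out as stated, and the prefactor $\sqrt{2/(\pi n(1-\mu^2))}$ matches. The rounding-insensitivity check (the $\mp 2\delta\atanh\mu$ cancellation between the entropy and the $q$-power, which is precisely the peak condition of lemma~\ref{lem:pqmag0}) is a worthwhile addition not present in the paper.
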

Again we take $x\,\sqrt{n}$ steps away from the peak and find the
shape of the distribution.
\begin{lemma}\label{lem:pqkvot3}
  Let $\mu$ and $a$ be defined as in lemma~\ref{lem:pqmag0} and
  set $q=\frac{n+a}{n+2}$. Then
   \[
   \pqratio{n}{\frac{n}{2}\,\left(1+\mu\right)}{x\,\sqrt{n}}{q}{q} \sim
   \exp\left\{
   2\,x^2\,\left(
   \frac{1}{\mu^2-1}+\frac{\atanh{\mu}}{\mu}
   \right)
   \right\}
   \]
\end{lemma}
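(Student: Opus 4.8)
The plan is to reduce everything to the elementary case via the $p=q$ evaluation \eqref{pqbin2}, which writes the ratio as a single power of $q$ times an ordinary binomial ratio, and then expand both factors to the precise order that survives the limit. Setting $k=\frac{n}{2}(1+\mu)$ and $\ell=x\sqrt{n}$, I would first record
\[
\pqratio{n}{k}{\ell}{q}{q} = q^{(k-\ell)(n-k+\ell)-k(n-k)}\,\frac{\binom{n}{k-\ell}}{\binom{n}{k}},
\]
and simplify the exponent of $q$ to $(k-\ell)(n-k+\ell)-k(n-k)=\ell(2k-n-\ell)=\mu x\,n^{3/2}-x^2 n$, using $2k-n=\mu n$.

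Next I would expand the two factors separately. For $q=\frac{n+a}{n+2}$ the identity \eqref{expid} gives $\log q=\frac{a-2}{n}+\bigo{n^{-2}}$, so multiplying by the exponent above yields the $q$-contribution
\[
\bigl(\mu x\,n^{3/2}-x^2 n\bigr)\log q = \mu x(a-2)\sqrt{n}-x^2(a-2)+\littleo{1},
\]
the cross term $n^{3/2}\cdot\bigo{n^{-2}}$ being $\bigo{n^{-1/2}}$. For the binomial factor I would write
\[
\log\frac{\binom{n}{k-\ell}}{\binom{n}{k}}=\sum_{j=0}^{\ell-1}\Bigl[\log(k-j)-\log(n-k+j+1)\Bigr],
\]
pull the constant $\log\frac{k}{n-k}=\log\frac{1+\mu}{1-\mu}=2\atanh{\mu}$ out of each summand, and Taylor-expand the remaining logarithms $\log(1-j/k)$ and $\log(1+(j+1)/(n-k))$. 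The constant piece sums to $\ell\cdot2\atanh{\mu}=2x\sqrt{n}\,\atanh{\mu}$. The linear-in-$j$ residue $-\frac{j}{k}-\frac{j+1}{n-k}$ sums to $\sim-\frac{\ell^2}{2}\cdot\frac{n}{k(n-k)}=-\frac{2x^2}{1-\mu^2}$, using $k(n-k)\sim\frac{n^2}{4}(1-\mu^2)$ and $\ell^2=x^2n$; this is the only $\bigo{1}$ contribution. The quadratic-in-$j$ residue sums like $\ell^3/3\sim x^3n^{3/2}/3$ against a $\bigo{n^{-2}}$ prefactor, hence is $\bigo{n^{-1/2}}$ and drops out.

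Finally I would add the two contributions and invoke the defining relation of Lemma~\ref{lem:pqmag0}, namely $a-2=-\frac{2\atanh{\mu}}{\mu}$. This makes the two $\sqrt{n}$ terms $\mu x(a-2)\sqrt{n}$ and $2x\sqrt{n}\,\atanh{\mu}$ cancel identically, and the surviving constant terms combine as $-x^2(a-2)-\frac{2x^2}{1-\mu^2}=2x^2\bigl(\frac{\atanh{\mu}}{\mu}+\frac{1}{\mu^2-1}\bigr)$, which is exactly the exponent claimed. Exponentiating, and recalling that $\sim$ here means convergence to a constant (the right-hand side is $n$-independent), finishes the proof.

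The hard part is the $\sqrt{n}$ cancellation: because the limiting ratio is finite, the $\bigo{\sqrt{n}}$ contributions must vanish, and this is a genuine constraint rather than algebra one can be cavalier about. It forces one to expand $\log q$ one order beyond the naive $(a-2)/n$ (to confirm the $n^{3/2}\cdot n^{-2}$ term is harmless) and to keep the binomial log-sum to second order in $j/n$ (to confirm the cubic sum is killed by its $n^{-2}$ prefactor). Bookkeeping of these competing orders, not the closing algebraic identity, is where the care lies.
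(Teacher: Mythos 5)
Your computation is correct: the reduction via \eqref{pqbin2} to $q^{\ell\,(2k-n-\ell)}$ times a binomial ratio, the expansion of $\log q$ to order $n^{-2}$, the log-sum expansion of the binomial ratio, and the cancellation of the $\sqrt{n}$ terms forced by $a-2=-2\atanh(\mu)/\mu$ all check out and yield exactly the claimed exponent. The paper offers no written proof beyond the remark that the lemma follows from an application of \eqref{expid}, and your argument is precisely that intended expansion carried out in full, so it matches the paper's approach.
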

Finally we get the sum by multiplying the integral with the peak
coefficient and $2\,\sqrt{n}$, where the factor 2 is due to that we
have two peaks.
\begin{theorem}\label{thm:pqsum3}
  Let $\mu$ and $a$ be defined as in lemma~\ref{lem:pqmag0} and set
  $q=\frac{n+a}{n+2}$. Then
  \begin{gather*}
    \pqsum{n}{q}{q} \sim
    2\,\sqrt{n}\,\pqbin{n}{\frac{n}{2}\,\left(1+\mu\right)}{q}{q}
    \,\int\limits_{-\infty}^{+\infty}
    \exp\left\{
   2\,x^2\,\left(
   \frac{1}{\mu^2-1}+\frac{\atanh{\mu}}{\mu}
   \right)
   \right\}\,\dd x \sim \\
    \sqrt{2\,\pi\,n}\,\pqbin{n}{\frac{n}{2}\,\left(1+\mu\right)}{q}{q}\,
    \sqrt{
      \frac{
        \mu\,\left(1-\mu^2\right)
      }{
        \mu + \left(\mu^2-1\right)\,\atanh{\mu}
      }
    }
  \end{gather*}
  where the growth rate of the peak coefficient is that of
  lemma~\ref{lem:pqmid3}.
\end{theorem}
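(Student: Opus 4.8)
The plan is to follow the same route used for Theorems~\ref{thm:pqsum0} and~\ref{thm:pqsum2}, now adapted to the bimodal regime (here $a<0$, since $\atanh\mu/\mu>1$ forces $a<0$ in lemma~\ref{lem:pqmag0}). First I would factor the peak coefficient out of the sum. In this regime the sequence is sharply bimodal, with its two maxima located symmetrically at $k_0=\frac{n}{2}(1+\mu)$ and $n-k_0=\frac{n}{2}(1-\mu)$, so essentially all of the mass sits in two narrow bands around these points and the central region contributes negligibly. By the symmetry $\pqbin{n}{k}{q}{q}=\pqbin{n}{n-k}{q}{q}$ the two peaks contribute equally, so I would write
\[
\pqsum{n}{q}{q}\sim 2\,\pqbin{n}{k_0}{q}{q}\,\sum_{j}\frac{\pqbin{n}{k_0-j}{q}{q}}{\pqbin{n}{k_0}{q}{q}},
\]
where $j$ ranges over integers in a neighbourhood of $0$ that is wide on the scale $\sqrt{n}$ but narrow compared with $\mu n$, so that the two bands do not overlap.

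Next I would recognise the inner sum as a Riemann sum. Setting $j=x\,\sqrt{n}$, lemma~\ref{lem:pqkvot3} gives
\[
\frac{\pqbin{n}{k_0-x\sqrt{n}}{q}{q}}{\pqbin{n}{k_0}{q}{q}}=\pqratio{n}{k_0}{x\sqrt{n}}{q}{q}\sim\exp\!\left\{2\,x^2\left(\frac{1}{\mu^2-1}+\frac{\atanh\mu}{\mu}\right)\right\}.
\]
Since consecutive values of $j$ correspond to steps $\Delta x=1/\sqrt{n}$, the sum over $j$ is $\sqrt{n}$ times a Riemann sum for the associated integral; this produces the prefactor $\sqrt{n}$ and yields the first asymptotic equivalence
\[
\pqsum{n}{q}{q}\sim 2\,\sqrt{n}\,\pqbin{n}{k_0}{q}{q}\int\limits_{-\infty}^{+\infty}\exp\!\left\{2\,x^2\left(\frac{1}{\mu^2-1}+\frac{\atanh\mu}{\mu}\right)\right\}\dd x.
\]

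It then remains to evaluate the Gaussian integral and simplify to the stated closed form. For $0<\abs{\mu}<1$ one checks that the coefficient $c=\frac{1}{\mu^2-1}+\frac{\atanh\mu}{\mu}$ is strictly negative (a small-$\mu$ expansion gives $c\sim-\tfrac{2}{3}\mu^2$), which also confirms that $k_0$ is a genuine maximum, so the integral equals $\sqrt{-\pi/(2c)}$. A short manipulation gives $-2c=\frac{2\,[\mu+(\mu^2-1)\atanh\mu]}{\mu\,(1-\mu^2)}$, whence $2\sqrt{n}\,\sqrt{-\pi/(2c)}=\sqrt{2\pi n}\,\sqrt{\mu(1-\mu^2)/(\mu+(\mu^2-1)\atanh\mu)}$, which is exactly the second asymptotic expression; the growth of the peak coefficient itself is supplied by lemma~\ref{lem:pqmid3}.

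The main obstacle will be the rigorous control of the Riemann-sum-to-integral passage. Lemma~\ref{lem:pqkvot3} is stated only as a pointwise asymptotic in $x$, whereas the argument requires it to hold uniformly for $x$ in a window $\abs{x}\le X_n$ with $X_n\to\infty$, together with a summable tail bound guaranteeing that the contribution of $\abs{j}$ large (but still within the band) is of smaller order, so that the finite sum may legitimately be replaced by the integral over all of $\mathbb{R}$. One must also verify that the mass in the central region between the two peaks, and any overlap correction near the band edges, is negligible compared with the peak contribution; given the sharp (essentially Gaussian) concentration this is plausible, but it is the step demanding genuine estimates rather than the formal manipulations above.
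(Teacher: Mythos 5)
Your proposal is correct and follows essentially the same route as the paper, which likewise obtains the sum by multiplying the peak coefficient by $2\sqrt{n}$ (two peaks, step size $1/\sqrt{n}$ in the variable $x$) and the Gaussian integral supplied by lemma~\ref{lem:pqkvot3}. You in fact supply more detail than the paper does — the sign check on the exponent's coefficient, the explicit evaluation of the integral, and the honest acknowledgment that uniformity of the pointwise asymptotic and tail control are the genuinely delicate steps, which the paper passes over in silence.
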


\section{Pochhammer bounds}\label{sec:pochhammer}
In this section we set up bounds for the $q$-Pochhammer function and
use them for giving bounds of quotients between $q$-binomial
coefficients.  Basically we mimic the upper bound in
\cite{kirousis:01} and \cite{kaporis:07} but we extend it to obtain a
lower bound as well.  They are very useful bounds so we will do this
in some detail though everything is based on standard elementary
methods.  First we need the integral estimate of a sum.  Let $f(x)$ be
a continuous, positive, decreasing function on the interval $m\le x\le
n+1$ where $m$ and $n$ are integers. Then
\begin{equation}
  \int\limits_{m}^{n+1}f(x)\,\dd x 
  \le \sum_{k=m}^nf(k) \le 
  f(m) + \int\limits_m^nf(x)\,\dd x
\end{equation}
Recall that the dilogarithm is defined as 
\begin{equation}
  \dilog{x} = 
  \sum_{n=1}^{\infty} \frac{x^n}{n^2} = 
  -\int\limits_0^x \frac{\log(1-t)}{t}\,\dd t
\end{equation}
Now let $0<a<1$ and $0<q<1$ and note that
\begin{equation}
  -\log\qpoch{a}{q}{n} = \sum_{k=0}^{n-1} -\log\left(1-a\,q^k\right)
\end{equation}
Note also that $-\log\left(1-a\,q^x\right)$ is a positive and
decreasing function for $x\ge 0$.  Take the series expansion
\begin{equation}
  -\log(1-x) = \sum_{k=1}^{\infty} \frac{x^k}{k}
\end{equation}
so that
\begin{equation}
  -\log(1-a\,q^x) = \sum_{k=1}^{\infty} \frac{\left(a\,q^x\right)^k}{k}
\end{equation}
Integration gives
\begin{gather}
  \int\limits_{0}^u -\log(1-a\,q^x) \,\dd x = 
  \sum_{k=1}^{\infty} \int\limits_0^u \frac{\left(a\,q^x\right)^k}{k}\,\dd x = 
  \sum_{k=1}^{\infty} \frac{a^k}{k}\, 
  \left[\frac{q^{k\,x}}{k\,\log q}\right]_0^u = \\
  \frac{1}{\log q}\,
  \left(
  \sum_{k=1}^{\infty} \frac{\left(a\,q^u\right)^k}{k^2} - 
  \sum_{k=1}^{\infty} \frac{a^k}{k^2}
  \right) = 
  \frac{\dilog{a\,q^u} - \dilog{a}}{\log q}
\end{gather}
Together with the integral estimates above we have
\begin{equation}
  \frac{\dilog{a\,q^n}-\dilog{a}}{\log q} \le
  -\log\qpoch{a}{q}{n} \le 
  -\log(1-a) + \frac{\dilog{a\,q^{n-1}}-\dilog{a}}{\log q}
\end{equation}
Reversing the signs and taking exponentials we finally obtain
\begin{gather}
  \qpoch{a}{q}{n} \ge
  (1-a)\,\exp\left(\frac{\dilog{a}-\dilog{a\,q^{n-1}}}{\log q}\right) \\
  \qpoch{a}{q}{n} \le
  \exp\left(\frac{\dilog{a}-\dilog{a\,q^n}}{\log q}\right)
\end{gather}

Now we turn to the $q$-binomial coefficients. Let $0\le\ell\le k\le
n/2$ and use \eqref{qbin2} to note that the ratio between coefficient
$k-\ell$ and coefficient $k$ is
\begin{equation}
  \qratio{n}{k}{\ell}{q} = \frac{\qbin{n}{k-\ell}{q}}{\qbin{n}{k}{q}} = 
  \frac{\qpoch{q^{k-\ell+1}}{q}{\ell}}{\qpoch{q^{n-k+1}}{q}{\ell}}
\end{equation}
Using the bounds for the $q$-Pochhammer function we can now bound the
ratio.  For the upper bound of the ratio we take the quotient of the
upper bound and the lower bound. The ratio $\qratio{n}{k}{\ell}{q}$
then has the upper bound
\begin{equation}\label{qubound}
  \qratio{n}{k}{\ell}{q} \le 
  \frac{
    \exp\left(
    \frac{
      \dilog{q^{k-\ell+1}}+\dilog{q^{n-k+\ell}}-\dilog{q^{k+1}}-\dilog{q^{n-k+1}}
    }{
      \log q
    }\right)
  }{
    1-q^{n-k+1}
  }
\end{equation}
and, quite analogously, it has the lower bound
\begin{equation}\label{qlbound}
  \textstyle
  \qratio{n}{k}{\ell}{q} \ge 
  \left(1-q^{k-\ell+1}\right)\,
  \exp\left(
  \frac{
    \dilog{q^{k-\ell+1}}+\dilog{q^{n-k+\ell+1}}-\dilog{q^{k}}-\dilog{q^{n-k+1}}
  }{
    \log q
  }\right)
\end{equation}
Using \eqref{pqvq} we can now obtain bounds for the quotients of
$p,q$-binomial coefficients. Note simply that
\begin{equation}\label{pqratio2}
  \pqratio{n}{k}{\ell}{p}{q} = 
  \frac{\qratio{n}{k}{\ell}{q/p}}{p^{\ell\,(n-2\,k+\ell)}}
\end{equation}
and use the bounds from \eqref{qubound} and \eqref{qlbound}.

\section{Controlling the $p,q$-binomial distribution}\label{sec:isocurves}
Choosing a $k$ for a given $n$ such that $\pqratio{n}{k}{1}{p}{q}=1$
defines a set of pairs $p,\,q$; an isocurve.  If we choose a value of
$q$, then what value should $p$ have to result in a distribution of
coefficients which has a peak at $k$ (and $n-k$), i.e. with
$\pqratio{n}{k}{1}{p}{q}=1$?  The iterative method \eqref{pqiter}
above produces the correct $p$ for any given $q$ but reveals no
information on $p$. To obtain this we need to parameterize $q$
properly and one way to do this is to set $q=1+z/n$ for some $z\le
0$. This parameterization was also used in \cite{kirousis:01} and
\cite{kaporis:07} for computing giving upper bounds on $q$-binomial
coefficients.

\subsection{Wide and flat distributions}
Let us begin with the particular distribution which has
its peak in the middle, i.e. $k=n/2$. Due to symmetry we are
interested only in the case $p>q$.  Recall that in the special case
$p=q$ the distribution has its peak at $n/2$ when
\begin{equation}
  q=\frac{n}{n+2} = 1 - \frac{2}{n} + \frac{4}{n^2} - \frac{8}{n^3} +\cdots
\end{equation}
which would correspond to $z=-2$. Let us also work under the
assumption that $p$ has the expansion
\begin{equation}
  p = 1 + \frac{y_1}{n} + \frac{y_2}{n^2} +  \cdots
\end{equation}
and determine what values of $y_1,y_2,\ldots$ we should have. First,
according to \eqref{pqratio1}, we should have
\begin{equation}
  \pqratio{n}{n/2}{1}{p}{q} = 
  \frac{p^{n/2} - q^{n/2}}{p^{n/2+1} - q^{n/2+1}} = 1
\end{equation}
which we rewrite as
\begin{equation}
  p^{n/2+1} - p^{n/2} = q^{n/2+1} - q^{n/2}
\end{equation}
Setting $p=1+y_1/n+y_2/n^2+\cdots$ and $q=1+z/n$ and performing a
series expansion using \eqref{expid}, we find the right hand side to
be
\begin{equation}
  \frac{z\,e^{z/2}}{n} - \frac{z^3\,e^{z/2}}{4\,n^2} +
  \frac{z^4\,(3\,z+16)\,e^{z/2}}{96\,n^3} + \cdots
\end{equation}
and the first two terms of the left hand side are
\begin{equation}
  \frac{y_1\,e^{y_1/2}}{n} - 
  \frac{\left(y_1^3-4\,y_2 - 2\,y_1\,y_2\right)\,e^{y_1/2}}{4\,n^2} + \cdots
\end{equation}
We solve this term by term. First
\begin{equation}
  y_1\,e^{y_1/2} = z\,e^{z/2}
\end{equation}
has the solution $y_1=2\,w$ where
\begin{equation}\label{wdef}
  w = \lambert{\frac{z}{2}\,e^{z/2}}
\end{equation}
Here $s=\lambert{x}$ is the Lambert function solving
$s\,e^s=x$. Note that for $z\ge -2$ we have $y_1=z$ but
this is not the case for $z<-2$ where we have $y_1>z$.
The second coefficient is
\begin{equation}\label{secondy}
  y_2 = \frac{w\,\left(4\,w^2-z^2\right)}{2\,(w+1)}
\end{equation}
with $w$ as before. We could go on and solve for $y_3,y_4,\ldots$ but,
for the case in hand we actually only need $y_1$. We will henceforth
drop the subscript and refer to it as simply $y$.

What shape does the distribution have at this particular $p$ and $q$?
What we are seeking is an expression for
$\pqratio{n}{n/2}{\ell}{p}{q}$ and this is where we start using
\eqref{qubound}, \eqref{qlbound} and \eqref{pqratio2}. We define
$\ell=x\,n^{3/4}$, just as we did in the case of $p=q$. First we need
an expression for $q/p$ and we let
\begin{equation}
  r = \frac{q}{p} = 
  \frac{1+\frac{z}{n}}{1+\frac{2\,w}{n}+\cdots} = 
  1 + \frac{z-2\,w}{n} + \cdots
\end{equation}
where $w$ is defined as in \eqref{wdef}. We are now ready to compute
the limits of the upper bound \eqref{qubound} and lower bound
\eqref{qlbound}.  It turns out that the limits of these bounds
coincide and we receive
\begin{equation}\label{pqratio3}
  \pqratio{n}{n/2}{x\,n^{3/4}}{p}{q} \sim
  \exp\left(\frac{w\,z\,(2\,w+z)\,x^4}{6}\right)
\end{equation}
Note that the special case $z=-2$ corresponds to $p=q$ and gives the
coefficient $-4/3$ of $x^4$.  The calculations were performed with
Mathematica and are much to unwieldy to fit in this paper. We have
prepared a Mathematica notebook that performs the calculations step by
step, using some practical transformation rules. What we compute is
actually the limit of the logarithm of the upper and lower bound. The
steps are as follows; compute series expansions of the different
powers of $r=q/p$, use them inside the dilogarithms and then compute
their series expansions, add the dilogarithms and the series
expansions of the logarithms of the remaining factors. Some
transformations of this expression helps Mathematica to take the limit
that gives the result.  

Figure~\ref{fig:pq1} demonstrates how the asymptotic ratio is achieved
with increasing $n$. It shows $\pqratio{n}{n/2}{x\,n^{3/4}}{p}{q}$ for
$z=-9$ at $y=2\,w=-0.10539\ldots$ and the asymptotic ratio is given by
\eqref{pqratio3}, that is, $e^{-0.719718\,x^4}$. The red curve is the
asymptote and the blue curves are for finite $n$ where the curves for
larger $n$ are closer to the asymptote.
\begin{figure}[!ht]
  \begin{center}
    \includegraphics[width=0.99\textwidth]{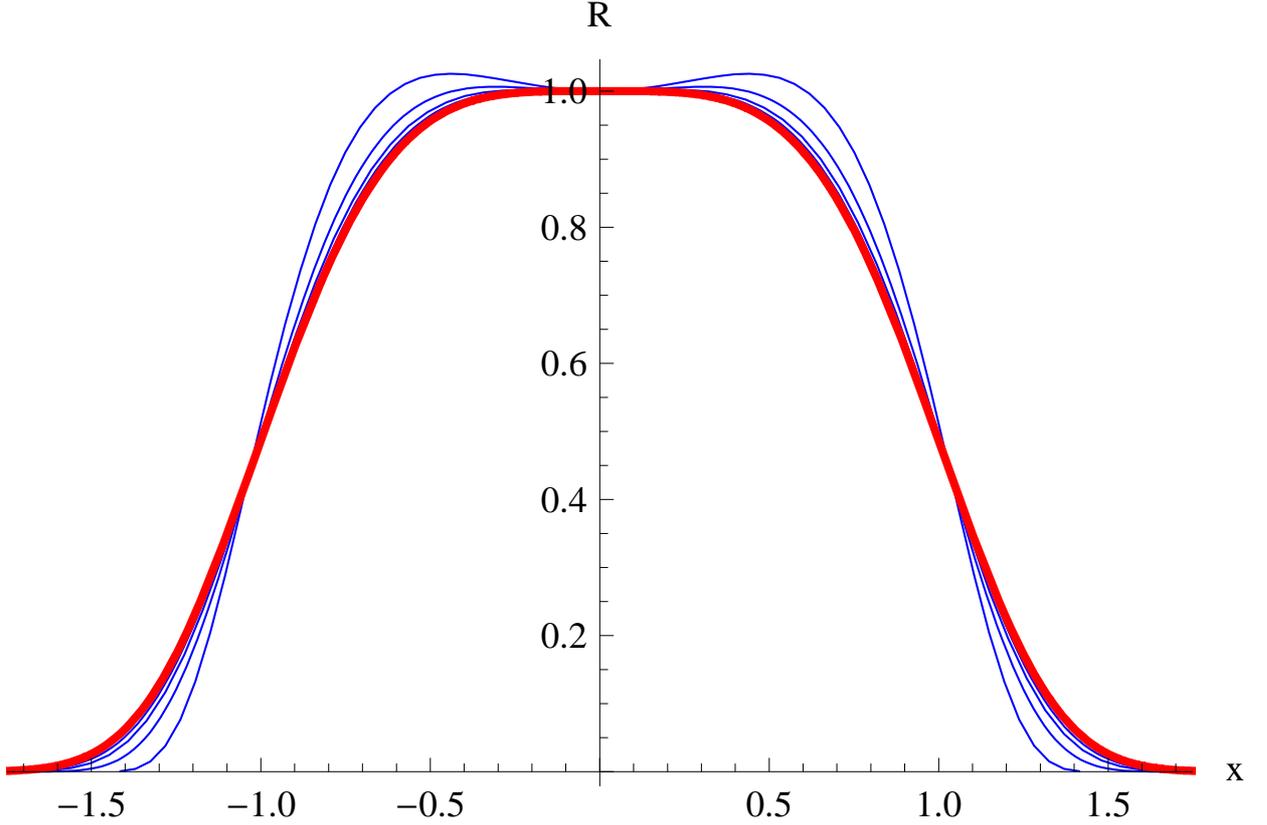}
  \end{center}
  \caption{$\pqratio{n}{n/2}{x\,n^{3/4}}{p}{q}$ (blue curves) versus
    $x$ at $z=-9$ and $y=2\,w=-0.105391$ for $n=2^6$, $2^8$,
    $2^{10}$ and $2^{12}$. The red curve is the asymptote
    $e^{-0.719718\,x^4}$.}
    \label{fig:pq1}
\end{figure}

\subsection{Wide and double-peaked distributions}
Changing $p$ only slightly, say on the order of $1/n^{3/2}$, allows us
to move around in the region where the distribution is wide.  The end
result is that with $q=1+z/n$ and $p=1+2\,w/n+a/n^{3/2}$ we get
\begin{equation}\label{pqratio4}
  \pqratio{n}{n/2}{x\,n^{3/4}}{p}{q} \sim
  \exp\left(
  \frac{w\,z\,(2\,w+z)\,x^4}{6}+
  \frac{a\,z\,(1+w)\,x^2}{2\,w-z}
  \right)
\end{equation}
The ratio expression gets nicer if we change the coefficient of $1/n^{3/2}$.
We suggest the following parametrization instead;
with $q=1+z/n$ let 
\begin{equation}\label{pexpr1}
  p = 1 + \frac{2\,w}{n} + 
  \frac{
    a\,w\,\left(z^2-4\,w^2\right)
  }{
    3\,(1+w)\,n^{3/2}
  }
\end{equation}
and we receive
\begin{equation}\label{pqratio5}
  \pqratio{n}{n/2}{x\,n^{3/4}}{p}{q} \sim
  \exp\left(
  \frac{w\,z\,(2\,w+z)}{6}\,\left(x^4 - 2\,a\,x^2\right)
  \right)
\end{equation}
This puts the maximum at $x=\pm \sqrt{a}$ for $a>0$ and at $x=0$ for
$a\le 0$.

Figure~\ref{fig:pq2} works like figure~\ref{fig:pq1} but here with the
parameter $a$ set to $1$. It shows
$\pqratio{n}{n/2}{x\,n^{3/4}}{p}{q}$ for $z=-9$ with
$p=1-0.105391/n-1.50171/n^{3/2}$ and the asymptotic ratio is given by
\eqref{pqratio5}, that is, $e^{-0.719718\,(x^4-2\,x^2)}$. The red
curve is the asymptote and the blue curves are for finite $n$ where
the larger $n$ are closer to the asymptote. Had we set $a<0$ the
distributions would still be wide but with a single peak in the middle.
\begin{figure}[!ht]
  \begin{center}
    \includegraphics[width=0.99\textwidth]{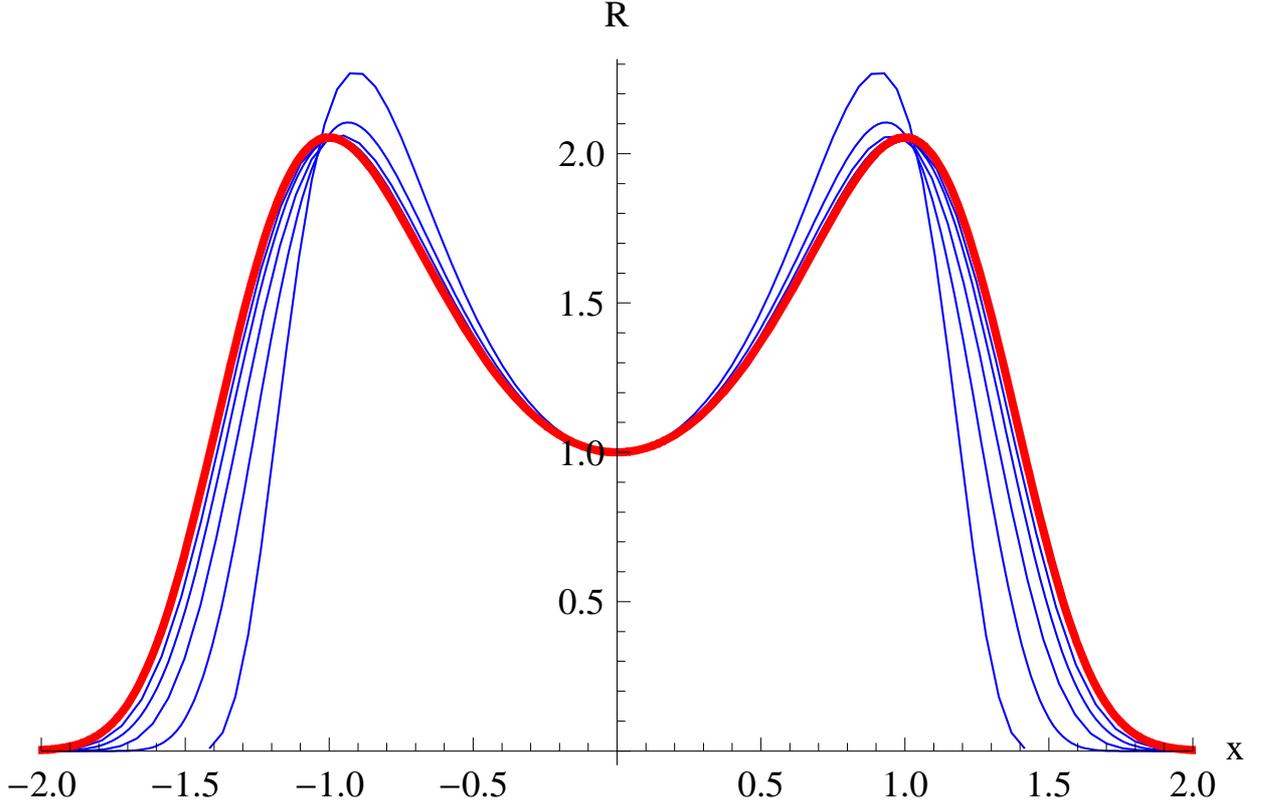}
  \end{center}
  \caption{$\pqratio{n}{n/2}{x\,n^{3/4}}{p}{q}$ (blue curves) versus
    $x$ at $z=-9$ and $a=1$ so that $q=1-9/n$ and
    $p=1-0.105391/n-1.50171/n^{3/2}$ for $n=2^6$, $2^8$, $2^{10}$,
    $2^{12}$ and $2^{14}$. The red curve is the asymptote
    $e^{-0.719718\,(x^4-2\,x^2)}$.}
    \label{fig:pq2}
\end{figure}

\subsection{Peakish distributions}
If we instead choose a middle ratio of $1+a/n$ then we receive a
sharply peaked distribution.  Say that we want
\begin{equation}\label{pqratio8}
  \pqratio{n}{n/2}{1}{p}{q} = 
  \frac{p^{n/2} - q^{n/2}}{p^{n/2+1} - q^{n/2+1}} = 1 + \frac{a}{n}
\end{equation}
for $p=1+y_1/n+\cdots$ and $q=1+z/n$ as before.  Note that $a=-2$
corresponds to a binomial distribution, so we are usually interested
in the case $-2<a<0$. After expanding the equation we receive as
before, from the first term of the left and right hand side, the
equation
\begin{equation}
  (a+y_1)\,e^{y_1/2} = (a+z)\,e^{z/2}
\end{equation}
which has the solution $y_1 = 2\,w-a$ where
\begin{equation}\label{wexpr8}
  w = \lambert{\frac{a+z}{2}\,e^{(a+z)/2}}
\end{equation}
The second equation, which is slightly longer, has the solution
\begin{equation}
  y_2 = \frac{
    (a-2\,w+z)\,\left(a^2\,(w+2)-2\,a\,w^2-w\,z\,(2\,w+z)\right)
  }{
    2\,(w+1)\,(a+z)
  }
\end{equation}
but we do not really need it at this moment. This gives a
distribution with a width of the order $\sqrt{n}$. Computing the limit
ratio gives us
\begin{equation}\label{pqratio6}
  \pqratio{n}{n/2}{x\,\sqrt{n}}{p}{q} \sim e^{a\,x^2}
\end{equation}
thus giving us an essentially gaussian distribution. Note that it does
not depend on $z$ in its current form. Of course, we expect other
terms to depend on $z$ but these vanish when $n\to\infty$.

\subsection{Two separate peaks}
Suppose that we want the peaks located outside the middle. Defining
$k=\frac{n}{2}\,(1+\mu)$ for $0<\abs{\mu}<1$ means that we move the
peaks out from the middle and that we move to another isocurve.  We
keep $q=1+z/n$ and $p=1+y_1/n+y_2/n^2+\cdots$ and solve
$\pqratio{n}{k}{1}{p}{q} = 1$, i.e.
\begin{equation}
  p^{n-k+1}-p^k = q^{n-k+1} - q^k
\end{equation}
With $k=\frac{n}{2}\,(1+\mu)$ we use \eqref{expid} on both sides and
find the equation
\begin{equation}
  \exp\left(\frac{y_1}{2}\,(1-\mu)\right) - 
  \exp\left(\frac{y_1}{2}\,(1+\mu)\right) =
  \exp\left(\frac{z}{2}\,(1-\mu)\right) - 
  \exp\left(\frac{z}{2}\,(1+\mu)\right)
\end{equation}
which we rewrite as 
\begin{equation}
  e^{y_1/2}\,\sinh{\frac{\mu\,y_1}{2}} = e^{z/2}\,\sinh{\frac{\mu\,z}{2}}
\end{equation}
At this point it would be appropriate to define the function
$s=\Lambert{x}{\mu}$, for $0<\mu<1$, as the maximum solution to the
equation
\begin{equation}
  x = e^s\,\sinh\left(s\,\mu\right)
\end{equation}
Note here that the function $e^x\,\sinh(\mu\,x)$ has a minimum at
$(-\atanh\mu)/\mu$ for $0<\mu<1$ and, due to symmetry, a maximum at
the same point for $-1<\mu<0$. The function $\Omega$ returns a
value in the interval
\begin{equation}
  -\frac{\atanh{\mu}}{\mu}<s<0
\end{equation}
The solution sought in our equation is thus $y_1=2\,w$ where
\begin{equation}\label{Wdef}
  w = \Lambert{
    e^{z/2}\,\sinh\frac{\mu\,z}{2}
  }{
    \mu
  }
\end{equation}
We should mention that $\Lambert{x}{\mu}$ is a natural extension of
$\lambert{x}$. In fact, if we let
$w=\lambert{x\,e^x}$ then
\begin{equation}
  \Lambert{e^x\,\sinh\mu\,x}{\mu} = 
  w + \frac{w\,(x^2-w^2)}{6\,(1+w)}\,\mu^2 +\cdots
\end{equation}
giving a good approximation for small values of $\mu$. The reader may
recall \eqref{secondy} above for comparison.

Having computed $p=1+y_1/n$ we compute the upper and lower bounds of
the ratio as before.  Unfortunately, the ratio has the rather ghastly
expression
\begin{multline}\label{pqratio7}
  \pqratio{n}{\frac{n}{2}\,(1+\mu)}{x\,\sqrt{n}}{p}{q} \sim
  \\ \exp\left\{\frac{x^2}{2}\,\left(2\,w+z-\frac{e^{\mu w+\frac{\mu
        z}{2}} \left(e^{2 w}-e^z\right) (z-2 w)}{-e^{2 \mu
      w+w+\frac{z}{2}}+e^{\mu w+2 w+\frac{\mu z}{2}}+e^{\mu
      w+\frac{\mu z}{2}+z}-e^{w+\mu z+\frac{z}{2}}}\right)\right\}
\end{multline}
where $w$ is defined by \eqref{Wdef}.

In figure~\ref{fig:pq3} we show how the finite cases approach their
asymptote for $\mu=1/3$ and $z=-9$. This gives 
$p=1-0.153208/n$ and the asymptotic ratio is given by
\eqref{pqratio7}, that is, $e^{-0.103551\,x^2}$. The red
curve is the asymptote and the blue curves are for finite $n$ where
the larger $n$ are closer to the asymptote.
\begin{figure}[!ht]
  \begin{center}
    \includegraphics[width=0.99\textwidth]{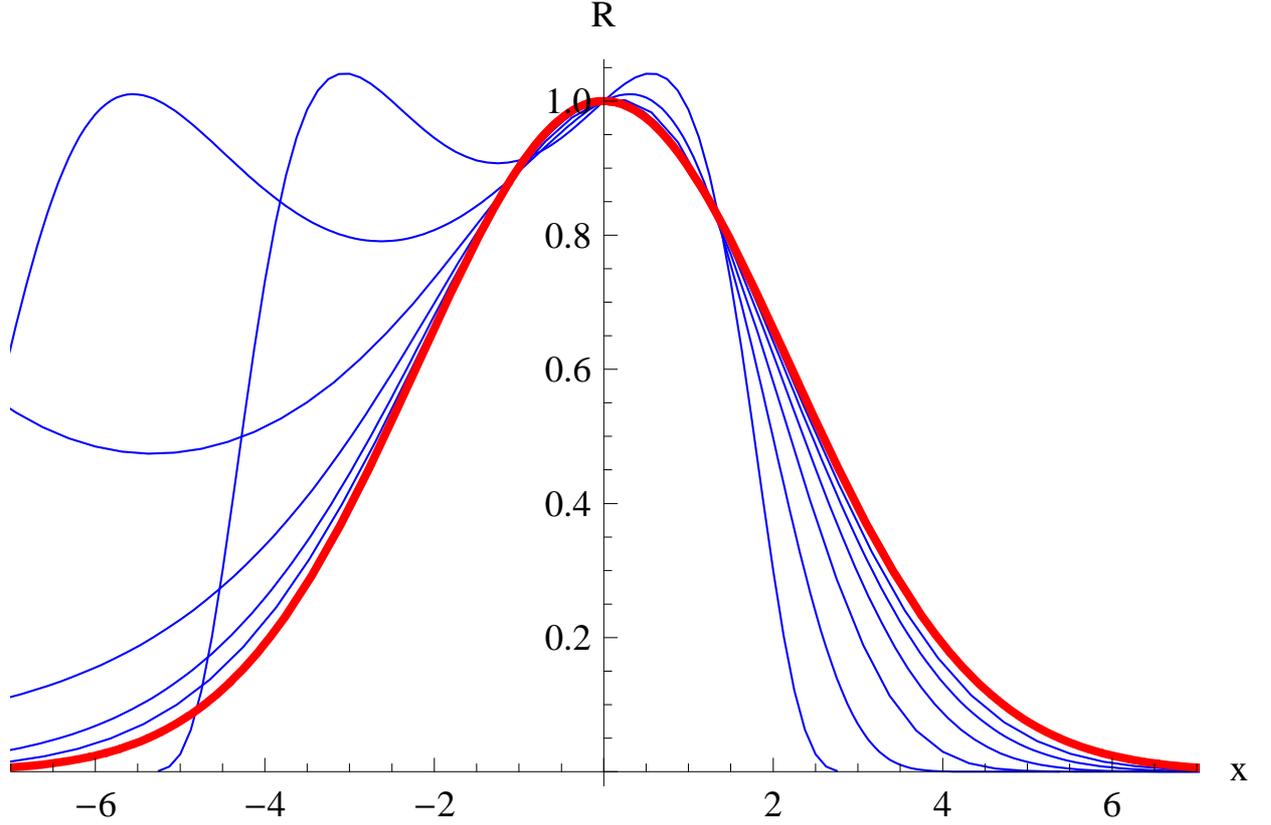}
  \end{center}
  \caption{$\pqratio{n}{n/2\,(1+\mu)}{x\,\sqrt{n}}{p}{q}$ (blue
    curves) versus $x$ at $\mu=1/3$ and $z=-9$ so that $q=1-9/n$ and
    $p=1-0.153208/n$ for $n=2^6$, $2^8$, $2^{10}$, $2^{12}$, $2^{14}$
    and $2^{16}$. The red curve is the asymptote
    $e^{-0.103551\,x^2}$.}
    \label{fig:pq3}
\end{figure}

\section{Moving along a diagonal}\label{sec:slope}
With $q=1+z/n$ and using the expressions given by \eqref{pexpr1} and
\eqref{pqratio5}, we can move around in the region where
$\pqratio{n}{n/2}{1}{p}{q}$ is very close to $1$. It is understood
here that a change in the parameter $a$ only changes $p$, i.e. we only
move in the $p$-direction and keep $q$ fixed. Let us instead say that
we want to move, ever so slightly, in both the $p$- and
$q$-direction. If the movement is on the order $1/n^{3/2}$ we can
translate it to a different parameter $a$ in the expression given by
\eqref{pexpr1}.

Given a $z$ in $q=1+z/n$ the starting point for $p$ is $p=1+y/n$ with
$y=2\,w$ and $w$ given by \eqref{wdef}. Let us think of the starting
coordinate as $(y_0,\,z_0)$. We wish to move from this point
$s/\sqrt{n}$ steps in the direction $t$, where the limit $t\to\infty$
corresponds to the case where we only move in the direction
$y$ (or $p$). The new point is now
\begin{equation}
  (y_1,\,z_1) = 
  \left(
  y_0+\frac{s\,t}{\sqrt{1+t^2}\,\sqrt{n}},\,
  z_0+\frac{s}{\sqrt{1+t^2}\,\sqrt{n}}
  \right)
\end{equation}
and note here that $\sqrt{(y_1-y_0)^2+(z_1-z_0)^2} =
\abs{s}/\sqrt{n}$. 

Having moved in the $z$-direction from $z_0$ to $z_1$ we now compute
the point $y_2$ so that $(y_2,\,z_1)$ stays on the isocurve with
$\pqratio{n}{n/2}{1}{p}{q}=1$. A small change in $z$ requires only a
small change in $y$ to preserve this property.

With $w_0=\lambert{z_0\,e^{z_0/2}/2}$ we have the series expansion
\begin{equation}
  \lambert{\frac{(z_0+x)\,\exp\left((z_0+x)/2\right)}{2}} = 
  w_0 + \frac{x\,w_0\,(z_0+2)}{2\,z_0\,(w_0+1)} + \cdots
\end{equation}
giving an approximation for small $x$. We want to set 
\begin{equation}
  x=\frac{s}{\sqrt{1+t^2}\,\sqrt{n}}
\end{equation}
and thus we have
\begin{equation}
  y_2 = y_0 + \frac{s\,w_0\,(z_0+2)}{z_0\,(w_0+1)\,\sqrt{1+t^2}\,\sqrt{n}}
\end{equation}
The difference $y_1 - y_2$ is what we want:
\begin{gather}
  y_1 - y_2 = \frac{s\,t}{\sqrt{1+t^2}\,\sqrt{n}} - 
  \frac{s\,w_0\,(z_0+2)}{z_0\,(w_0+1)\,\sqrt{1+t^2}\,\sqrt{n}} = \\
  \frac{s}{\sqrt{1+t^2}\,\sqrt{n}}\,
  \left(t - \frac{w_0\,(z_0+2)}{z_0\,(w_0+1)}\right)
\end{gather}
Now solve 
\begin{equation}
  y_1 - y_2 = \frac{a\,w_0\,(z_0^2-4\,w_0^2)}{3\,(w_0+1)\,\sqrt{n}}
\end{equation}
where the right hand side is extracted from \eqref{pexpr1}.
This gives 
\begin{equation}
  a = \frac{3\,s}{\sqrt{1+t^2}}\,
  \frac{t\,z_0\,(w_0+1)-w_0\,(z_0+2)}{z_0\,w_0\,(z_0^2-4\,w_0^2)}
\end{equation}
which then gives the ratio in \eqref{pqratio5}.

As an example, we take the case $p=q$ which corresponds to $t=1$ and
$z_0\to -2^-$. 
Taking the limit of the expression for $a$ gives
\begin{equation}
  \lim_{z\to -2^-} a = -\frac{3\,s}{8\,\sqrt{2}}
\end{equation}
which corresponds to the coefficient ratio
\begin{equation}
  \pqratio{n}{n/2}{x\,n^{3/4}}{p}{q} \sim 
  \exp\left(-\frac{4}{3}\,(x^4-2\,a\,x^2)\right) =
  \exp\left(-\frac{4}{3}\,x^4 - \frac{s}{\sqrt{2}}\,x^2\right)
\end{equation}
Compare this with lemma~\ref{lem:pqkvot1}; the parameter $a$ in the
lemma corresponds to taking $s=a\,\sqrt{2}$ steps in direction $t=1$
and this gives the same coefficient ratio as in the lemma.

\section{Moments}\label{sec:moments}
Once we have the ratios \eqref{pqratio3}, \eqref{pqratio5},
\eqref{pqratio6}, \eqref{pqratio7} it is an easy task to compute
moments of the distributions. Let us do this for the most interesting
case of \eqref{pqratio5}. First, to make the notation somewhat
simpler, denote $\phi = \fc{z} = -w\,z\,(2\,w+z)/6$, i.e. $\phi>0$,
with $w$ as in \eqref{wdef}, so that
\begin{equation}
  \pqratio{n}{n/2}{x\,n^{3/4}}{p}{q} \sim e^{-\phi\,(x^4-2\,a\,x^2)}
\end{equation}
where $a$ is defined by \eqref{pexpr1}. We use the notation
\begin{equation}
  \kmom{m} = \mean{\abs{k-\frac{n}{2}}^m}
\end{equation}
for the $m$th moment of the probability distribution of
$k=0,\ldots,n$.  Define also
\begin{equation}
  \xmom{m} = \int\limits_{-\infty}^{+\infty} \abs{x}^m \,
  \exp\left(-\phi\,(x^4-2\,a\,x^2)\right)\,\dd x,\quad m\ge 0
\end{equation}
so that the $m$th  moment becomes
$\mean{\abs{x}^m}=\xmom{m}/\xmom{0}$.  For $m=0$ we have
\begin{equation}
  \xmom{0} = \,\left\{\begin{array}{ll}
  \Big.\frac{\pi\,\sqrt{a}}{2}\,\exp\left(\phi\,a^2/2\right)\,
  \left(
  \besseli{1/4}{\phi\,a^2/2} + \besseli{-1/4}{\phi\,a^2/2}
  \right),& \textrm{for $a>0$} \\
  \Big.\frac{\sqrt{-a}}{\sqrt{2}}\,\exp\left(\phi\,a^2/2\right)\,
  \besselk{1/4}{\phi\,a^2/2},& \textrm{for $a<0$} \\
  \Big.\frac{\gammaf{1/4}}{2\,\phi^{1/4}},&\textrm{for $a=0$}
  \end{array}\right.
\end{equation}
so that 
\begin{equation}
  1 = \sum_{k=-n/2}^{n/2} \pqpr{n}{n/2+k}{p}{q} \sim
  n^{3/4}\,\pqpr{n}{n/2}{p}{q}\,\xmom{0}
\end{equation}
Next, for $m=1$
\begin{equation}
  \xmom{1} = \frac{\sqrt{\pi}}{2\,\sqrt{\phi}}\,
  \exp\left(\phi\,a^2\right)\,
  \left(1+\erf{a\,\sqrt{\phi}}\right)
\end{equation}
where $-1<\erf{x}<1$ is the error function.
and thus
\begin{equation}
  \kmom{1} = \mean{\abs{k-\frac{n}{2}}} \sim
  n^{3/4}\,\frac{\xmom{1}}{\xmom{0}}
\end{equation}
In general we have for $m\ge 0$ that 
\begin{equation}\label{kmomm}
  \kmom{m} = \mean{\abs{k-\frac{n}{2}}^m} \sim
  n^{3\,m/4}\,\frac{\xmom{m}}{\xmom{0}}
\end{equation}
where $\xmom{m}$ is given by
\begin{equation}\label{pqmom}
  \left\{\begin{array}{ll}
  \Big.\!\!\!\frac{\phi^{-(m+1)/4}}{2}\,\left(
  \gammaf{\frac{m+1}{4}}\,\hyper{\frac{m+1}{4}}{\frac{1}{2}}{\phi\,a^2} +
  2\,a\,\sqrt{\phi}\,\gammaf{\frac{m+3}{4}}\,
  \hyper{\frac{m+3}{4}}{\frac{3}{2}}{\phi\,a^2}\right),
  &a>0 \\
  \Big.\!\!\!\frac{\phi^{-(m+1)/4}}{2^{(m+1)/2}}\,\gammaf{\frac{m+1}{2}}\,
  \hyperu{\frac{m+1}{4}}{\frac{1}{2}}{\phi\,a^2},
  &a<0 \\
  \Big.\!\!\!\frac{\phi^{-(m+1)/4}}{2}\,\gammaf{\frac{m+1}{4}},
  &a=0 \\
  \end{array}\right.
\end{equation}
Here $\hyper{a}{b}{c}$ and $\hyperu{a}{b}{c}$ denote the confluent
hypergeometric functions of the first and second kind respectively.
If we want to compute cumulant ratios we first need moment ratios
which of course is easy now.  For example, in the case of $a=0$ we
have
\begin{align}
  \frac{\kmom{2}}{\kmom{1}^2} \sim &
  \frac{\xmom{0}\,\xmom{2}}{\xmom{1}^2} = \sqrt{2} = 1.4142\ldots \\
  \frac{\kmom{4}}{\kmom{2}^2} \sim &
  \frac{\xmom{0}\,\xmom{4}}{\xmom{2}^2} = \frac{\gammaf{1/4}^4}{8\,\pi^2}
  = 2.1884\ldots \\
  \frac{\kmom{6}}{\kmom{2}^3} \sim &
  \frac{\xmom{0}^2\,\xmom{6}}{\xmom{2}^3} = \frac{3\,\gammaf{1/4}^4}{8\,\pi^2}
  = 6.5653\ldots
\end{align}
These moment ratios are the same as obtained in the 5-dimensional
Ising model, see e.g. \cite{brezin:85}.

\section{Fine-tuning the exponents}\label{sec:exponents}
Note \eqref{pqratio5} and, as before, keep $\fc{z} =
-w\,z\,(2\,w+z)/6$ where $w$ is defined by \eqref{wdef}. Recall that
the first and second absolute moments obtained from \eqref{pqmom} for
$a=0$ are
\begin{align}\label{kmom12}
  \kmom{1} & \sim n^{3/4}\,\frac{\xmom{1}}{\xmom{0}} = 
  n^{3/4}\,\frac{\sqrt{\pi}}{\gammaf{1/4}}\,\frac{1}{\fc{z}^{1/4}} 
  \propto 
  \frac{
    n^{3/4}
  }{
    \fc{z}^{1/4}
  } \\
  \kmom{2} & \sim n^{3/2}\,\frac{\xmom{2}}{\xmom{0}} =
  n^{3/2}\,\frac{\pi\,\sqrt{2}}{\gammaf{1/4}^2}\,\frac{1}{\sqrt{\fc{z}}}
  \propto 
  \frac{
    n^{3/2}
  }{
    \sqrt{\fc{z}}
  }
\end{align}
The argument $z$ is allowed to depend on $n$ but probably not to a
high order. At this point it is not clear how $z$ may depend on $n$
for the calculations leading to \eqref{pqratio5} and $\fc{z}$ to
work. We will assume, for the moment (see the end of this section),
that the expressions for the moments of \eqref{pqmom} are valid when
$z=\bigo{\log n}$.

The series expansion of $\lambert{x}$ is
\begin{equation}
  w=\lambert{x} = x - x^2 + \frac{3\,x^3}{2} - \frac{8\,x^4}{3} + \cdots
\end{equation}
With $x=e^{z/2}\,z/2$, and note that $z$ is negative, we have
\begin{equation}
  w = \lambert{x} = \lambert{\frac{z\,e^{z/2}}{2}} =
  \frac{z\,e^{z/2}}{2} - \frac{z^2\,e^z}{4} +
  \frac{3\,z^3\,e^{3\,z/2}}{16} + \cdots
\end{equation}
so that 
\begin{equation}\label{fcseries}
  \fc{z} = \frac{-z\,w\,(2\,w+z)}{6} = 
  -\frac{z^3\,e^{z/2}}{12} - \frac{z^3\,e^z}{12} + \frac{z^4\,e^z}{24} + \cdots
\end{equation}
Set $z=\lambda_0 + \lambda_1\,\log n+\lambda_2\,\log\log n +
\lambda_3\,\log\log\log n$ with $\lambda_1, \lambda_2\le 0$ and focus on
the first term of \eqref{fcseries}.
\begin{multline}\label{fczlogn}
  \fc{z} \sim 
  \frac{-1}{12}\,\left( \lambda_0 +
  \lambda_1\,\log n 
  +\lambda_2\,\log\log n + \lambda_3\,\log\log\log
  n\right)^3\\ 
  e^{\lambda_0/2}\, n^{\lambda_1/2}\,
  \log^{\lambda_2/2}n\, \left(\log\log n\right)^{\lambda_3/2}
\end{multline}
We are interested in two special cases. First choose $\lambda_1<0$,
$\lambda_2=-6$ and $\lambda_3=0$. This gives
\begin{equation}
  \fc{z} \sim \frac{(-\lambda_1)^3}{12}\,e^{\lambda_0/2}\,n^{\lambda_1/2}
\end{equation}
Combining this with \eqref{kmom12} we receive
\begin{equation}\label{logmom1}
  \kmom{1} \sim 
  \frac{3^{1/4}\,\sqrt{2\,\pi}}{\gammaf{1/4}}\,
  \frac{n^{3/4-\lambda_1/8}}{(-\lambda_1)^{3/4}\,e^{\lambda_0/8}}
\end{equation}
and
\begin{equation}\label{logmom2}
  \kmom{2} \sim 
  \frac{2\,\sqrt{6}\,\pi}{\gammaf{1/4}^2}\,
  \frac{n^{3/2-\lambda_1/4}}{(-\lambda_1)^{3/2}\,e^{\lambda_0/4}}
\end{equation}
Had we let $\lambda_2=0$, instead of $\lambda_2=-6$, then we would
have ended up with a factor $\log^{3/4}n$ in the denominator of
\eqref{logmom1} and a factor $\log^{3/2}n$ in the denominator of
\eqref{logmom2}. For the second case we choose $\lambda_1=0$,
$\lambda_2<0$ and $\lambda_3=-6$. We get
\begin{equation}
  \fc{z} \sim \frac{(-\lambda_2)^3}{12}\,e^{\lambda_0/2}\,\log^{\lambda_2/2}n
\end{equation}
This together with \eqref{kmom12} gives us
\begin{equation}\label{loglogmom1}
  \kmom{1} \sim 
  \frac{3^{1/4}\,\sqrt{2\,\pi}}{\gammaf{1/4}}\,
  \frac{n^{3/4}\,\log^{-\lambda_2/8} n}{(-\lambda_2)^{3/4}\,e^{\lambda_0/8}}
\end{equation}
and
\begin{equation}\label{loglogmom2}
  \kmom{2} \sim 
  \frac{2\,\sqrt{6}\,\pi}{\gammaf{1/4}^2}\,
  \frac{n^{3/2}\,\log^{-\lambda_2/4}n}{(-\lambda_2)^{3/2}\,e^{\lambda_0/4}}
\end{equation}
These expressions obviously converge extremely slowly and are probably
not of any use for $n$ that might occur in practical situations.

We have managed to verify \eqref{logmom1} and \eqref{logmom2} by using
the method described in section~\ref{sec:isocurves}, again using
Mathematica, only in the special case $z=-\log n$, i.e. $\lambda_1=-1$,
$\lambda_0=\lambda_2=\lambda_3=0$. We could then confirm that
\begin{equation}\label{pqratio9}
  \pqratio{n}{n/2}{\frac{x\,n^{7/8}}{\log^{3/4}n}}{p}{q} \sim
  \exp\left(\frac{-x^4}{12}\right)
\end{equation}
Computing the moments of this distribution produces the same result as
setting $\lambda_1=-1$ and $\lambda_0=\lambda_2=\lambda_3=0$ in
\eqref{fczlogn} and then computing the moments in the same the way we
obtained \eqref{logmom1} and \eqref{logmom2}.  A more general
computation seems not to be within reach with our current set of tools
though. To conclude this section we note that the exponent of $n$ in
\eqref{pqratio9} is $3/4-\lambda_1/8$. For this exponent to stay less
than one we thus need $\lambda_1>-2$, giving us a bound on $z$.

\section{The Ising model}\label{sec:ising}
A state $\tau$ on a graph $G$ is a function from the set of vertices
to $\{\pm 1\}$. There are thus $2^n$ states for a graph on $n$
vertices.  We define the energy of a state as $E(\tau) =
\sum_{ij}\tau_i\tau_j$ where the sum is taken over all edges $ij$ of
$G$. The magnetisation is defined as $M(\tau) = \sum_i\tau_i$ with the
sum taken over all vertices $i$ of $G$. Note that $-n\le M\le n$ and
it only takes every alternate value,
i.e. $M\in\{-n,-n+2,-n+4,\ldots,n-4,n-2,n\}$. We will often need to
refer to it in terms of how many negative spins the state has. If $k$
spins are negative then $M=n-2\,k$.

The partition function of the Ising model is defined for any graph $G$
as
\begin{equation}
  \fz(G;\, x,\, y) = \sum_{\tau}x^{E(\tau)}\,y^{M(\tau)} =
  \sum_{E, M} a(E,M)\,x^E\,y^M
\end{equation}
The coefficients $a(E,M)$ then are defined as the number of states
with energy $E$ and magnetisation $M$. Denote the number of
states at energy $E$ by $a(E)=\sum_M a(E,M)$. Note that the number of
states at magnetisation $M$ is just $\binom{n}{k}$, where $k=(n-M)/2$
is the number of negative spins. Let also $\fz_k$ denote the terms of
$\fz$ with magnetisation $M=n-2\,k$ for a graph on $n$ vertices, so
that $\fz=\fz_0+\fz_2+\cdots+\fz_n$, i.e. $\fz_k$ are the terms
corresponding to $k$ negative spins.

If we evaluate the partition function in $x=e^K$ and $y=e^H$ with $K$
the dimensionless coupling, or inverse temperature $J/k_BT$, and
$H=h/k_BT$ as the dimensionless external magnetic field, we obtain the
physical partition function denoted $\fZ=\fZ(G;\,K,\,H) =
\fz(G;\,e^K,\,e^H)$, though we are usually interested only in the case
when $H=0$ (or $y=1$).  Analogously, we write
$\fZ=\fZ_0+\cdots+\fZ_n$. The dimensionless and normalised free energy
is defined as $\fF=\left(\log\fZ\right)/n$.  From the derivatives of
the free energy we can now obtain other physical quantities such as
the internal energy $\partial \fF/\partial K$ and the specific heat
$\partial^2 \fF/\partial K^2$ though we shall not be needing the
latter for this investigation.

We assume the Boltzmann distribution on the states so that (with
$H=0$) the probability for state $\tau$ is
\begin{equation}
  \pr{\tau} = \frac{e^{K\,E(\tau)}}{\fZ}
\end{equation}
We have then especially that the probability for energy $E$ is
\begin{equation}
  \pr{E} = \frac{a(E)\,e^{K\,E}}{\fZ}
\end{equation}
and the probability for magnetisation $M$ is
\begin{equation}
  \pr{M} = \frac{1}{\fZ}\,\sum_E a(E,M)\,e^{K\,E} = \frac{\fZ_k}{\fZ}
\end{equation}
where $M=n-2\,k$.

Denote by $K^*$ the coupling where
$\fZ_{n/2-1}=\fZ_{n/2}=\fZ_{n/2+1}$. This coupling will correspond to
the relation $\pqratio{n}{n/2}{1}{p}{q}=1$ for some choice of $p,q$.
We define the (spontaneous) normalised magnetisation
$\amu=\mean{\abs{M}}/n$ and the (spontaneous) susceptibility
$\achi=\var{\abs{M}}/n=\left(\mean{M^2}-\mean{\abs{M}}^2\right)/n$.
The pure susceptibility is simply
$\chi=\var{M}/n=\mean{M^2}=4\,\kmom{2}/n$.  Since $M=n-2\,k$ we thus
have $\amu=2\,\kmom{1}/n$ and $\achi=4\,(\kmom{2}-\kmom{1}^2)/n$.
Recall the traditional finite-size scaling laws which claim that in
the critical region, i.e. near $K_c$, $\amu\propto L^{-\beta/\nu}$ and
$\achi\propto L^{\gamma/\nu}$. Being near $K_c$ means that
$\abs{K-K_c}\propto L^{-1/\nu}$ and we especially expect $K^*$ to
belong to this region. Though the high- and low-temperature exponents
may or may not be equal for three dimensions, see \cite{cubeart} for
an in-depth numerical investigation of this matter, the details of
these exponents are not important for our present investigation. What
matters is that there are exponents that guide the growth of e.g. the
susceptibility near $K_c$.

\subsection{The complete graph}\label{sec:complete}
For a complete graph, denoted $K_n$ on $n$ vertices and $\binom{n}{2}$
edges the partition function is easy to compute. Suppose $k$ of the
vertices are assigned spin $-1$ and the other $n-k$ have spin $+1$.
The magnetisation is obviously $M=n-2\,k$ and the energy is
\begin{equation}
  E = 
  \binom{k}{2} + \binom{n-k}{2} - k\,(n-k) = 
  \binom{n}{2} - 2\,k\,(n-k)
\end{equation}
The partition function is then
\begin{equation}
  \fz(K_n;\, x,\,y) = 
  x^{\binom{n}{2}}\,y^{n}\,
  \sum_{k=0}^n \binom{n}{k}\,
  \left(\frac{1}{x^2}\right)^{k\,(n-k)}\,
  \left(\frac{1}{y^2}\right)^{k}
\end{equation}
and with $y=1$ we have
\begin{equation}
  \fz(K_n;\, x,\,1) = 
  x^{\binom{n}{2}}\,
  \sum_{k=0}^n \pqbin{n}{k}{q}{q} =
  x^{\binom{n}{2}}\,\pqsum{n}{q}{q}
\end{equation}
where $q=1/x^2$. 
Thus we have
\begin{equation}\label{zkn}
  \fZ(K_n;\,K,\,0) = 
  \exp\left\{K\,\binom{n}{2}\right\}\,\pqsum{n}{q}{q}
\end{equation}
where $q=\exp(-2\,K)$.
Obviously we have
\begin{equation}
  \pr{M=n-2\,k} = \pqpr{n}{k}{q}{q}
\end{equation}
Since we have defined the critical temperature as the $K=K^*$ where
the middle ratio is $1$, i.e. $\pr{M=-2}=\pr{M=0}=\pr{M=+2}$ then this
corresponds to the point where $\pqratio{n}{n/2}{1}{q}{q}=1$, which
takes place at $q=n/(n+2)$ as we saw in lemma~\ref{lem:pqstar}.  Thus
$K^*=\frac{1}{2}\,\log\left(1+\frac{2}{n}\right)$ for $K_n$.

In short, the partition function and the magnetisation distribution
for $K_n$ can be expressed in terms of $p,q$-binomial coefficients.
Does this hold for all graphs? No. In fact, it seems to only be true
for $K_n$.  However, it does seem to hold \emph{asymptotically} as the
order of the graphs increase, for some interesting classes of graphs.
The precise formulation of such a statement remains and falls outside
this paper.

\subsection{The average graph}
Let us compute the sum of all partition functions taken over all
graphs on $n$ vertices.
\begin{gather}
  \bar{\fz}_n(x,y) = \sum_{G\subseteq K_n} \fz(G;\,x,\,y) = \\
  \sum_{i=0}^n \binom{n}{i}\,y^{n-2\,i}\,
  \sum_{j=0}^{\binom{i}{2}} \binom{\binom{i}{2}}{j}\,x^j\,
  \sum_{k=0}^{\binom{n-i}{2}} \binom{\binom{n-i}{2}}{k}\,x^k\,
  \sum_{\ell=0}^{i\,(n-i)} \binom{i\,(n-i)}{\ell}\,x^{-\ell} = \\
  \sum_{i=0}^n \binom{n}{i}\,y^{n-2\,i}\,
  \left(1+x\right)^{\binom{i}{2}}\,
  \left(1+x\right)^{\binom{n-i}{2}}\,
  \left(1+\frac{1}{x}\right)^{\binom{i\,(n-i)}{2}} = \\
  \left(1+x\right)^{\binom{n}{2}}\,y^n\,\sum_{i=0}^n 
  \binom{n}{i}\,\left(\frac{1}{y^2}\right)^{i}\,
  \left(\frac{1}{x}\right)^{i\,(n-i)}
\end{gather}
and for $y=1$ we have
\begin{gather}
  \bar{\fz}_n(x,1) = 
  \left(1+x\right)^{\binom{n}{2}}\,\sum_{i=0}^n 
  \binom{n}{i}\,\left(\frac{1}{x}\right)^{i\,(n-i)} = \\
  \left(1+x\right)^{\binom{n}{2}}\,\sum_{i=0}^n \pqbin{n}{i}{q}{q} =
  \left(1+x\right)^{\binom{n}{2}}\,\pqsum{n}{q}{q}
\end{gather}
where $q=1/x$ so that $K^*=\log(1+2/n)$. Again we have $\pr{M=n-2\,k}
= \pqpr{n}{k}{q}{q}$. The mean magnetisation distribution can then be
modelled by $p,q$-binomial coefficients, though with $p=q$, just as
for the complete graph.

\subsection{The complete bipartite graph}
What about $K_{u,v}$, i.e. the complete bipartite graph on
$n=u+v$ vertices? Now the partition function is
\begin{gather}
  \fz(K_{u,v};\,x,\,y) = \\ \sum_{i=0}^{u} \sum_{j=0}^{v}
  \binom{u}{i}\,\binom{v}{j}\,y^{u+v-2\,i-2\,j}\,
  x^{i\,j+(u-i)\,(v-j)-i\,(v-j)-j\,(u-i)} =
  \\ x^{u\,v}\,y^{u+v}\,\sum_{i=0}^{u}
  \sum_{j=0}^{v}\binom{u}{i}\,\binom{v}{j}\,
  \left(\frac{1}{y^2}\right)^{i+j}\,
  \left(\frac{1}{x^2}\right)^{i\,(v-j)+j\,(u-i)}
\end{gather}
which for $y=1$ gives us
\begin{gather}
  \fz(K_{u,v};\,x,\,1) = \\
  x^{u\,v}\,\sum_{i=0}^{u} 
  \sum_{j=0}^{v}\binom{u}{i}\,\binom{v}{j}\,
  \left(\frac{1}{x^2}\right)^{i\,(v-j)+j\,(u-i)} = \\
  \sum_{k=0}^{u+v}\sum_{\ell} \binom{u}{\ell}\,\binom{v}{k-\ell}\,
  x^{(u-2\,\ell)\,(v-2\,(k-\ell))}
\end{gather}
which defines the partial sums for $y=1$ as
\begin{equation}
  \fz_k = 
  \sum_{\ell} \binom{u}{\ell}\,\binom{v}{k-\ell}\,
  x^{(u-2\,\ell)\,(v-2\,(k-\ell))}
\end{equation}
Data suggests that
\begin{equation}
  \pr{M=n-2\,k} = \frac{\fZ_k}{\fZ} \approx \pqpr{n}{k}{p}{q}
\end{equation}
given an appropriate choice of $p$ and $q$ and for a rather wide range
of temperatures. It does however not seem to hold if $u$ differ from
$v$.  In the left panel of figure~\ref{fig:knn1} we show a sample of
magnetisation distributions together with fitted $p,q$-distributions
for a $K_{32,32}$. To find the appropriate $p$ and $q$ we used the
method described in the $p,q$-find algorithm in
section~\ref{sec:pqdist}.  The fit is excellent. The right panel of
figure~\ref{fig:knn1} shows $y=n\,(p-1)$ versus $z=n\,(q-1)$ for a
range of temperatures and for complete bipartite graphs of different
sizes. High temperatures, $K=0$ gives $p=q=1$, i.e. $y=z=0$ in the
upper right corner. As the temperature decreases, i.e. with increasing
$K$, we move along the curves.  The points are where the distribution
is exactly flat, i.e. the inverse temperature $K^*$ where the middle
probabilities are equal.  We have no exact closed form expression for
$K^*$ but one can show that the series expansion of this inverse
temperature for $K_{n/2,n/2}$, i.e. a total of $n$ vertices is
\begin{equation}
  K^* = \frac{2}{n} - \frac{3}{n^2} + \frac{11}{3\,n^3} -
  \frac{101}{24\,n^4} + \frac{3827}{480\,n^5} + \cdots
\end{equation}
The calculations behind this are rather long and were done with
Mathematica.  Compare this with the expansion in the previous
subsection for the complete graph on $n$ vertices, $K_n$, which begins
$K^*=1/n-1/n^2+4/3 n^3+\cdots$.  In the lower left corner the
distribution has $\pr{M=n}=\pr{M=n-2}$, i.e. at $K=\log{n}/n$.
\begin{figure}
  \begin{minipage}{0.5\textwidth}
    \begin{center}
      \includegraphics[width=0.99\textwidth]{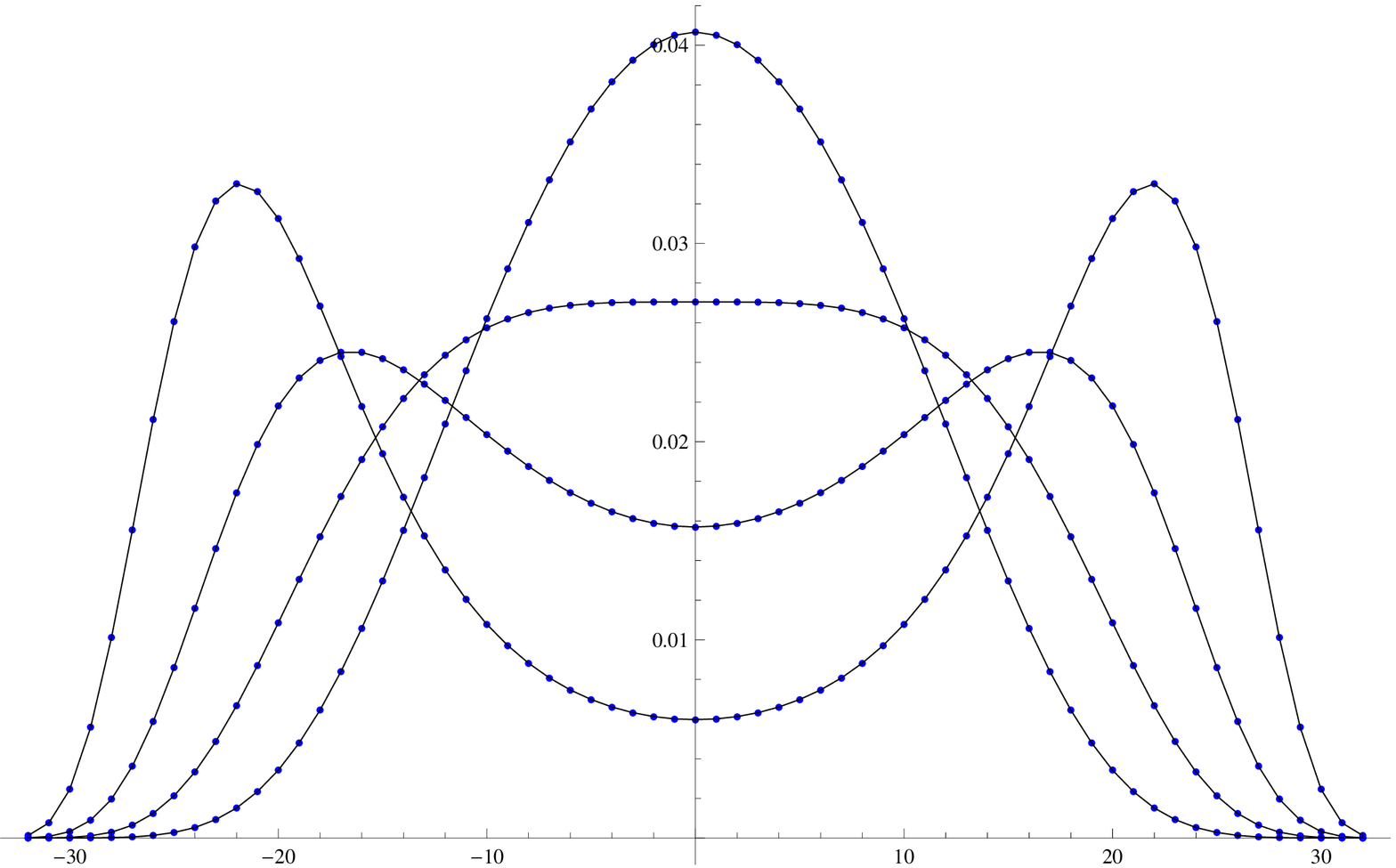}
    \end{center}
  \end{minipage}%
  \begin{minipage}{0.5\textwidth}
    \begin{center}
      \includegraphics[width=0.99\textwidth]{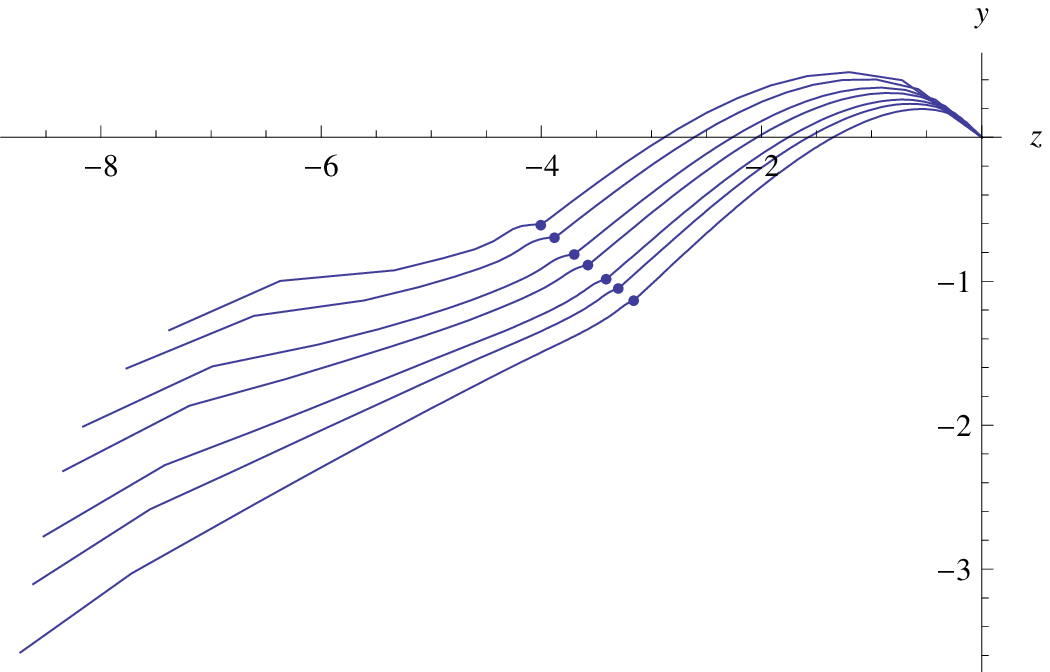}
    \end{center}
  \end{minipage}
  \caption{Left: magnetisation distributions for $K_{32,32}$ (points)
    and fitted $\pqpr{64}{k}{p}{q}$ (lines) vs $k-32$ for four
    different temperatures. Right: $y=n\,(p-1)$ versus $z=n\,(q-1)$
    for $K_{n/2,n/2}$, with $n=24, 32, 48, 64, 96, 128, 192$
    (downwards). Points represent $K^*$. Low temperatures in lower
    left corner.}
    \label{fig:knn1}
\end{figure}
The points in the right panel of figure~\ref{fig:knn1} should approach
$z=y=-2$. One can show that the distribution of a balanced bipartite
graph $K_{n/2,n/2}$ has the shape
\begin{equation}
  \ratio{n}{n/2}{x\,n^{3/4}} \sim
  \exp\left(-\frac{4}{3}\,x^4\right)
\end{equation}
at $K^*$, just like the complete graph on $n$ vertices. Since
$\fc{-2}=4/3$ we then assume that $z$, and thus also $y$, will approach
$-2$.

\subsection{The free energy}
If the magnetisations were indeed an exact $p,q$-binomial distribution
then we could also express the free energy as
\begin{equation}\label{free}
  \fF(G; K) = \frac{K\,m}{n} + \frac{\log\pqsum{n}{p}{q}}{n}
\end{equation}
for a graph on $n$ vertices and $m$ edges and it is here implied that
$p$ and $q$ depend on $K$. Why this expression?  Note that
$\fZ_0=a(m,n)\,e^{K\,m}=e^{K\,m}$ and thus we have
\begin{equation}
  \pqpr{n}{0}{p}{q} = \frac{\pqbin{n}{0}{p}{q}}{\pqsum{n}{p}{q}} =
  \frac{1}{\pqsum{n}{p}{q}} = \frac{\fZ_0}{\fZ} = \frac{e^{K\,m}}{\fZ}
\end{equation}
from which the result follows. Compare with \eqref{zkn} where this
relation holds exactly. Actually we expect \eqref{free} to be a good
approximation for $K$ near 0, where the distributions are close to
binomial, and for very high $K$ where all the probability mass is
concentrated on the extreme magnetisations. In the left plot of
figure~\ref{fig:free} we show the exactly computed free energy (red
curve) for a complete bipartite graph on $16+16$ vertices together
with the $p,q$-approximation \eqref{free} (points).  The fit is indeed
very good for the whole temperature range. Taking a derivative of the
points with respect to $K$ produces a good approximation to the
internal energy as the right plot shows.
\begin{figure}[!ht]
  \begin{minipage}{0.5\textwidth}
    \begin{center}
      \includegraphics[width=0.99\textwidth]{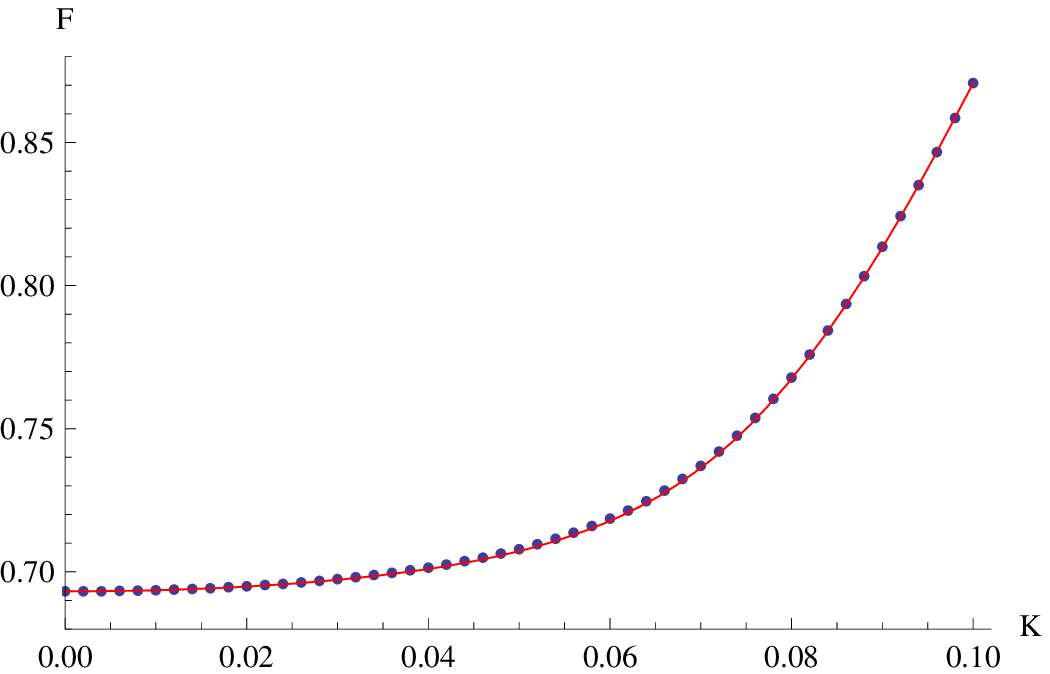}
    \end{center}
  \end{minipage}%
  \begin{minipage}{0.5\textwidth}
    \begin{center}
      \includegraphics[width=0.99\textwidth]{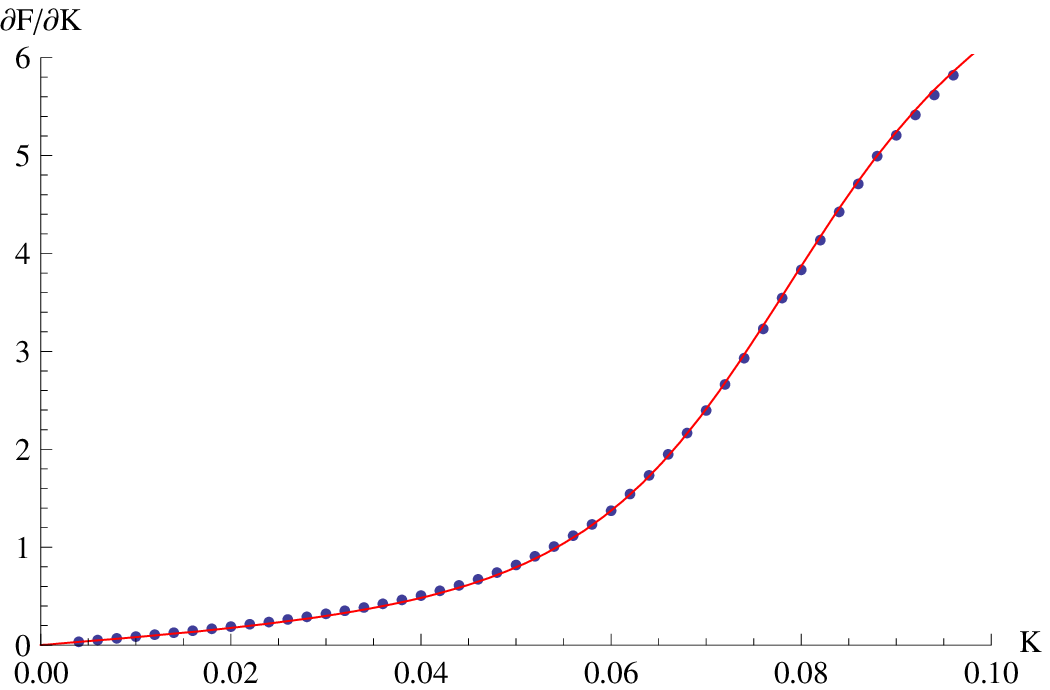}
    \end{center}
  \end{minipage}
  \caption{Left: free energy (red curve) compared to the formula
    \eqref{free} (points) for $K_{16,16}$. Right: internal energy 
    (red curve) compared to the derivative of the points produced 
    by \eqref{free} (points) for $K_{16,16}$.}
    \label{fig:free}
\end{figure}

\section{Lattices}\label{sec:lattices}
The plots in figure~\ref{fig:knn1} are very representative for several
graphs of interest. We intend to focus on the graphs that
traditionally are studied in statistical physics; lattice graphs. We
will take a look at the simple lattices in 1, 2, 3, 4 and 5
dimensions.  Just to be clear, a 1D-lattice is a cycle $C_n$ on $n$
vertices and it is $2$-regular, i.e. $2$ neighbours for each
vertex. The 2-dimensional $L\times L$-lattice is the cartesian product
of two cycles on $L$ vertices. The product thus has $n=L^2$ vertices
and it is $4$-regular. The $d$-dimensional $L\times L\times
\cdots\times L$-lattice is a product of $d$ cycles on $L$ vertices,
thus having a total of $n=L^d$ vertices. It is obviously
$2\,d$-regular.  Assuming finite-size scaling to hold then for a
$d$-dimensional lattice we have
\begin{equation}
  \kmom{1}=n\,\amu/2 \propto 
  n\,L^{-\beta/\nu} = n\,\left(n^{1/d}\right)^{-\beta/\nu} =
  n^{1-\beta/d\,\nu}
\end{equation}
and correspondingly for the second moment
\begin{equation}
  \kmom{2}=n\,\chi/4 \propto n\,L^{\gamma/\nu} =
  n\,\left(n^{1/d}\right)^{\gamma/\nu} = n^{1+\gamma/d\,\nu}
\end{equation}
Note also that for an $r$-regular triangle-free graphs we have
$\pr{M=n}=\pr{M=n-2}$ when $K=\frac{\log n}{2\,r}$.

\subsection{1D-lattices}
For 1D-lattices we can compute the coefficients $a(E,M)$ exactly.  It
is an exercise to show that the number of states with $k$ negative
spins and $\ell$ negative spin products (over the edges) is
\begin{equation}
  a(E,M) = \binom{k}{k - \ell/2}\,\binom{n-k-1}{n-k - \ell/2} + 
  \binom{k-1}{k - \ell/2}\, \binom{n-k}{n-k - \ell/2}
\end{equation}
where $M=n-2\,k$ and $E=n-2\,\ell$.  The distribution of
magnetisations do not behave in a way representative for lattices of
higher dimension.  However, for extremely low temperatures the
probabilities $\pr{M=-n}=\pr{M=n}$ will dominate the other
probabilities. The two outermost probabilities, $\pr{M=n}$ and
$\pr{M=n-2}$, are equal when $K=\frac{\log n}{4}$. For the 1D-lattice
the distribution is here sharply unimodal, while for higher dimensions
the distribution is bimodal and has its peaks at the extreme
magnetisations.  For $K$ larger than $(\log n)/4$ the distribution
actually has three peaks, i.e. a local maximum at $M=0$. For
1D-lattices the $p,q$-approximation of the distribution thus breaks
down beyond this $K$ since it can not model a local maximum in the
middle as well as peaks at the ends; they are at most
\emph{bi}modal. For $K$ less than this point the $p,q$-distribution is
a very good approximation. Figure~\ref{fig:1d} demonstrates this
clearly; for the flattest distribution (low temperature) the fitted
$p,q$-distribution starts to deviate from the actual distribution. In
figure~\ref{fig:1d0} we plot $y=n\,(p-1)$ and $z=n\,(q-1)$ versus $K$
for a range of different $n$. Clearly there is some limit curve here,
though we have not established what the limit function is.
\begin{figure}[!ht]
  \begin{center}
    \includegraphics[width=0.99\textwidth]{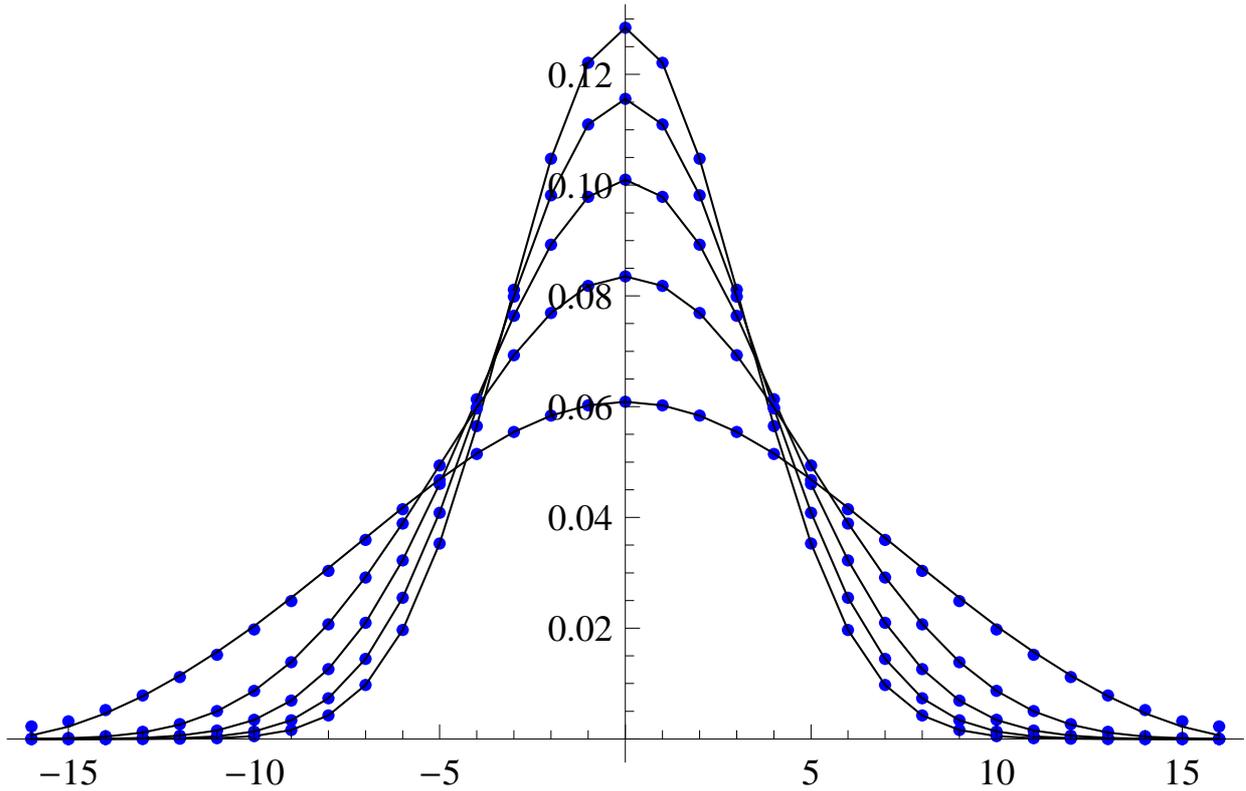}
  \end{center}
  \caption{Magnetisation distributions for $C_{32}$ (points) and
    fitted $\pqpr{n}{k}{p}{q}$ (lines) vs $k-n/2$ for several
    temperatures.}
    \label{fig:1d}
\end{figure}
\begin{figure}[!ht]
  \begin{minipage}{0.5\textwidth}
    \begin{center}
      \includegraphics[width=0.99\textwidth]{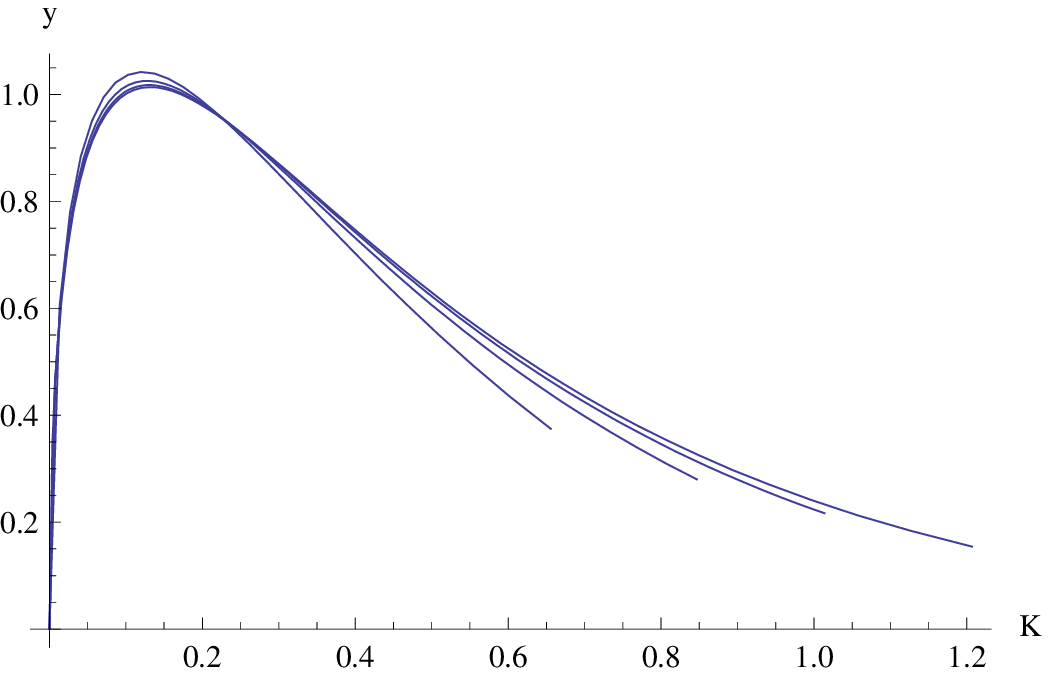}
    \end{center}
  \end{minipage}%
  \begin{minipage}{0.5\textwidth}
    \begin{center}
      \includegraphics[width=0.99\textwidth]{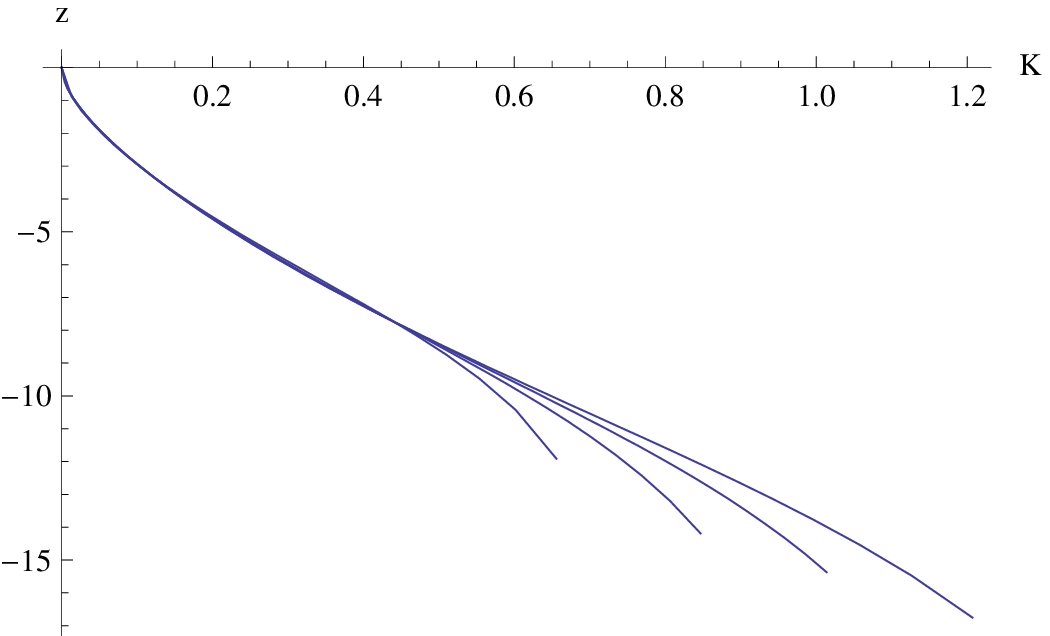}
    \end{center}
  \end{minipage}
  \caption{Left: $y=n\,(p-1)$ versus $K$ for $C_n$. Right:
    $z=n\,(q-1)$ versus $K$ for $C_n$.  Both plots are for
    $n=16,32,64,128$ (larger cycles stretch farther to the right).}
    \label{fig:1d0}
\end{figure}
In figure~\ref{fig:1d1} we see $y=n\,(p-1)$ versus $z=n\,(q-1)$ for
different $n$. The right plot of
figure~\ref{fig:1d1} shows the value at $K$ that gives the maximum
value of $y$.  The fitted straight line gives the limit $0.1333$, very
close to $2/15$.
\begin{figure}[!ht]
  \begin{minipage}{0.5\textwidth}
    \begin{center}
      \includegraphics[width=0.99\textwidth]{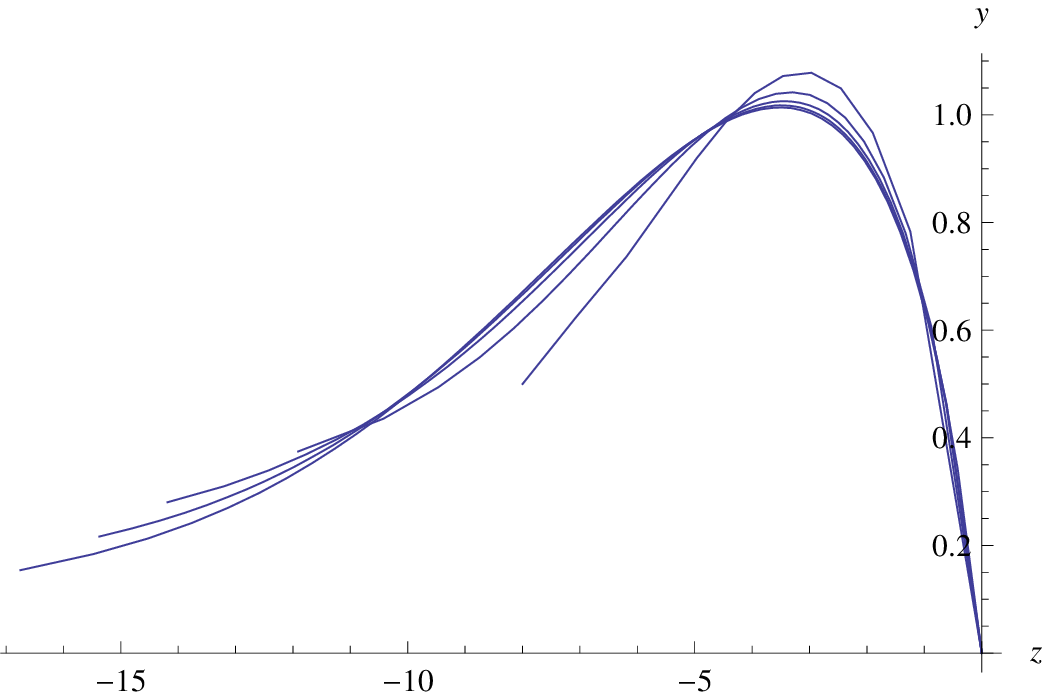}
    \end{center}
  \end{minipage}%
  \begin{minipage}{0.5\textwidth}
    \begin{center}
      \includegraphics[width=0.99\textwidth]{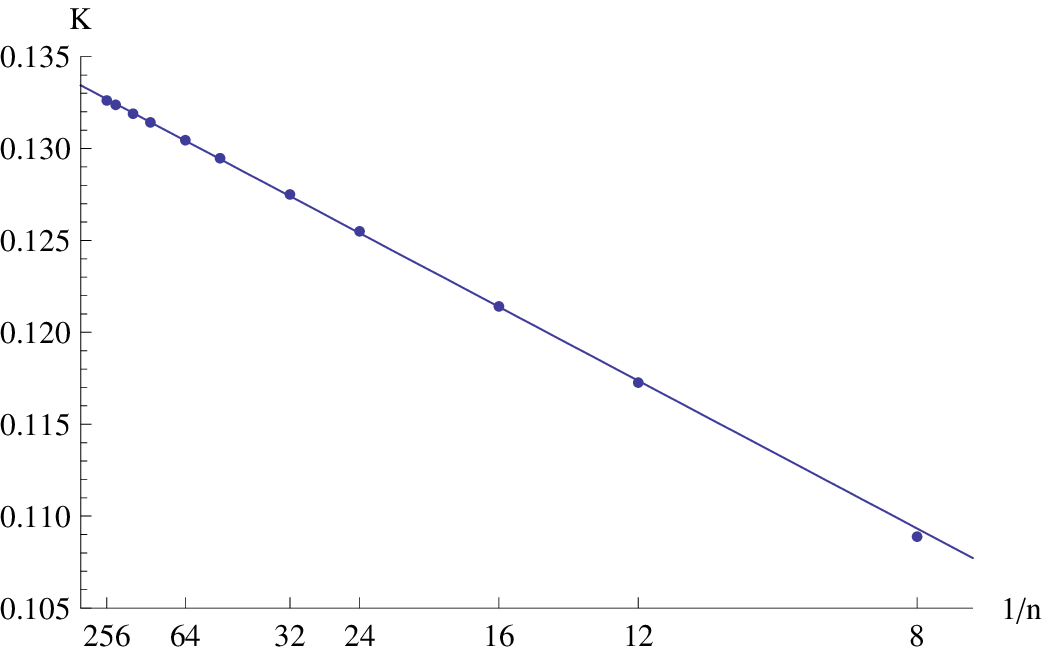}
    \end{center}
  \end{minipage}
  \caption{Left: $y=n\,(p-1)$ versus $z=n\,(q-1)$ for $C_n$, with
    $n=8, 16, 32, 64, 128$ with larger $n$ extending farther to the
    left. Right: $K$ giving the maximum $y$ vs $1/n$ for $C_n$,
    $n=8$,\,12, 16, 24, 32, 48, 64, 96, 128, 192, 256.}
    \label{fig:1d1}
\end{figure}
What about the values of $y$ and $z$? Indeed they converge beautifully
as figure~\ref{fig:1d2} indicates. The limit for $y$ is about $1.010$
and $z$ approaches a value of $-3.537$. 
\begin{figure}[!ht]
  \begin{minipage}{0.5\textwidth}
    \begin{center}
      \includegraphics[width=0.99\textwidth]{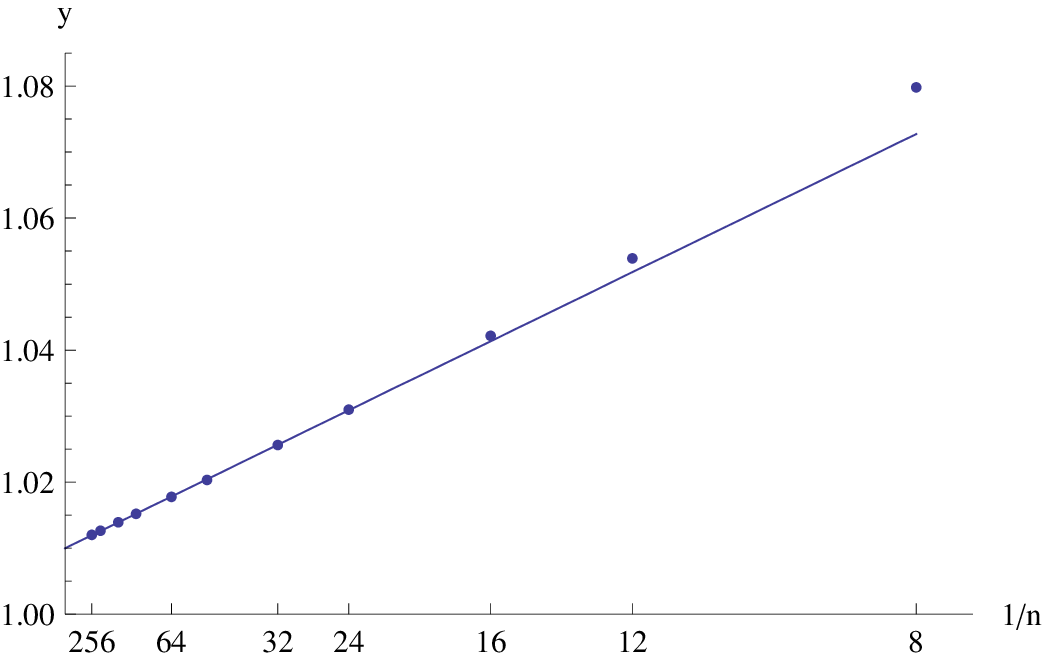}
    \end{center}
  \end{minipage}%
  \begin{minipage}{0.5\textwidth}
    \begin{center}
      \includegraphics[width=0.99\textwidth]{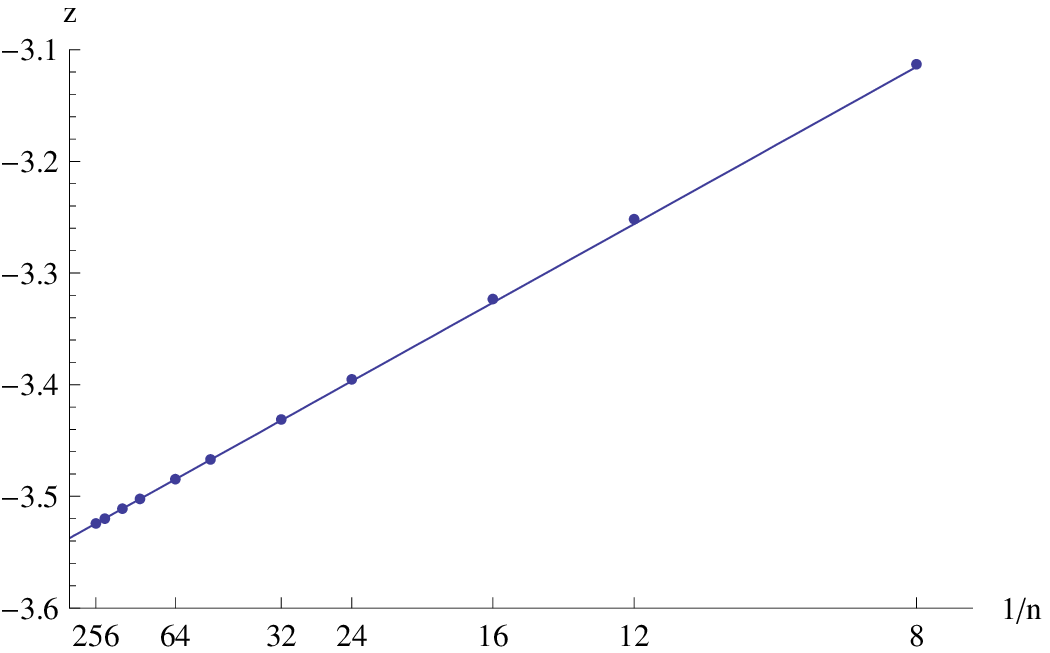}
    \end{center}
  \end{minipage}
  \caption{Left: Maximum value of $y=n\,(p-1)$ versus $1/n$ for $C_n$.
    Right: value of $z\,(q-1)$ versus $1/n$ for $C_n$ when $y$ is at
    its maximum. In both cases $n=8$,\,12, 16, 24, 32, 48, 64, 96,
    128, 192, 256.}
    \label{fig:1d2}
\end{figure}
Though we can not exactly solve what $y$ and $z$ should be at $K=2/15$
we can at least see how $y$ and $z$ relate at this point. For an
infinite 1-dimensional lattice we have that $\chi=e^{2\,K}$, see
e.g. \cite{baxter:82}. The second moment then should behave as
\begin{equation}
  \kmom{2} \sim \frac{n\,\chi}{4} = \frac{n\,e^{2\,K}}{4}
\end{equation}
Let $\ell=k-n/2$ and $\sigma=\sqrt{\kmom{2}}$. For high temperatures
we expect $\ell/\sigma$ to be normally distributed and thus
\begin{equation}
  \pr{\ell} \sim 
  \frac{\exp\left(-(\ell/\sigma)^2/2\right)}{\sigma\,\sqrt{2\,\pi}}
\end{equation}
The probability ratio is then
\begin{equation}
  \ratio{n}{n/2}{\ell}=\frac{\pr{\ell}}{\pr{0}} =
  \exp\left(-\ell^2/2\,\sigma^2\right)
\end{equation}
and for $\ell=1$ this simplifies to
\begin{equation}
  \ratio{n}{n/2}{1} = \exp\left(-1/2\,\sigma^2\right) =
  \exp\left(-2\,e^{-2\,K}/n\right) \sim 1 - \frac{2\,e^{-2\,K}}{n}
\end{equation}
Compare this with \eqref{pqratio8}. We thus have $a=-2\,e^{-2\,K}$.
Now $y$ and $z$ are related as $y=2\,w-a$ where $w$ is defined by
\eqref{wexpr8}. If we set $K=2/15$ then $a=-1.531857$, and choosing
$z=-3.537$ indeed gives us $y=1.01002$. To actually solve $z$ as a
function of $K$ seems harder though. However, numerical experimentation 
suggests that $y$ and $z$ for small $K$ behave as
\begin{gather}
  y(K) \approx c_1\,\sqrt{K} + c_2\,K \\
  z(K) \approx -c_1\,\sqrt{K} + c_2\,K
\end{gather}
where $c_1\approx 6.164$ and $c_2\approx -10.33$.

Strangely, when the $p,q$-distribution fit the magnetisation
distribution so well one might think that the free energy would be
well approximated by \eqref{free}. This is not so. The
$p,q$-approximation differs clearly from the asymptotic free energy,
given by $\log\left(2\,\cosh K\right)$.

\subsection{2D-lattices}
For the 2-dimensional lattices we can rely on exact data only for up
to $L=16$ and they were computed according to the method in
\cite{lundow:02}. We have sampled data for $L=32, 64, 128, 256, 512$,
collected with the methods described in \cite{sampart} and
\cite{reconart}. These methods gave us the energy distribution and
then it is just a matter of combining this with the distribution of
magnetisations for each given energy as described in
\cite{cubeart}. Figure~\ref{fig:2d0} shows an example of some
distributions for the $128\times 128$-lattice together with their
fitted $p,q$-binomial distributions. The fit is fairly good, but
hardly excellent near $K^*$. However, as the figure shows, at
$K=0.4388$ (i.e. for $L=128$) the fit is practically spot on. For the
lattices we have studied there is always one such temperature where
the $p,q$-distribution fit particularly well. This point is located
between $K^*$ and $K_c$ and is very close to, but not exactly equal
to, the point where the susceptibility is at its maximum.

Of course, for high temperatures (small $K$) and low temperatures
(high $K$) the fit is typically very good but in the high-temperature
region the measured $y$ and $z$ are unfortunately extremely sensitive
to noise. As we get closer to the critical region where the
distribution becomes bimodal this problem goes away, even though the
sampled distributions are more noisy there.  Regarding the free energy
it is well-fitted by \eqref{free} for low temperatures $K>K^*$ though
less well for high temperatures $K<K^*$.

\begin{figure}[!ht]
  \begin{center}
    \includegraphics[width=0.99\textwidth]{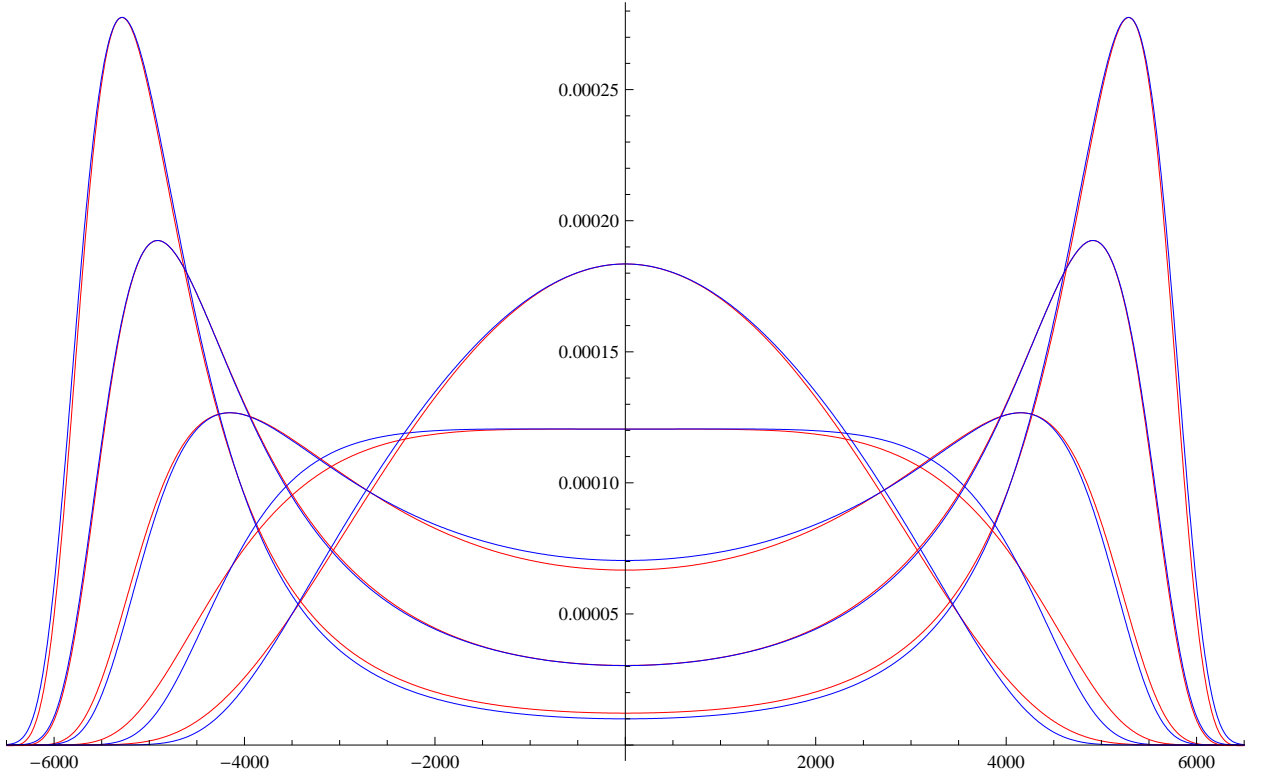}
  \end{center}
  \caption{Magnetisation distributions for the $128\times 128$-lattice
    (red) together with fitted $p,q$-binomial distributions
    $\pqpr{n}{k}{p}{q}$ (blue) vs $k-n/2$ at $K=0.432$, $K^*=0.43467$,
    $K=0.437$, $K=0.4388$ and $K_c=0.44068$ (downwards at $y$-axis). }
    \label{fig:2d0}
\end{figure}

Recall from section~\ref{sec:exponents} how the exponents of the
moment growth rates could be computed if we allow $z$ to depend on
$n$. For the 2D-lattices it is known that $\beta=1/8$, $\gamma=7/4$
and $\nu=1$, see~\cite{onsager:44}, \cite{yang:52} and
\cite{abraham:78}.  Thus the first moment $\kmom{1}$ should scale as
$n^{15/16}$ and the second moment $\kmom{2}$ as $n^{15/8}$. From
equation~\eqref{logmom1} and \eqref{logmom2} this would be achieved by
choosing $\lambda_1=-3/2$, $\lambda_2=-6$ and $\lambda_3=0$.  The left
plot of Figure~\ref{fig:2d1} shows $z$ versus $\log n$ at $K^*$
together with the curve $3-1.5\,\log n-6\log\log n$. The constant
$\lambda_0$ is chosen only to make the curve look plausibly near the
points. The point for $L=512$ deviate slightly but we suspect that
noise in the sampled data explains this. With $\lambda_0=3$ the
coefficient of $n^{15/8}$ obtained from \eqref{logmom2} would be
$0.301$ though the measured $\kmom{2}$ divided by $n^{15/8}$ are
closer $0.08$. To get this we have to choose $\lambda_0\approx
8.3$. In that case the convergence is extremely slow. Note also that
the fitted $p,q$-distribution is far from perfect which would
contribute some amount of error as well.

\begin{figure}[!ht]
  \begin{minipage}{0.5\textwidth}
    \begin{center}
      \includegraphics[width=0.99\textwidth]{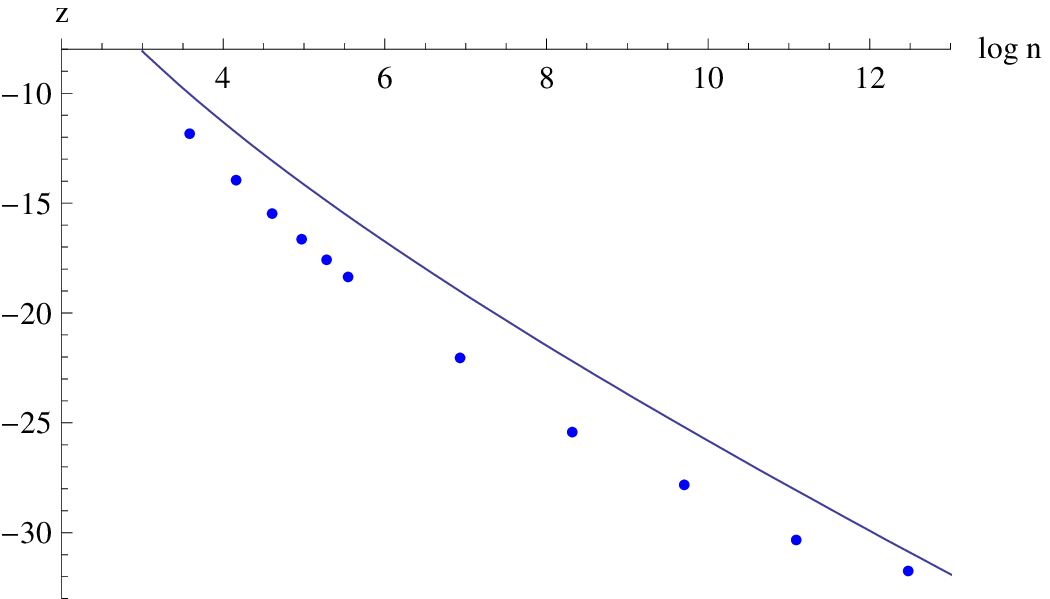}
    \end{center}
  \end{minipage}%
  \begin{minipage}{0.5\textwidth}
    \begin{center}
      \includegraphics[width=0.99\textwidth]{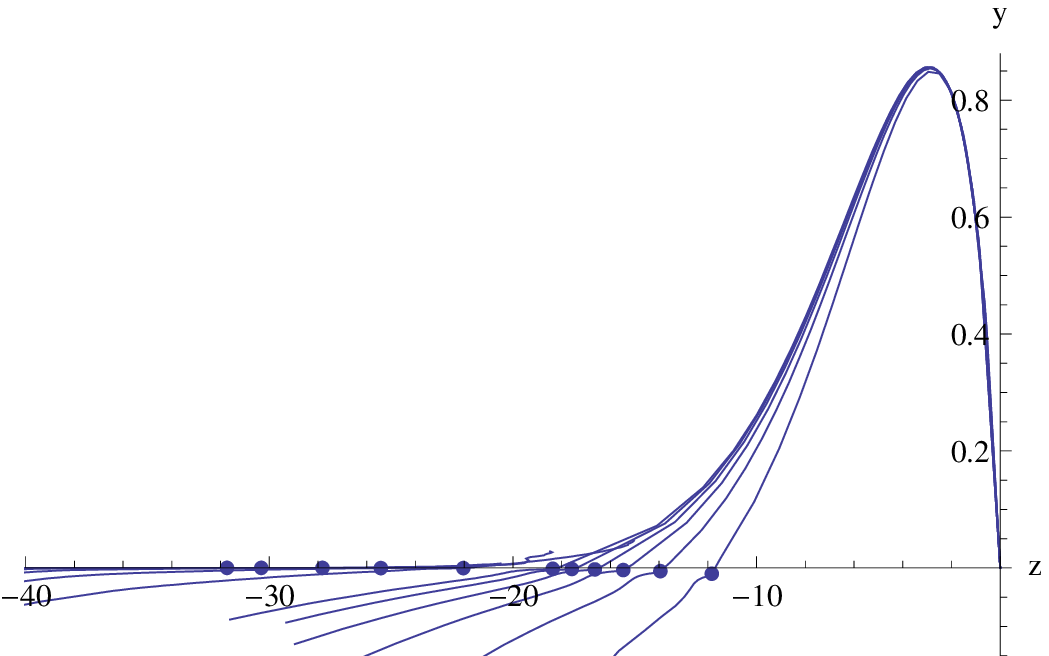}
    \end{center}
  \end{minipage}
  \caption{Left: $z=n\,(q-1)$ vs $\log n$ at $K^*$ for the $L\times
    L$-lattice, $L=6$,\, 8, 10, 12, 14, 16, 32, 64, 128, 256,
    512. The curve is $3-1.5\,\log n-6\,\log\log n$. Right:
    $y=n\,(p-1)$ vs $z=n\,(q-1)$ for the $L\times L$-lattice,
    $L=6$,\,8, 10, 12, 14, 16, 32, 64, 128, 256, 512 (512 barely
    visible near the $z$-axis). The points represent $K^*$.}
    \label{fig:2d1}
\end{figure}

The right plot of figure~\ref{fig:2d1} shows $y$ vs $z$ for a range of
temperatures. The points representing $K^*$ may appear to lie on the
$z$-axis but they are are slightly below it.  In the 1D-case we
suspected that there is a limit curve for the high-temperature region,
but we suspect that the exact data that produced this part of the plot
rely on far too small lattices to give any conclusive evidence. Also,
the $p,q$-find algorithm is rather sensitive to noise in this region
to be useful for sampled data. However, as we said before, this
problem goes away once $K\ge K^*$. Figure~\ref{fig:2d2} shows $y$ and
$z$ versus $K$ for all the lattices though for the sampled data we
only show low-temperature data. The red line is located at
$K_c=\atanh{\left(\sqrt{2}-1\right)}\approx 0.44068$.
\begin{figure}[!ht]
  \begin{minipage}{0.5\textwidth}
    \begin{center}
      \includegraphics[width=0.99\textwidth]{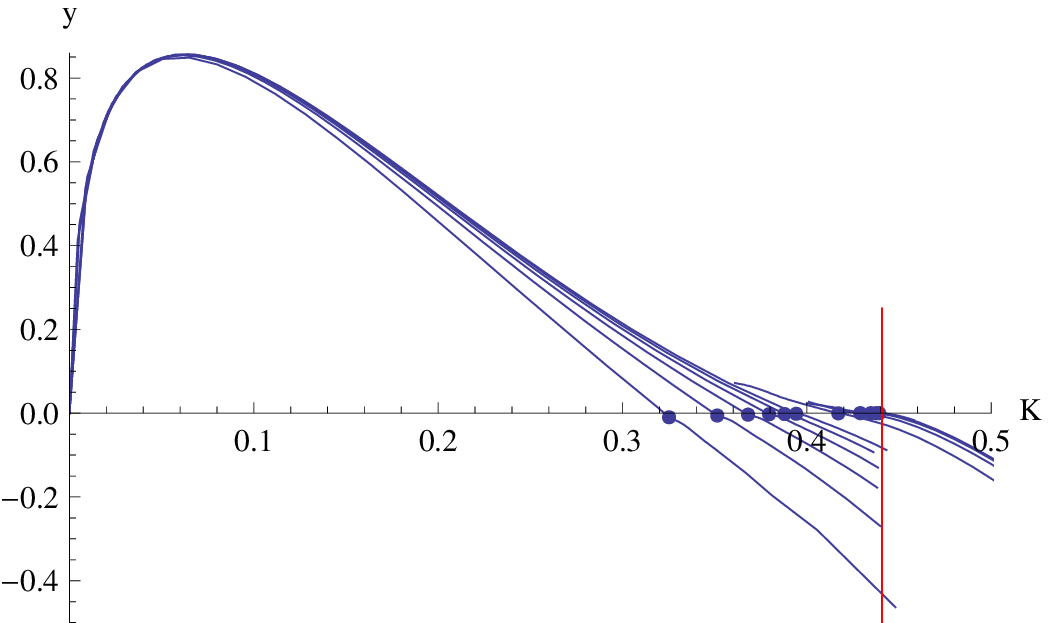}
    \end{center}
  \end{minipage}%
  \begin{minipage}{0.5\textwidth}
    \begin{center}
      \includegraphics[width=0.99\textwidth]{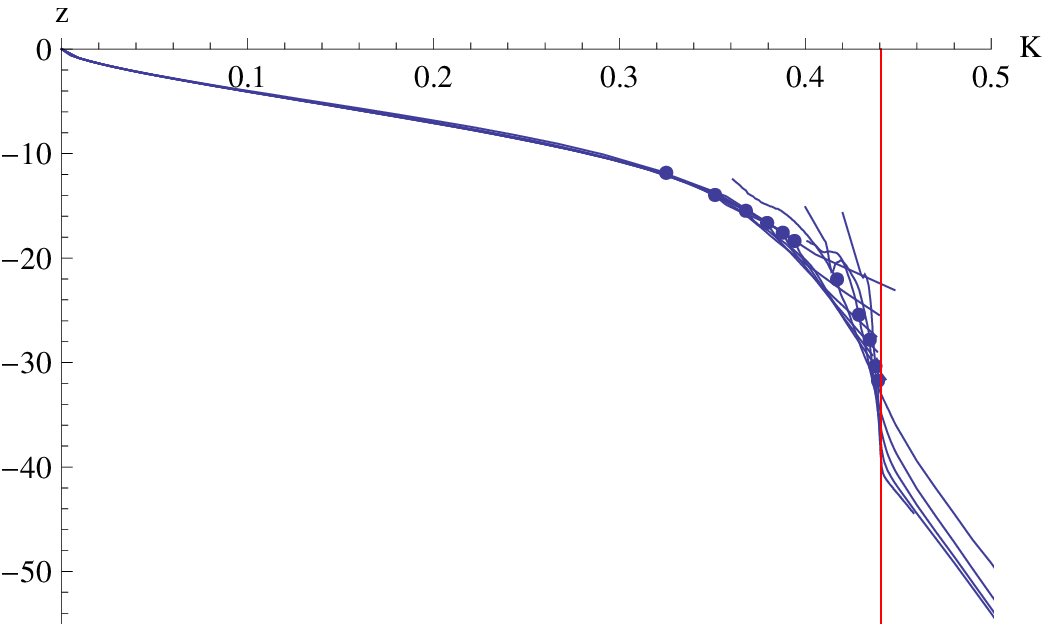}
    \end{center}
  \end{minipage}
  \caption{$y=n\,(p-1)$ vs $K$ (left) and $z=n\,(q-1)$ vs $K$ (right)
    for the $L\times L$-lattice, $L=6$,\,8, 10, 12, 14, 16, 32, 64, 128,
    256, 512. The points represent $K^*$ and the red line is at
    $K_c$. The larger lattices have their points farther to the right
    in the plots.}
    \label{fig:2d2}
\end{figure}

\subsection{3D-lattices}
For these lattices we only have exact data for $L=4$ and sampled data
for $L=6,8,12,16,32,64$.  The situation is actually somewhat better
for 3D-lattices. Figure~\ref{fig:3d0} shows some distributions in the
vicinity of $K^*$ for $L=32$ together with the fitted
$p,q$-distributions.  For $K\ge K^*$, just when the distributions
become bimodal, the fit is certainly less than perfect, but near $K^*$
the $p,q$-approximation is actually rather good.
\begin{figure}[!ht]
  \begin{center}
    \includegraphics[width=0.99\textwidth]{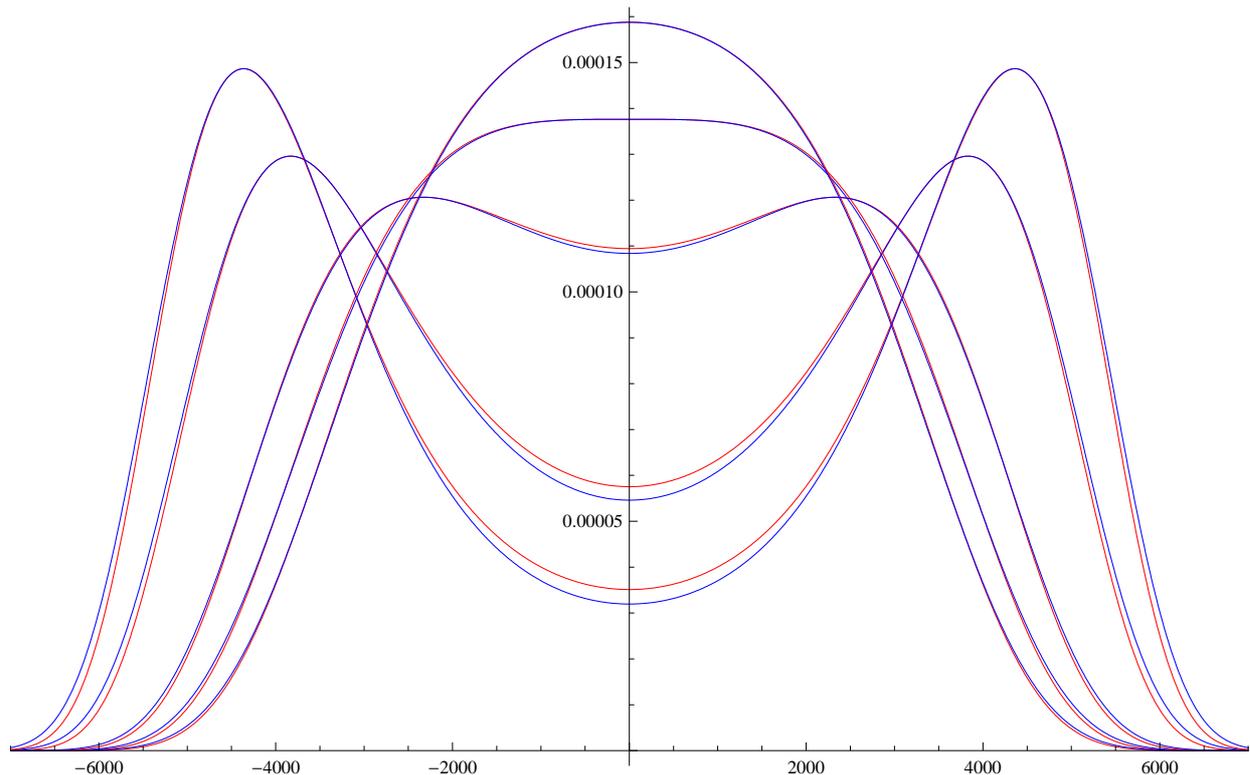}
  \end{center}
  \caption{Magnetisation distributions for the $32\times 32\times
    32$-lattice (red) and the fitted $\pqpr{n}{k}{p}{q}$ (blue) vs
    $k-n/2$ for $K=0.2204$, $K^*=0.22066$, $K=0.2210$, $K_c=0.2216546$ and
    $0.2220$ (downwards at the y-axis).}
    \label{fig:3d0}
\end{figure}

In the left plot of figure~\ref{fig:3d1} we show $z$ versus $\log n$
at $K^*$.  The fitted line through the points corresponds to
$z=-5.3-\log n$ and is not too bad an approximation. However, in
\cite{cubeart} it was estimated that the growth rate exponent at $K_c$
of the susceptibility is $\gamma/\nu=1.978\pm0.009$ (assuming
$\gamma=\gamma'$ and $\nu=\nu'$). For the magnetisation it was
estimated $\beta/\nu=0.5147\pm 0.0007$. Translated into exponents of
$n$ this means $1.657\le 1+\gamma/3\nu \le 1.663$ and $0.8282\le
1-\beta/3\nu\le 0.8287$. If we choose $\lambda_1=-5/8$ in
\eqref{logmom1} and \eqref{logmom2} the first moment exponent would be
$53/64=0.828125$ and $53/32=1.65625$ for the second moment, slightly
below the lower bound of the estimate intervals. Choosing
$\lambda_1=-2/3$ would give exponents $5/6=0.8333\ldots$ and
$5/3=1.666\ldots$ respectively, slightly above the upper bound of the
intervals.  Let us suggest, as an example, that $\lambda_0=6.8$,
$\lambda_1=-2/3$, $\lambda_2=-6$ and $\lambda_3=0$ in the expression
\eqref{logmom2}. In figure~\ref{fig:3d1} the curve use these
parameters for $z$ at $K^*$, i.e. $z=6.8-(2/3)\,\log n-6\,\log\log n$.
Will the points eventually converge to the curve? It would take
considerably larger lattices to shed any light on this. We also have
the problem what $\lambda_0$ should be. Using $\lambda_0=6.8$
means that the coefficient in \eqref{logmom2} is about
$0.393$. Comparing the measured $\kmom{2}$ with $n^{5/3}$ gives a
factor of roughly $0.16$ though the data are certainly far from
conclusive. Since the distribution fit is not perfect a different
constant is perhaps to be expected. Also, slow convergence is to be
expected here.

\begin{figure}[!ht]
  \begin{minipage}{0.5\textwidth}
    \begin{center}
      \includegraphics[width=0.99\textwidth]{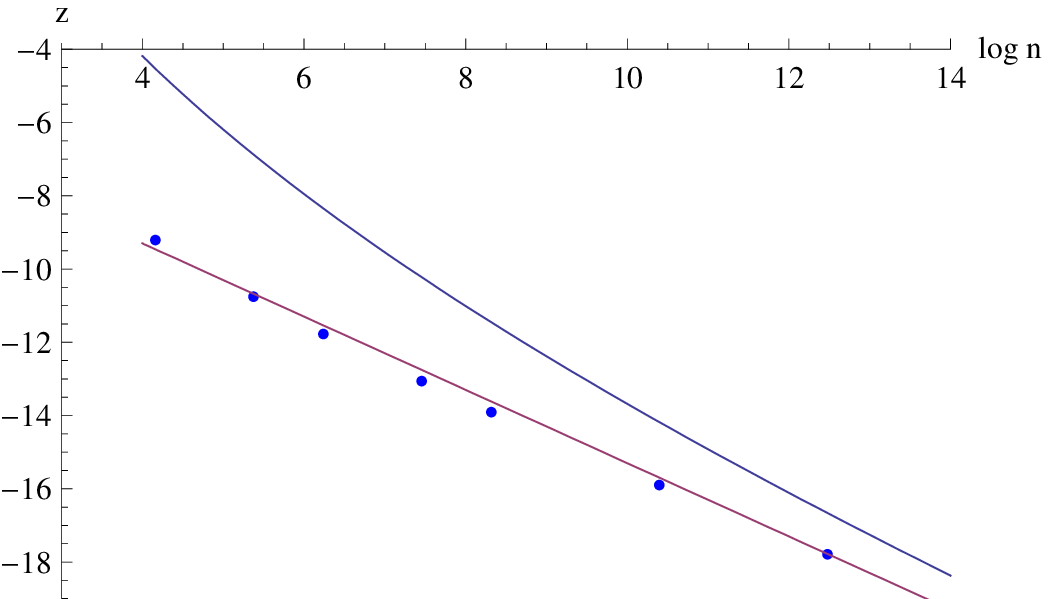}
    \end{center}
  \end{minipage}%
  \begin{minipage}{0.5\textwidth}
    \begin{center}
      \includegraphics[width=0.99\textwidth]{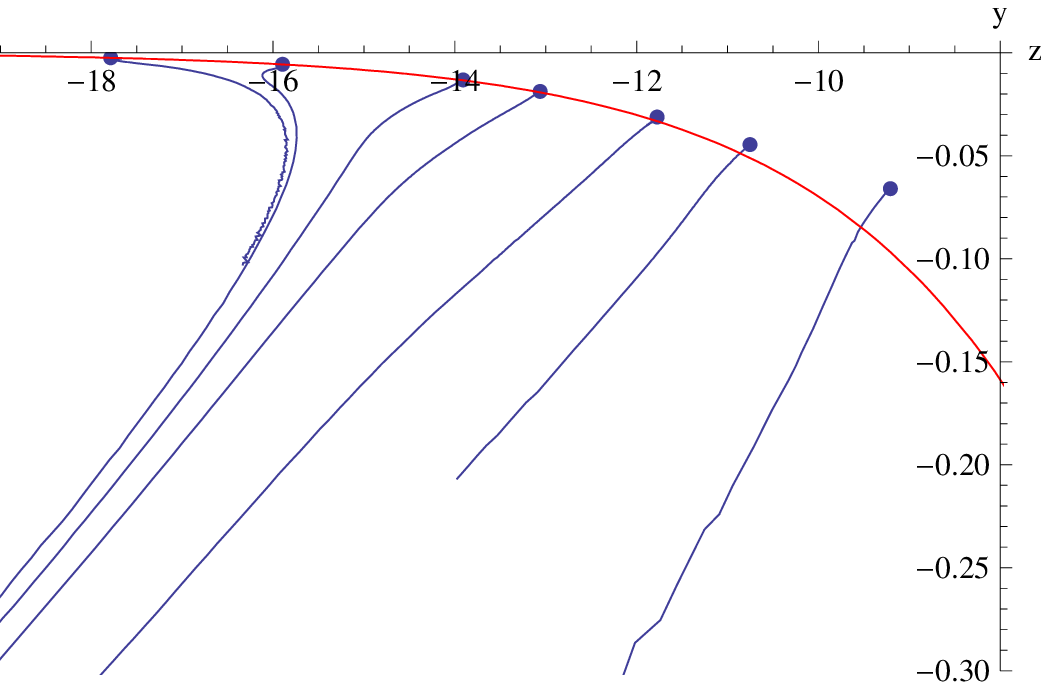}
    \end{center}
  \end{minipage}
  \caption{Left: $z=n\,(q-1)$ vs $\log n$ at $K^*$ for the $L\times
    L\times L$-lattice, $L=4,6,8,12,16,32,64$. The line though the
    points is $-5.3-\log n$ and the curve is $6.8-(2/3)\,\log
    n-6\,\log\log n$. Right: $y=n\,(p-1)$ vs $z$ for the $L\times
    L\times L$-lattice, $L=4,6,8,12,16,32,64$ (leftwards) for
    $K>K^*$. Higher values of $K$ when we move downwards left. The red
    curve is $y=2\,w$ with $w$ defined by \eqref{wdef}.}
  \label{fig:3d1}
\end{figure}

The right plot of figure~\ref{fig:3d1} shows $y$ versus $z$ for
$K>K^*$. Note the peculiar backwards movement of $z$ getting more and
more pronounced for larger $L$. The curves for $16$, $32$ and $64$
show signs of approaching some limit curve. We don't have data for
very low temperatures for the smaller lattices though, except for
$L=4$. The plots in figure~\ref{fig:3d2} shows $y$ and $z$ versus $K$
for $K>K^*$. The red lines show location of $K_c\approx 0.2216546$,
found in \cite{cubeart}, but see also \cite{rosengren:86} for a
theoretical estimate of $K_c$.

\begin{figure}[!ht]
  \begin{minipage}{0.5\textwidth}
    \begin{center}
      \includegraphics[width=0.99\textwidth]{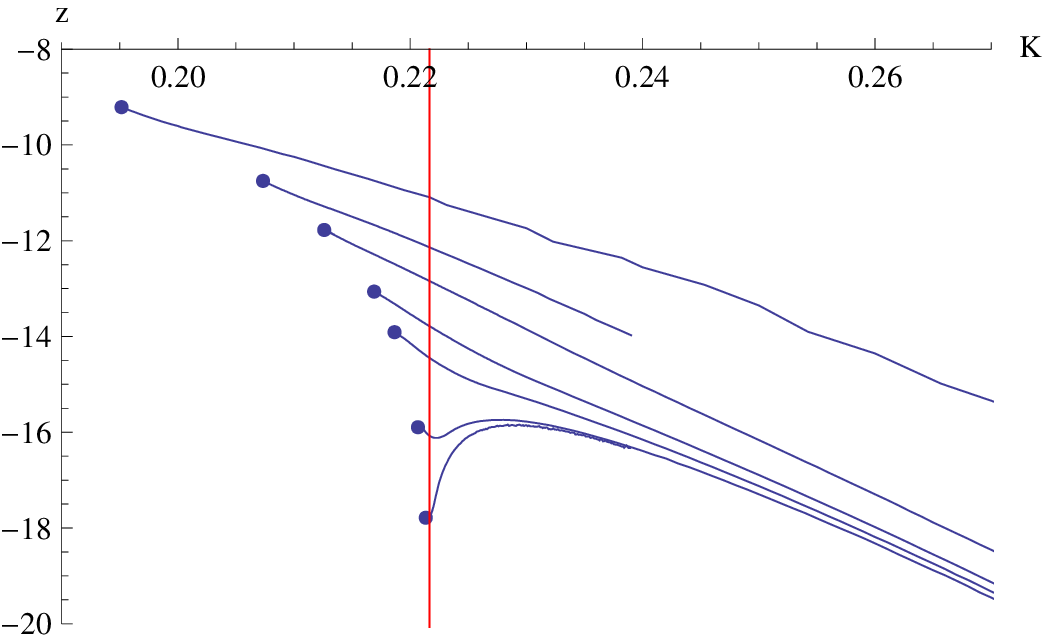}
    \end{center}
  \end{minipage}%
  \begin{minipage}{0.5\textwidth}
    \begin{center}
      \includegraphics[width=0.99\textwidth]{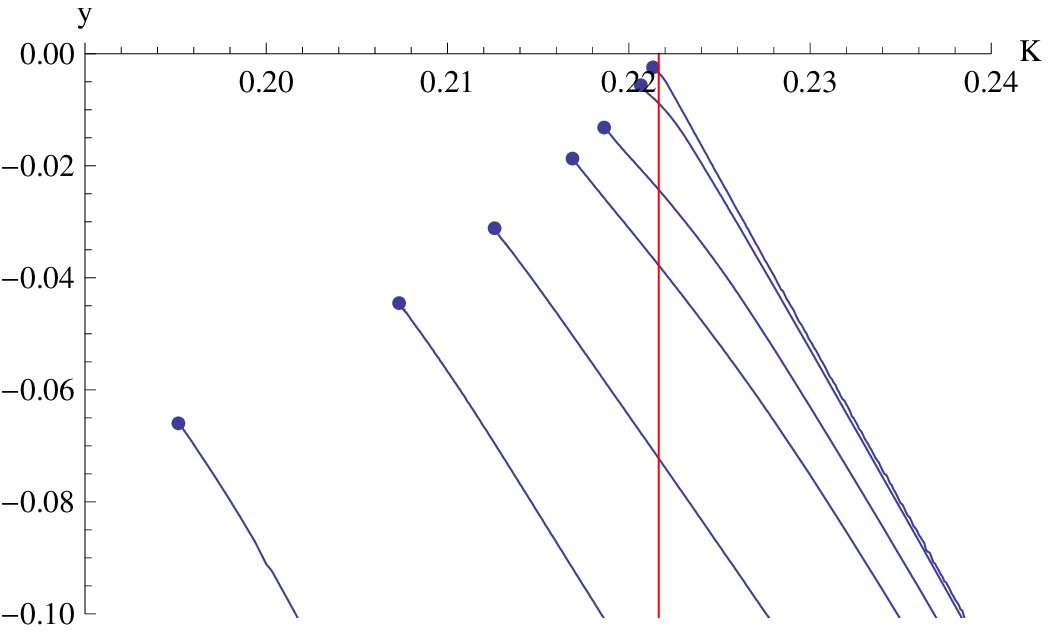}
    \end{center}
  \end{minipage}
  \caption{Left: $z=n\,(q-1)$ vs $K$ with $K>K^*$ for the $L\times
    L\times L$-lattice, $L=4,6,8,12,16,32,64$ (downwards).  Right:
    $y=n\,(p-1)$ vs $K$ with $K>K^*$ for the $L\times L\times
    L$-lattice, $L=4,6,8,12,16,32,64$ (upwards). In both plots the red
    line indicates location of $K_c$ and the points are the locations
    of $K^*$.}
  \label{fig:3d2}
\end{figure}

\subsection{4D-lattices}
In the case of 4-dimensional lattices we have sampled data of
magnetisation distributions for
$L=4,6,8,10,12,16$. Figure~\ref{fig:4d0} shows some of these
magnetisation distributions for $L=12$ near $K^*$ together with fitted
$p,q$-binomial distributions. The fit is quite good, considerably
better than for 2D and 3D, in the whole range of selected
temperatures. Though it is hard to distinguish the fitted curves from
the magnetisation curves, there is a small deviation near the middle.
\begin{figure}[!ht]
  \begin{center}
    \includegraphics[width=0.99\textwidth]{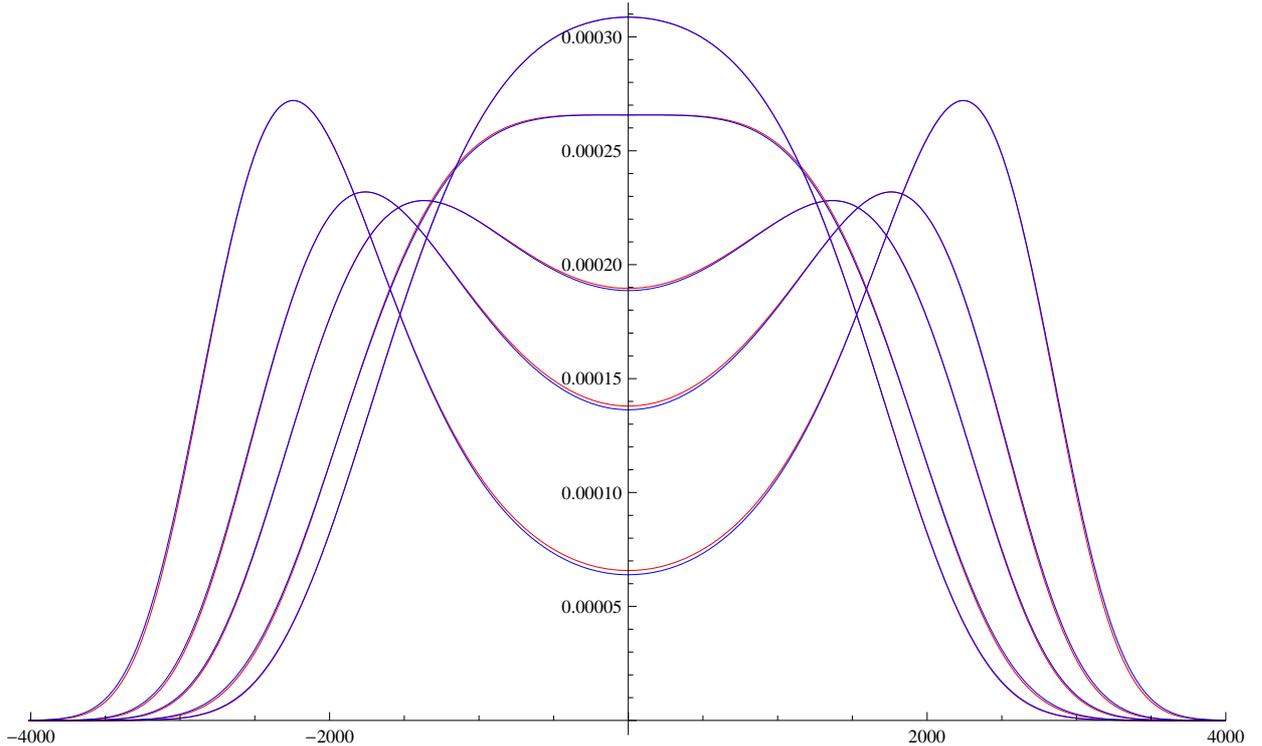}
  \end{center}
  \caption{Magnetisation distributions for the $12\times 12\times
    12\times 12$-lattice (red) and the fitted $\pqpr{n}{k}{p}{q}$
    (blue) vs $k-n/2$ for $K=0.1490$, $K^*=0.149255$, $K_c=0.149695$,
    $K=0.1500$ and $K=0.1505$ (downwards at the y-axis).}
    \label{fig:4d0}
\end{figure}
How should $z$ at $K^*$ depend on $n$? Actually, taking the data at
face-value they are rather well-fitted to the simple formula
$z=-6.5-0.45\,\log n$. However, for the 4D-lattice we have
$\gamma=\gamma'=1$, $\beta=1/2$ and $\nu=\nu'=1/2$. This gives that
$1+\gamma/d\,\nu = 3/2$ and $1-\beta/d\,\nu=3/4$. Moreover, according
to \cite{mon:90} there should be a correction to this. They
calculated, using renormalization group techniques, that the
susceptibility should scale as $L^2\,\sqrt{\log L}$ near $K_c$.  This
means that $\kmom{2}$ should scale as $n^{3/2}\,\sqrt{\log n}$. From
\eqref{loglogmom2} we see that we have to choose $\lambda_2=-2$, with
$\lambda_1=0$ and $\lambda_3=-6$, to obtain this. In the left plot of
figure~\ref{fig:4d1} we have set $z=-1.2-2\,\log\log n-6\log\log\log
n$ and plotted it versus $\log\log n$. The curve would then behave as
a limit curve rather than as a fitted curve. The choice of coefficient
$\lambda_0=-1.2$ is only supported by the human eye as a guide rather
than any theory and herein lies a problem. With this choice the
coefficient of \eqref{loglogmom2} is about $0.558$. However, dividing
the measured $\kmom{2}$ at the different $K^*$ with
$n^{3/2}\,\sqrt{\log n}$ gives values close to $0.15$.  This
discrepancy could be due to several sources; e.g. the expression in
\eqref{loglogmom2} could be incorrect or our data could be suffering
from very slow convergence.
\begin{figure}[!ht]
  \begin{minipage}{0.5\textwidth}
    \begin{center}
      \includegraphics[width=0.99\textwidth]{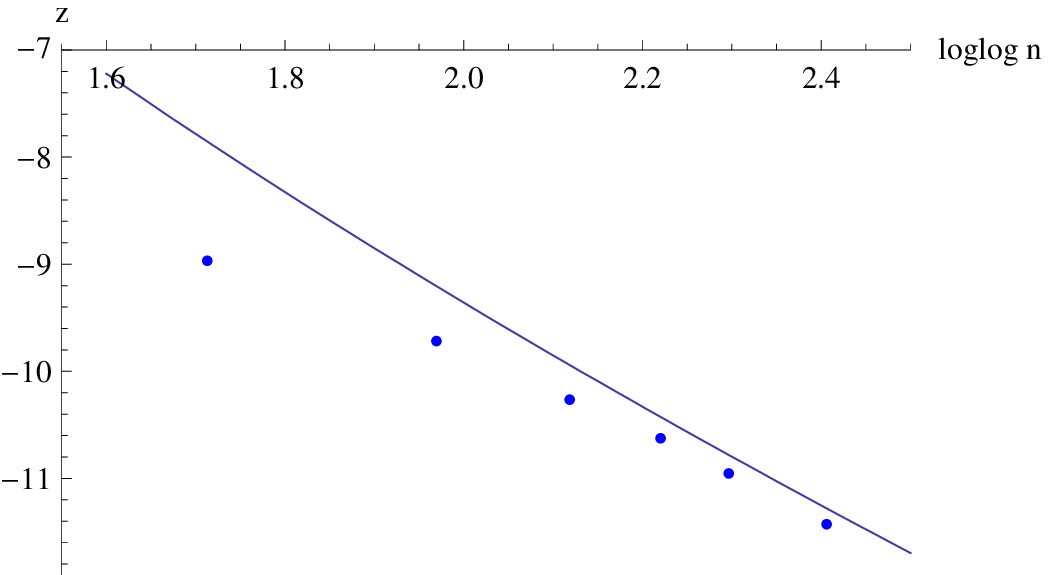}
    \end{center}
  \end{minipage}%
  \begin{minipage}{0.5\textwidth}
    \begin{center}
      \includegraphics[width=0.99\textwidth]{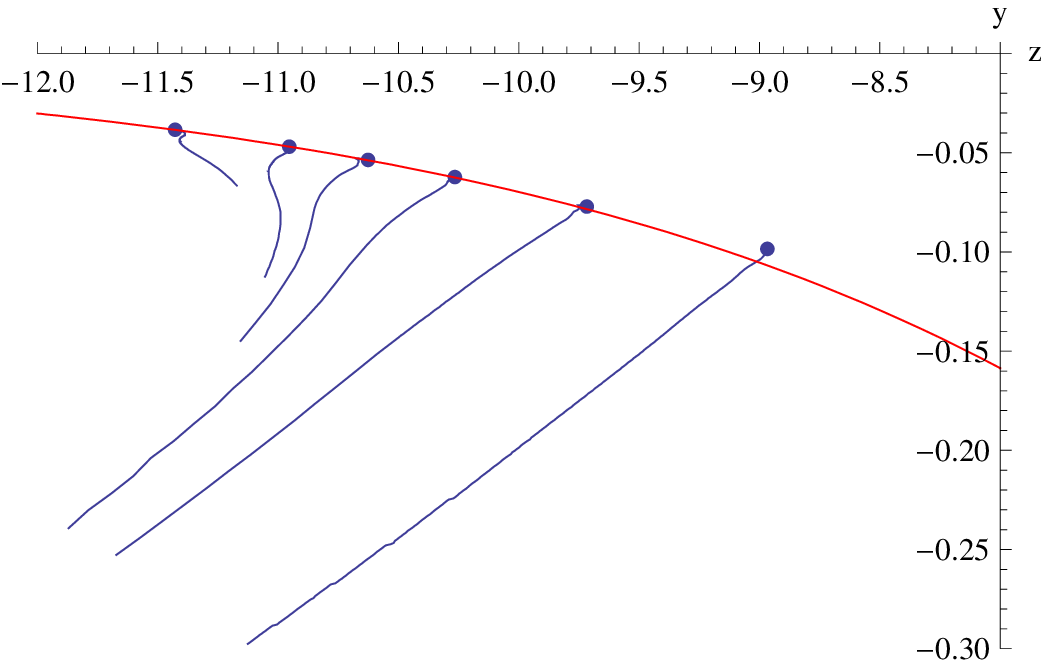}
    \end{center}
  \end{minipage}
  \caption{Left: $z=n\,(q-1)$ vs $\log\log n$ at $K^*$ for the
    $L\times L\times L\times L$-lattice, $L=4,6,8,10,12,16$. The curve
    is $-1.2-2\,\log\log n-6\,\log\log\log n$. Right: $y=n\,(p-1)$ vs
    $z=n\,(q-1)$ for the $L\times L\times L\times L$-lattice,
    $L=4,6,8,10,12,16$ (leftwards).  Higher temperatures (low $K$)
    begin at the upper right part of the plot and with lower
    temperatures we move down to the left. The red curve is $y=2\,w$
    with $w$ defined by \eqref{wdef}.}
  \label{fig:4d1}
\end{figure}
In the right plot of figure~\ref{fig:4d1} we show $y$ versus $z$ for
$K>K^*$ together with the curve $y=2\,w$ with $w$ defined by
\eqref{wdef}.  In figure~\ref{fig:4d2} we show $y$ and $z$ versus $K$
for $K>K^*$. The red line is located at $K_c\approx 0.1496497$,
estimated in~\cite{4dart}.
\begin{figure}[!ht]
  \begin{minipage}{0.5\textwidth}
    \begin{center}
      \includegraphics[width=0.99\textwidth]{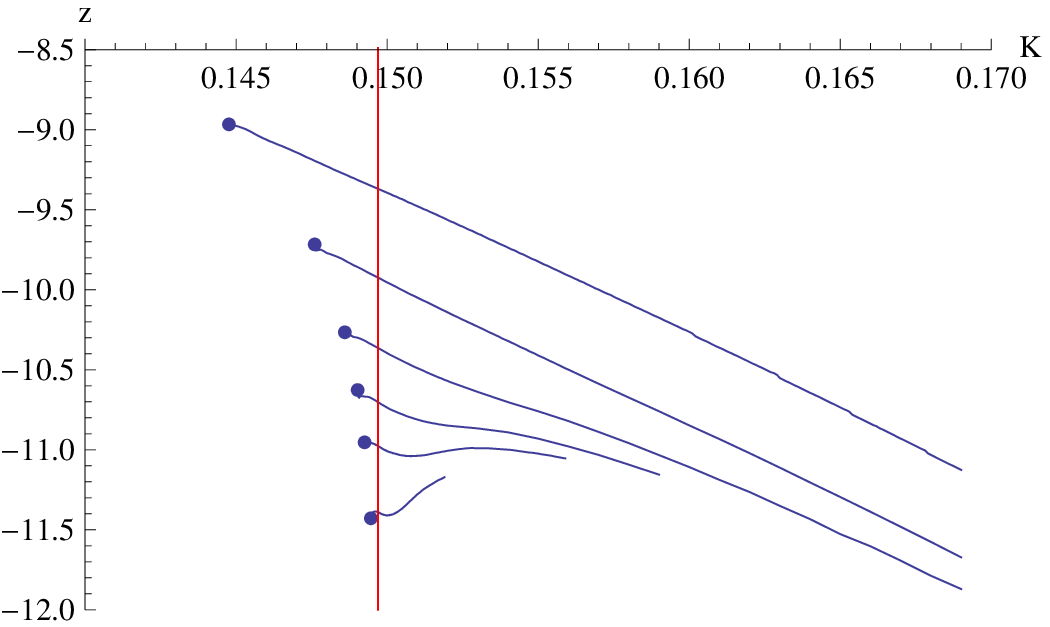}
    \end{center}
  \end{minipage}%
  \begin{minipage}{0.5\textwidth}
    \begin{center}
      \includegraphics[width=0.99\textwidth]{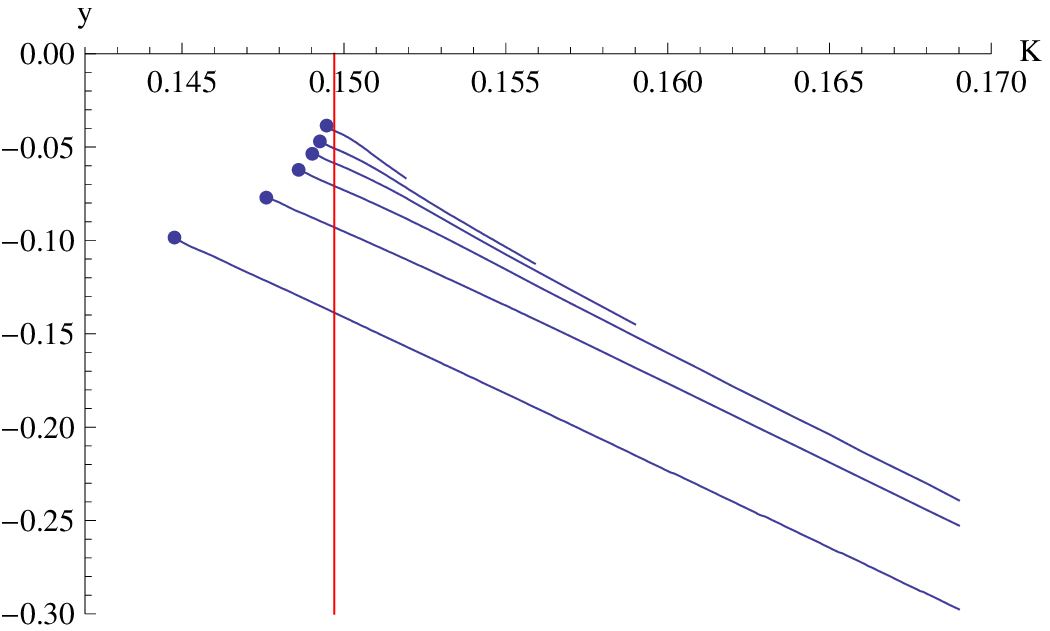}
    \end{center}
  \end{minipage}
  \caption{Left: $z=n\,(q-1)$ vs $K$ with $K>K^*$ for the $L\times
    L\times L\times L$-lattice, $L=4,6,8,10,12,16$ (downwards).
    Right: $y=n\,(p-1)$ vs $K$ with $K>K^*$ for the $L\times L\times
    L\times L$-lattice, $L=4,6,8,10,12,16$ (upwards). In both plots
    the red line indicates location of $K_c$ and the points are the
    locations of $K^*$.}
  \label{fig:4d2}
\end{figure}

\subsection{5D-lattices}
For the 5-dimensional lattices we have sampled data of magnetisation
distributions only for $L=4,6,8,10,12$. The distributions in
figure~\ref{fig:5d0} are extremely well fitted by $p,q$-binomial
distributions; it is almost impossible to tell them apart with the
naked eye.
\begin{figure}[!ht]
  \begin{center}
    \includegraphics[width=0.99\textwidth]{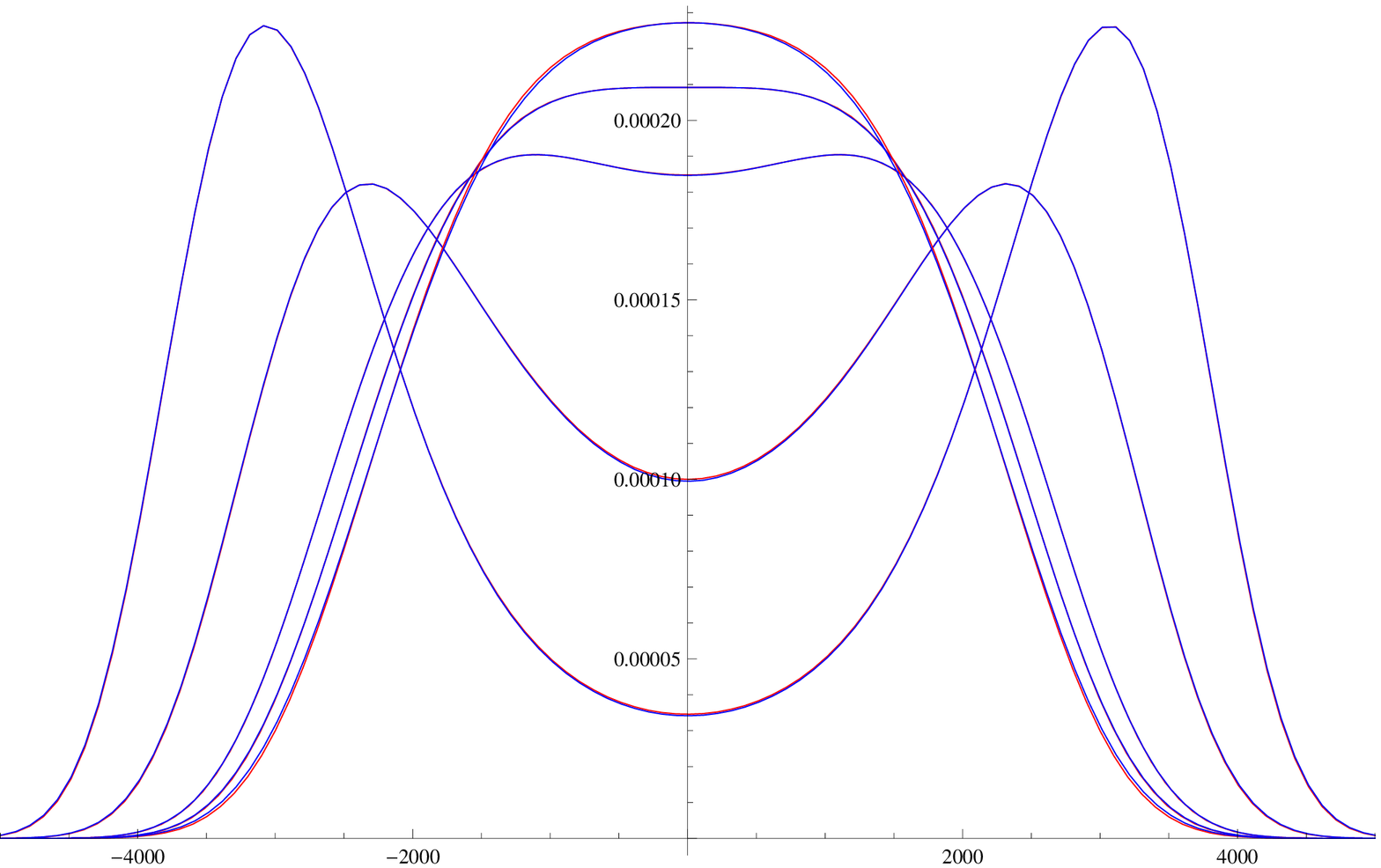}
  \end{center}
  \caption{Magnetisation distributions for $8\times 8\times 8\times
    8\times 8$-lattice (red) and the fitted $\pqpr{n}{k}{p}{q}$ (blue)
    vs $k-n/2$ for $K=0.1137$, $K^*=0.113786$, $K_c=0.113914$,
    $0.1143$ and $0.1147$.}
    \label{fig:5d0}
\end{figure}
In five dimensions the susceptibility near $K_c$ scales as $L^{5/2}$,
see \cite{cardy:96}. Thus $\kmom{2}$ should scale as $n^{3/2}$ which
is exactly what we receive when keeping $z$ fixed. So, for $z$
constant we obtain $\kmom{1} \propto n^{3/4}$ and $\kmom{2} \propto
n^{3/2}$.
\begin{figure}[!ht]
  \begin{minipage}{0.5\textwidth}
    \begin{center}
      \includegraphics[width=0.99\textwidth]{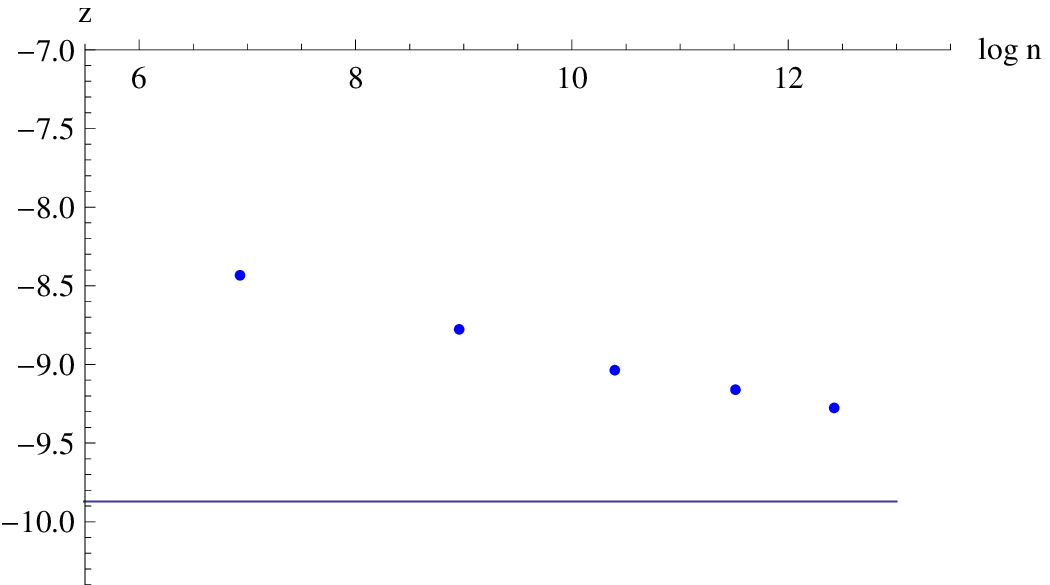}
    \end{center}
  \end{minipage}%
  \begin{minipage}{0.5\textwidth}
    \begin{center}
      \includegraphics[width=0.99\textwidth]{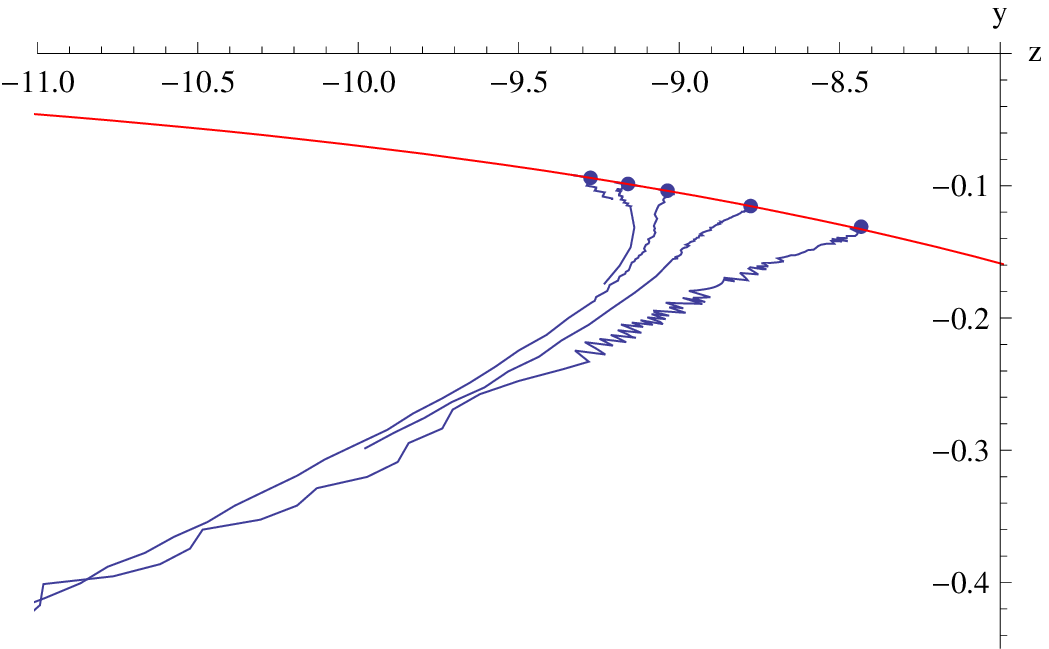}
    \end{center}
  \end{minipage}
  \caption{Left: $z=n\,(q-1)$ vs $\log n$ at $K^*$ for the $L\times
    L\times L\times L\times L$-lattice, $L=4,6,8,10,12$.
    The straight line is constant at $z=-9.87$. Right: $y=n\,(p-1)$ vs
    $z=n\,(q-1)$ for the $L\times L\times L\times L\times L$-lattice,
    $L=4,6,8,10,12$ (leftwards).  Higher temperatures (low $K$) begin
    at the upper right part of the plot and with lower temperatures we
    move down to the left. The red curve is $y=2\,w$ with $w$ defined
    by \eqref{wdef}.}
  \label{fig:5d1}
\end{figure}
The left plot of figure~\ref{fig:5d1} shows $z$ at $K^*$ for $L = 4,
6, 8, 10, 12$. If $z$ approaches a constant then what is the limit
value?  Extracting the limit $z$ from this plot is futile of course.
The right plot of figure~\ref{fig:5d1} shows $y$ vs $z$ for the
different lattices together with the points $K^*$ and the curve
$y=2\,w$. 
\begin{figure}[!ht]
  \begin{minipage}{0.5\textwidth}
    \begin{center}
      \includegraphics[width=0.99\textwidth]{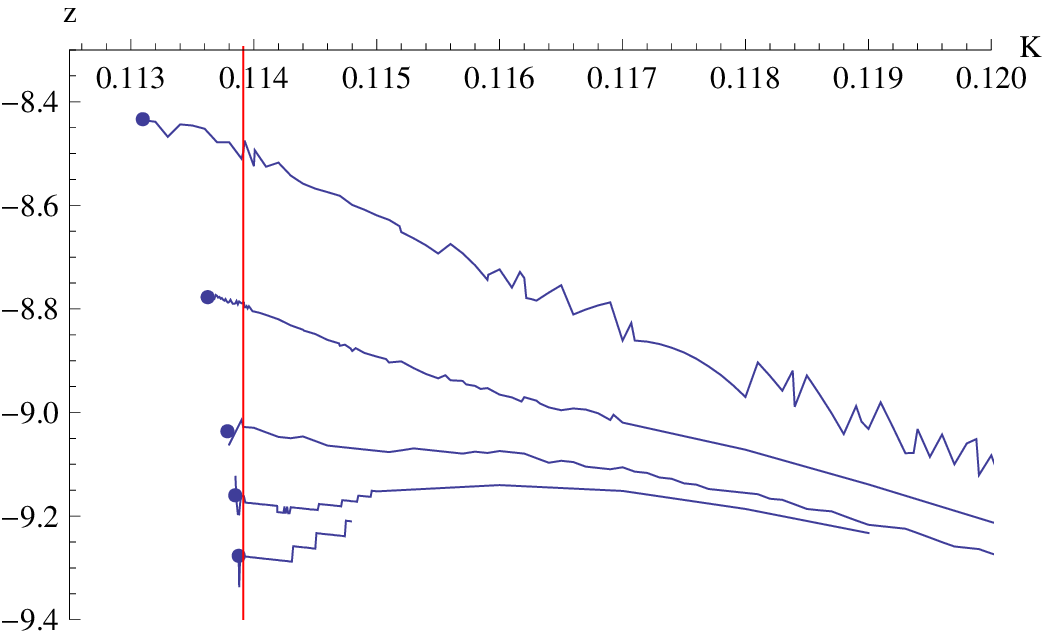}
    \end{center}
  \end{minipage}%
  \begin{minipage}{0.5\textwidth}
    \begin{center}
      \includegraphics[width=0.99\textwidth]{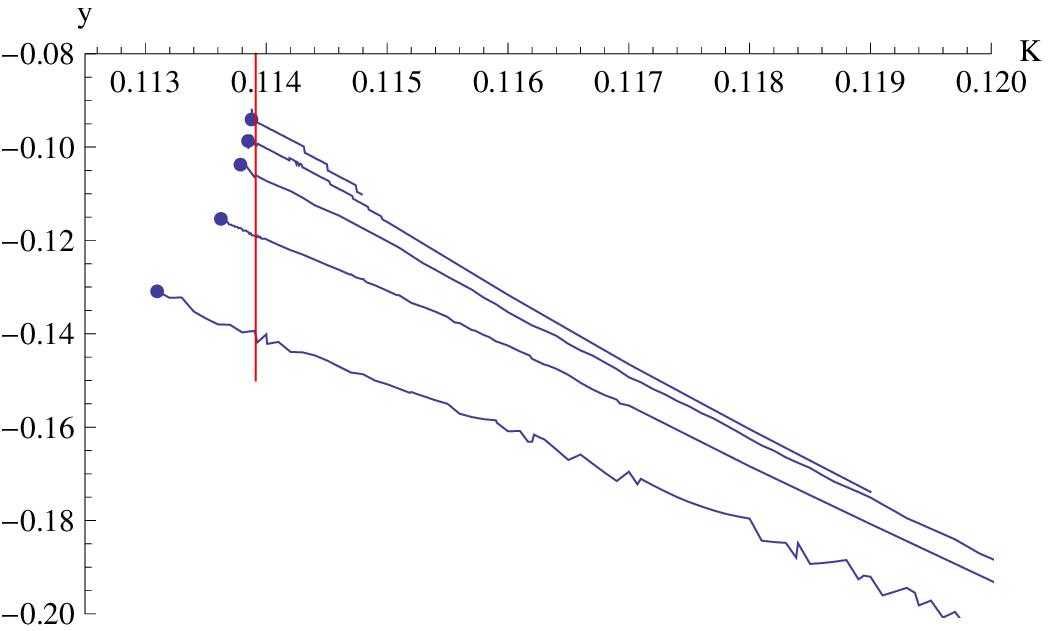}
    \end{center}
  \end{minipage}
  \caption{Left: $z=n\,(q-1)$ vs $K$ with $K>K^*$ for the $L\times
    L\times L\times L\times L$-lattice, $L=4,6,8,10,12$ (downwards).
    Right: $y=n\,(p-1)$ vs $K$ with $K>K^*$ for the $L\times L\times
    L\times L\times L$-lattice, $L=4,6,8,10,12$ (upwards). In both
    plots the red line indicates location of $K_c$ and the points are
    the locations of $K^*$.}
  \label{fig:5d2}
\end{figure}
In figure~\ref{fig:5d2} we show $y$ and $z$ versus $K$ for $K\ge K^*$
with an estimated $K_c$ marked as a red line. Despite the noise in the
plots it seems plausible that $z$ stays essentially constant very
close to $K^*$ (and $K_c$) and that only $y$ moves. Let us assume this
and see where this leads us. We employ the moment expressions in
section~\ref{sec:moments} in terms of the parameter $a$ to model the
behaviour near $K^*$. A normalised first cumulant of the absolute
magnetisation $\mean{\abs{M}}/2\,n^{3/4}=\kmom{1}/n^{3/4}$ should
approach $\xmom{1}/\xmom{0}$ when plotted as a function of $a$ for a
fixed $z$.  Analogously, the second cumulant (normalised) should
behave as
\begin{equation}
  \frac{\kmom{2}-\kmom{1}^2}{n^{3/2}}\to \frac{\xmom{2}}{\xmom{0}}
\end{equation}
where the $\xmom{m}$ were defined in section~\ref{sec:moments}. Note
that for a fixed $z$ the $\xmom{m}$ now depend only on $a$. The third
and fourth cumulants of the absolute magnetisation, divided by
respectively $8\,n^{9/4}$ and $16\,n^3$, quite analogously approach
their corresponding limits
\begin{equation}
  \frac{\xmom{3}}{\xmom{0}}-3\,\frac{\xmom{1}\,\kmom{2}}{\xmom{0}^2} +
  2\,\frac{\xmom{1}^3}{\xmom{0}^3}
\end{equation}
and 
\begin{equation}
  \frac{\xmom{4}}{\xmom{0}} - 4\,\frac{\xmom{1}\,\kmom{3}}{\xmom{0}^2}
  - 3\,\frac{\kmom{2}^2}{\xmom{0}^2} +
  12\,\frac{\xmom{1}^2\,\xmom{2}}{\xmom{0}^3} -
  6\,\frac{\xmom{1}^4}{\xmom{0}^4}
\end{equation}
Through a simple scaling analysis based on our sampled data we have
found that the normalised third cumulant has a limit maximum of about
$0.0205$ and a minimum of $-0.0500$. The fourth normalised cumulant
has a limit maximum of $0.0229$ and a minimum of $-0.0278$, based upon
our sampled data. Choosing $z=-9.87$ puts the maximums and minimums of
the limit curves at appropriate values. Now we identify the coupling
$K$ where the minimum of the fourth cumulant occurs with the point $a$
where the minimum of the corresponding limit curve occurs and likewise
for the maximum, thus providing us with a rescaling translating $K$
into $a$.  In figure~\ref{fig:5d-cum12} and \ref{fig:5d-cum34} the
first four cumulants are shown together with their estimated limit
curves for $z=-9.87$. Indeed the red curve may provide us with a
limit.

\begin{figure}[!ht]
  \begin{minipage}{0.5\textwidth}
    \begin{center}
      \includegraphics[width=0.99\textwidth]{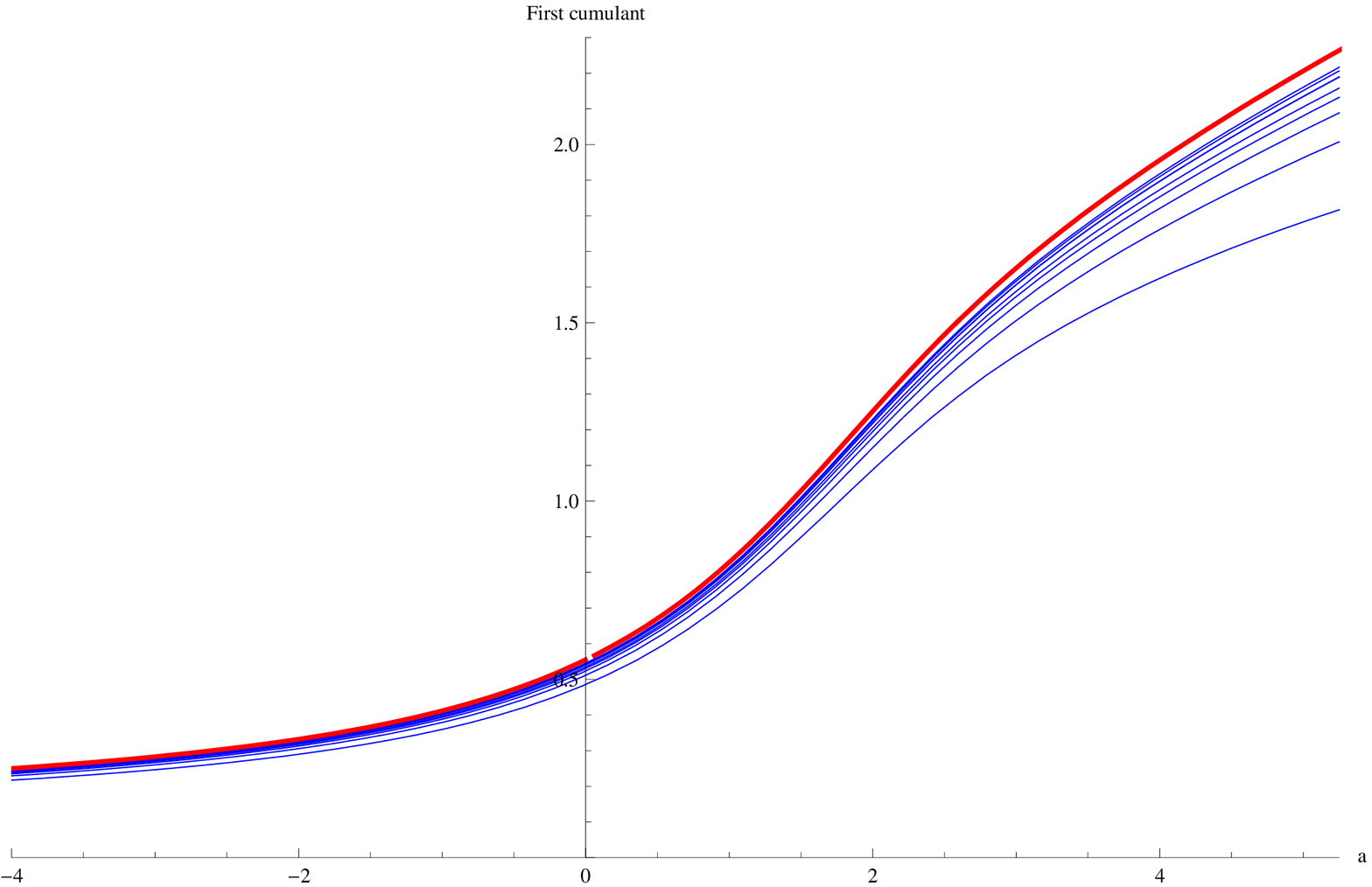}
    \end{center}
  \end{minipage}%
  \begin{minipage}{0.5\textwidth}
    \begin{center}
      \includegraphics[width=0.99\textwidth]{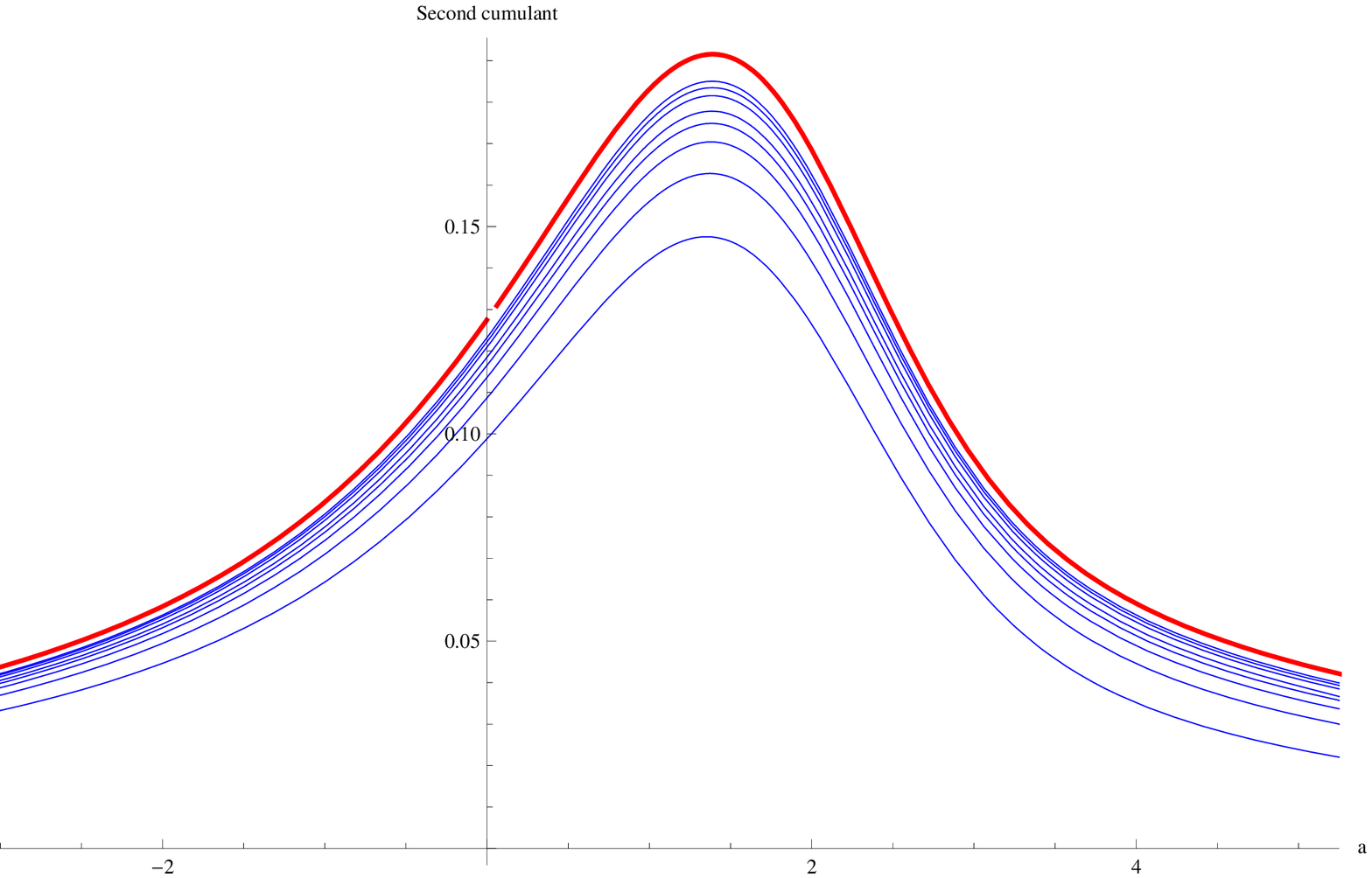}
    \end{center}
  \end{minipage}
  \caption{Normalised first (left) and second (right) cumulants for
    the $L\times L\times L\times L\times L$-lattice,
    $L=4,6,8,10,12,16,20,24$ (blue) versus $a$ for $z=-9.87$ together
    with the limit curve (red).}
  \label{fig:5d-cum12}
\end{figure}

\begin{figure}[!ht]
  \begin{minipage}{0.5\textwidth}
    \begin{center}
      \includegraphics[width=0.99\textwidth]{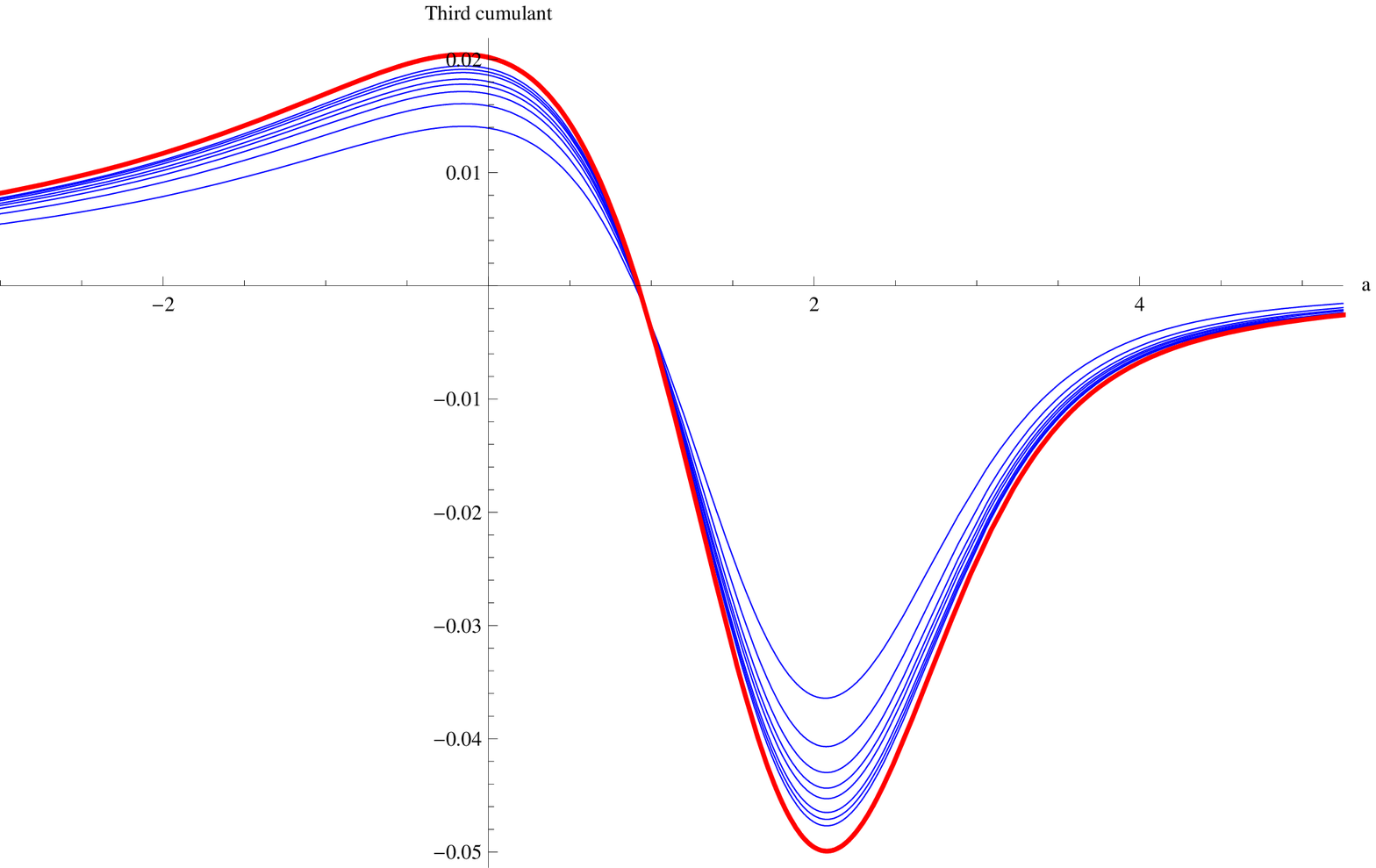}
    \end{center}
  \end{minipage}%
  \begin{minipage}{0.5\textwidth}
    \begin{center}
      \includegraphics[width=0.99\textwidth]{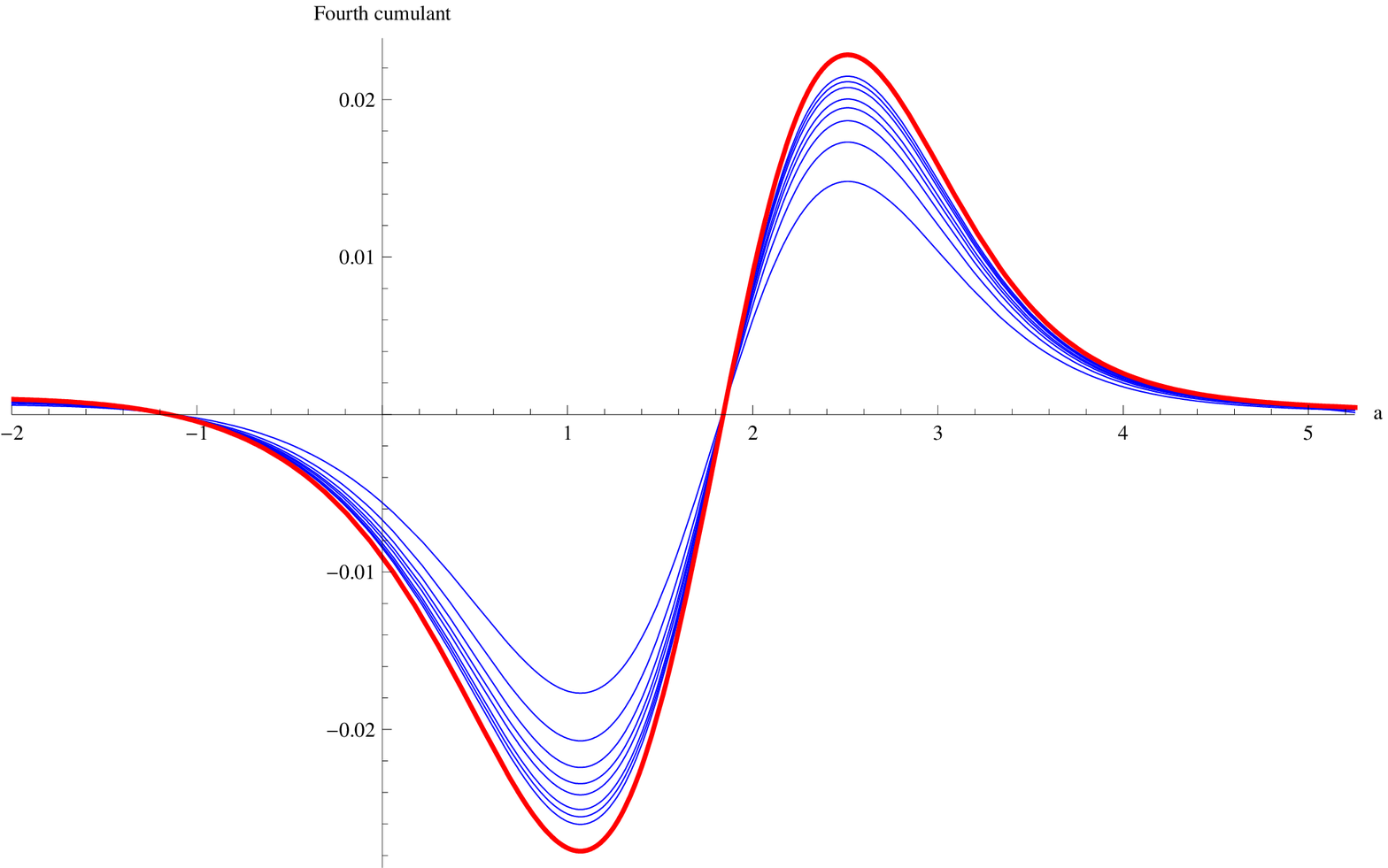}
    \end{center}
  \end{minipage}
  \caption{Normalised third (left) and fourth (right) cumulants for
    the $L\times L\times L\times L\times L$-lattice,
    $L=4,6,8,10,12,16,20,24$ (blue) versus $a$ for $z=-9.87$ together
    with the limit curve (red).}
  \label{fig:5d-cum34}
\end{figure}

Given a lattice size $L$ we denote by $K_{\min}(L)$ the location of
the minimum fourth cumulant and by $K_{\max}(L)$ the location of the
maximum.  Analogously for the limit curve, given a $z$ we denote by
$a_{\min}(z)$ and $a_{\max}(z)$ the location of the minimum and
maximum fourth cumulant.  For $z=-9.87$ we have $a_{\min}\approx
1.06965$ and $a_{\max}\approx 2.51275$. A simple scaling projection
gives that roughly $K_{\max}(L) \approx K_c + 0.22/L^{5/2}$ and
$K_{\max}(L)-K_{\min}(L)\approx 0.093/L^{5/2}$. Also $K_c\approx
0.113915$, see \cite{luijten:99}.  Thus, in principle at least, the
rescaling between $a$ and $K$ is
\begin{equation}
  K(a) \sim \frac{K_{\max}(L)-K_{\min}(L)}{a_{\max}(z)-a_{\min}(z)}\,
  \left(a-a_{\max}(z)\right) + K_{\max}(L)
\end{equation}
However, this kind of expression is somewhat too simplistic to get
figure~\ref{fig:5d-cum34}. It would take higher-order
corrections to scaling to produce it but this would probably take a
more involved numerical study of the 5D-model.  Other investigations
of the 5D-lattice includes e.g. \cite{brezin:85}, \cite{mon:96} and
\cite{luijten:99}.

\section{Conclusions}
The magnetisation distribution for the complete graph is exactly
described by the $p,q$-binomial distribution, corresponding to the
special (or limit) case of $p=q$.  For balanced complete bipartite
graphs this is most likely also true in some limit sense, yet to be
made precise. Actually, it appears that for most graphs, at least
those which are more or less regular, the magnetisations are
well-fitted by a $p,q$-binomial distribution for some choice of $p$
and $q$. The exact extent to which the $p,q$-binomial approximation is
\emph{good} we do not yet know (e.g. convergence in moment) nor the
exact class of graphs that would satisfy this. We have investigated
the matter more closely for lattices of dimension one through five. In
general they are always well-fitted by $p,q$-binomial distributions
for high- and low-temperatures but the problems arise near $K_c$, or
rather $K^*$ where the distribution changes from unimodal to bimodal.

For the 1-dimensional lattices (having no such bounded $K^*$) the
situation is basically always that of high temperatures. It seems
possible to give expressions for $p$ and $q$ in terms of $K$ in this
case though we have not done so. For 2-dimensional lattices the
distributions near $K^*$ are least well-fitted by the $p,q$-binomials
but slightly better fitted in the 3-dimensional case. We made
theory-based predictions of how $z$ should scale with $n$ near $K^*$.
Unfortunately, scaling is probably very slow, involving logarithms and
double logarithms, making it near impossible to test the
prediction. For 4-dimensional lattices the distributions are clearly
much better fitted by $p,q$-binomials, though some discrepancy still
remains just above $K^*$. For 5-dimensional lattices even this small
discrepancy is gone, leaving us perfectly fitted (that is, to the
human eye) $p,q$-binomial distributions. In this case the values of
$z$ at $K^*$ should approach a limit value. We estimated this limit
and, using this limit value, compared the first four normalised
cumulants for finite lattices with the (possible) limit curves.

We described and used a rather simple method to determine $p$ and $q$
given a distribution. Possibly this method is not optimal since it
simply forces the distribution to be correct at a single point rather
than providing a good overall-fit. It is also sensitive to noise when
the distributions are unimodal, thus making it difficult to determine
$p$ and $q$. On the other hand it works extremely well for bimodal
distributions where the noise sensitivity problem vanishes.

The $p,q$-binomial coefficients are just a tweaked form of
$q$-binomials, i.e. they are multiplied by a power of $p$. It is
possible that a different choice of factor would produce better
results in the case of 2- and 3-dimensional lattices.

We believe that what is said here for the Ising model also goes for
other models, i.e. the magnetisation distribution for quantum spin
models or for spin-glass models can be modeled by $p,q$-binomial
distributions.

\begin{acknowledgments}
  One of the authors (AR) wishes to thank the Swedish Research Council
  (VR) for financing this work.  This research was conducted using the
  resources of High Performance Computing Center North (HPC2N).
\end{acknowledgments}

\bibliographystyle{apsrev}

\end{document}